% tex-command: pdflatex
% ultex-add-defs: list
%
% bootstrap-consistency.tex by Andrew J. Blumberg
%
%
% Copyright (C) 2011-2013  Andrew J. Blumberg
%
%\def\draftdate{\today}
\def\draftdate{November 26, 2013}

\documentclass{amsart}
\usepackage{amssymb}
\usepackage{xy}
\usepackage{url}

\ifnum\pdfoutput=0
  % jpg support unavailable
  \newcommand\includegraphics[2][{}]{%
{[Picture of undetermined size goes here]}}
\else
  \usepackage{graphicx}
\fi

\xyoption{arrow}
%\xyoption{curve}
\xyoption{matrix}
\xyoption{cmtip}
%\xyoption{frame}
%\SelectTips{cm}{}
%\CompileMatrices

\def\headerstrut{\vrule height2.25ex width0em depth0em}

\def\mhdstrut{\hbox to 0pt{\hss\vrule height1.75ex width0pt depth0pt}}

\let\catsymbfont\mathcal
\newcommand{\aA}{{\catsymbfont{A}}}
\newcommand{\aB}{{\catsymbfont{B}}}
\newcommand{\aBc}{\overline\aB}

\newcommand{\aD}{{\catsymbfont{D}}}
\newcommand{\aE}{{\catsymbfont{E}}}

\newcommand{\aI}{{\catsymbfont{I}}}

\newcommand{\aM}{{\catsymbfont{M}}}

\newcommand{\aP}{{\catsymbfont{P}}}

\newcommand{\aS}{{\catsymbfont{S}}}

\newcommand{\aZ}{{\catsymbfont{Z}}}

\newcommand{\bN}{{\mathbb{N}}}

\newcommand{\bR}{{\mathbb{R}}}

\def\quickop#1{\expandafter\DeclareMathOperator\csname
#1\endcsname{#1}}

\quickop{VR}\quickop{HB}\quickop{PH}\quickop{HD}\quickop{MHD}\quickop{diam}\quickop{supp}\quickop{dis}\quickop{Eur}\quickop{sComp}\quickop{W}\quickop{Hyp}
%\quickop{fin}

\newcommand{\PHD}{\Phi}

\DeclareMathOperator*\median{median}

\newcommand{\mypstrut}{{\setbox0=\vbox{$p$}\vrule height0pt width0pt depth \dp0}}
\DeclareMathOperator*\dsup{sup\mypstrut}
\DeclareMathOperator*\dinf{inf\mypstrut}

\numberwithin{equation}{section}

\newtheorem{thm}[equation]{Theorem}
\newtheorem*{thm*}{Theorem}
\newtheorem{cor}[equation]{Corollary}
\newtheorem{lem}[equation]{Lemma}
\newtheorem{prop}[equation]{Proposition}
\theoremstyle{definition}
\newtheorem{defn}[equation]{Definition}
\newtheorem{notn}[equation]{Notation}
\newtheorem{alg}[equation]{Algorithm}

\theoremstyle{remark}
\newtheorem{rem}[equation]{Remark}

%\renewcommand{\theenumi}{\roman{enumi}}\renewcommand{\labelenumi}{(\theenumi)}
%\renewcommand{\theenumii}{\alph{enumii}}

%\xyoption{arrow}
%\xyoption{curve}
%\xyoption{matrix}
%\xyoption{cmtip}
%\xyoption{frame}
%\SelectTips{cm}{}
%\CompileMatrices

%\newcommand{\term}[1]{\textit{#1}}
%\newcommand{\htp}{\simeq}

%\newdir{ >}{{}*!/-5pt/\dir{>}}

\begin{document}

\title[Persistent homology on metric measure spaces]
{Robust statistics, hypothesis testing, and confidence intervals
for persistent homology on metric measure spaces}

\author[Blumberg]{Andrew J. Blumberg}
\address{Department of Mathematics, University of Texas at Austin,
Austin, TX \ 78712}
\email{blumberg@math.utexas.edu}

\author[Gal]{Itamar Gal}
\address{Department of Mathematics, University of Texas at Austin,
Austin, TX \ 78712}
\email{igal@math.utexas.edu}

\author[Mandell]{Michael A. Mandell}
\address{Department of Mathematics, Indiana University
Bloomington, IN \ 47405}
\email{mmandell@indiana.edu}

\author[Pancia]{Matthew Pancia}
\address{Department of Mathematics, University of Texas at Austin,
Austin, TX \ 78712}
\email{mpancia@math.utexas.edu}

\thanks{The authors were supported in part by DARPA YFA award N66001-10-1-4043}

\date{\draftdate}

\begin{abstract}
We study distributions of persistent homology barcodes
associated to taking subsamples of a fixed size from metric measure
spaces.  We show that such distributions provide robust invariants of
metric measure spaces, and illustrate their use in hypothesis testing
and providing confidence intervals for topological data analysis.
\end{abstract}

\maketitle

%%%%%%%%%%%%%%%%%%%%%%%%%%%%%%%%%%%%%%%%
\section{Introduction}

Topological data analysis assigns homological invariants to data
presented as a finite metric space (a ``point cloud'').  If we imagine
this data as measurements sampled from some abstract universal space
$X$, the structure of that space is a metric measure space, having a
notion both of distance between points and a notion of probability for
the sampling.  A standard homological approach to studying the samples
is to assign a simplicial complex and compute its homology.  The
construction of the associated simplicial complex for a point cloud
depends on a choice of scale parameter.  The insight of
``persistence'' is that one should study homological invariants that
encode change across scales; the correct scale parameter is a priori
unknown.  As such, a first approach to studying the homology of $X$
from the samples is to simply compute the persistent homology
$\PH_*(\tilde{X})$ of simplicial complexes associated to the sampled
point cloud $\tilde{X}$. 

We can gain some perspective from imagining that we could make
measurements on $X$ directly and interpret these measurements in terms
of random sample points.  With this in mind, we immediately notice
some defects with homology and persistent homology as invariants of
$X$.  While the homology of $X$ captures information about the global
topology of the metric space, the probability space structure plays no
role.  This has bearing even if we assume $X$ is a compact Riemannian
manifold and the probability measure is the volume measure for the
metric: handles which are small represent subsets of low probability
but contribute to the homology in the same way as large handles.  In
this particular kind of example, persistent homology can identify this
type of phenomenon (by encoding the scales at which homological
features exist); however, in a practical context, the metric on the
sample may be ad hoc (e.g., \cite{carlssonetalvision}) and less
closely related to the probability measure.  In this case, we could
have handles that are medium size with respect to the metric but still
low probability with respect to the measure.  Homology and persistent
homology have no mechanism for distinguishing low probability features
from high probability features.  A closely related issue is the effect
of small amounts of noise (e.g., a situation in which a fraction of
the samples are corrupted).  A small proportion of bad samples can
arbitrarily change the persistent homology.  These two kinds of
phenomena are linked, insofar as decisions about whether low
probability features are noise or not is part of data analysis.

The disconnect with the underlying probability measure presents a
significant problem when trying to adapt persistent homology to the
setting of \emph{hypothesis testing} and \emph{confidence intervals}.
Hypothesis testing involves making quantitative statements about the
probability that the persistent homology computed from a sampling from
a metric measure space is consistent with (or refutes) a hypothesis
about the actual persistent homology.  Confidence intervals provide a
language to understand the variability in estimates introduced by the
process of sampling.  Because low probability features and a small
proportion of bad samples can have a large effect on persistent
homology computations, the persistent homology groups make poor test
statistics for hypothesis testing and confidence intervals.  To obtain
useable test statistics, we need to develop invariants that better
reflect the underlying measure and are less sensitive to large
perturbation.  To be precise about this, we use the statistical notion
of robustness.

A statistical estimator is \emph{robust} when its value cannot be
arbitrarily perturbed by a constant proportion of bad samples.  For
instance, the sample mean is not robust, as a single extremely large
sample value can dominate the result.  On the other hand, the sample
median is robust.  As we discuss in Section~\ref{sec:notrobust},
persistent homology is not robust.  A small number of bad samples can
cause large changes in the persistent homology, essentially as a
reflection of the phenomenon of large metric low probability handles
(including spurious ones).

In order to handle this, we adopt a standard statistical perspective,
namely that the distribution of an estimator on some fixed finite
number of samples is an appropriate way to grapple with such
behavior.  To make this precise, we need to be able to talk about
probability distributions on homological invariants.

Using the idea of an underlying metric measure space $X$, formally the
process of sampling amounts to considering random variables on the
probability space $X^{n}=X\times \dotsb \times X$ equipped with the
product probability measure.  The $k$-th persistent homology of a size
$n$ sample is a random variable on $X^{n}$ taking values in the set
$\aB$ of finite \emph{barcodes}~\cite{zomorodiancarlsson}, where a
barcode is essentially a multiset of intervals of the form $[a,b)$.
The set $\aB$ of barcodes is equipped with a metric $d_{\aB}$, the
bottleneck metric \cite{cohensteiner}, and we show in
Section~\ref{sec:polish} that it is separable and that its completion
$\aBc$ is also a space of barcodes.  Then $\aBc$ is Polish, i.e.,
complete and separable, which makes it amenable to probability theory
(see also~\cite{polish} for similar results).  In particular, various
metrics on the set of distributions on $\aBc$ metrize weak
convergence, including the Prohorov metric $d_{Pr}$ and the
Wasserstein metric $d_{W}$.  We consider the following probability
distribution on barcode space $\aBc$ (restated in
Section~\ref{sec:distinv} as Definition~\ref{defn:PHD}).

\begin{defn}\label{defn:PH}
For a metric measure space $(X, \partial_X, \mu_X)$ and fixed $n,
k \in \bN$, define $\PHD_k^n$ to be the empirical measure induced by
the $k$th $n$-sample persistent homology, i.e., 
\[
\PHD_{k}^{n}(X, \partial_X, \mu_X) = (\PH_k)_* (\mu_X^{\otimes n}), 
\] 
the probability distribution on the set of barcodes $\aBc$ induced by
pushforward along $\PH_k$ from the product measure $\mu_X^{n}$ on
$X^n$.
\end{defn}

In other words, $\PHD_{k}^{n}$ is the probability measure on the space
of barcodes where the probability of a subset $A$ is the probability
that a size $n$ sample from $X$ has $k$-th persistent homology landing
in $A$.  Note that the pushforward makes sense since $\PH_k$ is a
continuous and hence Borel measurable function; see
Section~\ref{sec:distinv} for a discussion.

Although complicated, $\PHD_{k}^{n}(X)$ is a continuous invariant of
$X$ in the following sense.  The moduli space of metric measure spaces
admits a metric (in fact several) that combine the ideas of the
Gromov-Hausdorff distance on compact metric spaces and weak
convergence of probability measures \cite{sturm}.  We follow
\cite{greven}, and use the
\emph{Gromov-Prohorov metric}, $d_{GPr}$.  We prove the following
theorem in Section~\ref{sec:distinv} (where it is restated as
Theorem~\ref{thm:main}). 

\begin{thm}\label{thm:stability}
Let $(X, \partial_X, \mu_X)$ and $(X', \partial_{X'}, \mu_{X'})$ be compact
metric measure spaces.  Then we have the following inequality relating
the Prohorov and Gromov-Prohorov metrics:
\[
d_{Pr}(\PHD_{k}^{n}(X, \partial_X, \mu_X), \PHD_{k}^{n}(X', \partial_{X'}, \mu_{X'})) \leq n \,
d_{GPr}((X, \partial_X, \mu_X), (X', \partial_{X'}, \mu_{X'})).
\]
\end{thm}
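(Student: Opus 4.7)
The plan is to express the Gromov--Prohorov distance via couplings, lift such a coupling to the $n$-fold product, use a union bound to control the probability of sample configurations in which pointwise distances are large, and then invoke the stability theorem for persistent homology on the complementary ``good'' event before pushing the coupling forward along $\PH_k$.

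Fix $\epsilon > d_{GPr}((X,\partial_X,\mu_X),(X',\partial_{X'},\mu_{X'}))$. By the definition of $d_{GPr}$ together with Strassen's theorem on Polish spaces, I can choose isometric embeddings $\iota\colon X\hookrightarrow Z$ and $\iota'\colon X'\hookrightarrow Z$ into a common metric space $Z$ and a coupling $\gamma$ of $\iota_*\mu_X$ and $\iota'_*\mu_{X'}$ such that
\[
\gamma\bigl(\{(z,z')\in Z\times Z : d_Z(z,z')\geq \epsilon\}\bigr) \leq \epsilon.
\]
Then $\gamma^{\otimes n}$ is a coupling of $\iota_*^{\otimes n}\mu_X^{\otimes n}$ and $(\iota')_*^{\otimes n}\mu_{X'}^{\otimes n}$ on $Z^n\times Z^n$. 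Writing $B_i=\{((z_j),(z'_j)):d_Z(z_i,z'_i)\geq \epsilon\}$ and applying the union bound yields $\gamma^{\otimes n}\!\left(\bigcup_{i=1}^n B_i\right)\leq n\epsilon$.

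Next, on the complementary ``good'' event we have $d_Z(z_i,z'_i)<\epsilon$ for every $i$, so the correspondence $z_i\leftrightarrow z'_i$ between the finite point clouds $\{z_1,\dots,z_n\}$ and $\{z'_1,\dots,z'_n\}\subset Z$ has distortion at most $2\epsilon$; in particular their Gromov--Hausdorff distance is at most $\epsilon$. By the standard stability theorem for persistent homology (applied to the Vietoris--Rips, or equivalently \v{C}ech, filtrations of the two point clouds), the bottleneck distance between $\PH_k(\{z_i\})$ and $\PH_k(\{z'_i\})$ is at most $2\epsilon$. Pushing $\gamma^{\otimes n}$ forward through the Borel map $Z^n\times Z^n\to \aBc\times\aBc$ obtained by applying $\PH_k$ to each factor yields a coupling $\tilde\gamma$ of $\PHD_k^n(X)$ and $\PHD_k^n(X')$ satisfying
\[
\tilde\gamma\bigl(\{(\beta,\beta')\in \aBc\times\aBc : d_{\aB}(\beta,\beta')\geq 2\epsilon\}\bigr) \leq n\epsilon.
\]

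Finally, for $n\geq 2$ we have $2\epsilon\leq n\epsilon$, so the bad event for the threshold $n\epsilon$ is contained in the bad event for $2\epsilon$ and hence has probability at most $n\epsilon$; by Strassen this gives $d_{Pr}(\PHD_k^n(X),\PHD_k^n(X'))\leq n\epsilon$, and the case $n=1$ is automatic since $\PH_k$ of a singleton is the empty barcode for all relevant $k$. Letting $\epsilon\searrow d_{GPr}$ yields the theorem. The main obstacle I anticipate is purely bookkeeping: verifying that the pushforward on couplings is well-defined (which is immediate from continuity of $\PH_k$, noted in the introduction) and checking that the factor of $2$ coming from persistent homology stability is genuinely absorbed into the $n\epsilon$ bound rather than weakening it; the treatment of $n=1$ and of the Strassen characterization on the (possibly non-compact) Polish space $\aBc$ need a brief justification using the Polish property established in Section~\ref{sec:polish}.
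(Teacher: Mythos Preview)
Your approach is essentially identical to the paper's: coupling characterization of $d_{GPr}$, lift to the $n$-fold product, union bound, Gromov--Hausdorff stability, pushforward to barcode space. The only difference is a small slip that you yourself flag as a worry. On the good event you correctly get distortion at most $2\epsilon$ and hence $d_{GH}(\{z_i\},\{z'_i\})\leq \epsilon$, but then the stability theorem (Theorem~\ref{thm:GHcompare}) gives $d_{\aB}\leq d_{GH}\leq \epsilon$, not $2\epsilon$. There is no factor of $2$ to absorb: the paper obtains directly that the bad event $\{d_{\aB}\geq \epsilon\}$ has mass at most $n\epsilon$, and then a fortiori $\{d_{\aB}\geq n\epsilon\}$ does too, which handles all $n\geq 1$ uniformly without the separate $n=1$ argument. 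With that correction your proof and the paper's coincide.
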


This inequality becomes increasingly tight as the right-hand side
approaches $0$; we discuss precise estimates in
Section~\ref{sec:distinv}.  As we explain there, the fact that the
bound increases with $n$ is expected behavior: $n$ should be thought
of as a scale parameter, and increasing $n$ yields a more sensitive
invariant.  The main import of Theorem~\ref{thm:stability} is that for
fixed $n$, empirical approximations to $\PHD_k^n(X, \partial_X,
\mu_X)$ computed from subsets $S \in X$ are asymptotically convergent
as the number of samples increase.

Theorem~\ref{thm:stability} therefore validates computing
$\PHD_{k}^{n}$ in practice using 
empirical approximations, where we are given a large finite sample $S$
which we regard as drawn from $X$.  Making $S$ a metric measure space
via the subspace metric from $X$ and the empirical measure, we can
compute $\PHD_{k}^{n}(S)$ as an approximation to $\PHD_{k}^{n}(X)$.
This procedure is justified by the fact that as the sample size
increases, the empirical metric converges (in $d_{GPr}$)
to $X$; see Corollary~\ref{cor:largenum}.  In particular, this
justifies a resampling procedure to approximate $\PHD_k^n$ by subsampling
from a large sample of size $N$.  (We can also approximate using more
sophisticated resampling methodology, a topic we study in future work.)

Moreover, as a consequence of the continuity implied by the previous
theorem, we can use $\PHD_{k}^{n}$ to develop robust statistics: If we
change $X$ by adjusting the metric arbitrarily on $\epsilon$
probability mass to produce $X'$, then the Gromov-Prohorov distance
satisfies $d_{GPr}(X,X')\leq \epsilon$.

A difficulty with applying $\PHD_{k}^{n}$ is that it can be hard to
interpret or summarize the information contained in a distribution of
barcodes, unlike distributions of numbers for which there are various
moments (e.g., the mean and the variance) which provide concise
summaries of the distribution.  One approach is to develop
``topological summarizations'' of distributions of barcodes; a version
of this using Frechet means is explored in~\cite{frechet}.  Another
possibility is to embed the space of barcodes in a more tractable
function space~\cite{bubeniklandscapes}.  In this paper, we instead
consider cruder invariants which take values in $\bR$.  One such
invariant is the distance with respect to a reference distribution on
barcodes $\aP$, chosen to represent a hypothesis about the persistent
homology of $X$.

\begin{defn}\label{defn:HD}
Let $(X, \partial_X, \mu_X)$ be a compact metric measure space and let
$\aP$ be a fixed reference distribution on $\aBc$.   Fix
$k,n \in \bN$.  Define the homological distance on $X$ relative to
$\aP$ to be
\[
\HD_k^n((X,\partial_X, \mu_X), \aP) =
d_{Pr}(\PHD_{k}^{n}(X,\partial_X, \mu_X), \aP).  
\]
\end{defn}

We also consider a robust statistic $\MHD_{k}^{n}$ related to 
$\HD_{k}^{n}$ without first computing the distribution
$\PHD_{k}^{n}$.  To construct $\MHD_{k}^{n}$, we start with a
reference barcode and compute the median distance to the barcodes of
subsamples.

\begin{defn}\label{defn:MHD}
Let $(X, \partial_X, \mu_X)$ be a compact metric measure space and fix
a reference barcode $B \in \aBc$.   Fix $k,m \in \bN$.  Let $\aD$
denote the distribution on $\bR$ induced by applying $d_{\aB}(B,-)$ to
the barcode distribution $\PHD_{k}^{n}(X,\partial_X, \mu_X)$.  Define
the median homological distance relative to $B$ to be 
\[
\MHD_k^n ((X,\partial_X,\mu_X), B)
= \median(\aD).
\]
\end{defn}

\begin{rem}
The appearance of reference barcodes and distributions in the
invariants above raises the question of where one obtains these
quantities.  As we illustrate below in Section~\ref{sec:examples}, a
common source of reference point is simply an a priori hypothesis
about the data that we wish to test.  The Frechet
mean~\cite{frechet} of a collection of samples provides a more
principled approach to producing such reference points.
\end{rem}

The use of the median rather than the mean in the preceding definition
ensures that we compute a robust statistic, in the following sense.

\begin{defn}\label{defn:robust}
Let $f$ be a function from finite metric spaces to a metric space
$(B,d)$. We say that $f$ is robust with robustness coefficient $r>0$ if for
any non-empty finite metric space $(X,\partial)$, there exists a bound
$\delta$ such that for any isometric embedding of $X$ into a finite
metric space $(X',\partial')$, $|X'|/|X|<1+r$ implies
$d(f(X,\partial),f(X',\partial'))<\delta$, where $|X|$ denotes the
number of elements of $X$.
\end{defn}

For example, under the analogous definition on finite multi-subsets of $\bR$
(in place of finite metric spaces), median defines a function to $\bR$
that is robust with robustness coefficient $1-\epsilon$ for any $\epsilon$
since expanding a multi-subset $X$ to a larger one $X'$ with fewer
than twice as many elements will not change the median by more than
the diameter of $X$.  Similarly, for a finite metric space $X$,
expanding $X$ to $X'$, the proportion of $n$-element samples of $X'$
which are samples of $X$ is $(|X|/|X'|)^{n}$; when this number is more
than $1/2$, the median value of any function $f$ on the set of
$n$-element samples of $X'$ is then bounded by the values of $f$ on
$n$-element samples of $X$.  Since $(N/(N+rN))^{n}>1/2$ for $r<2^{1/n}-1$,
any such function $f$ will be robust with robustness coefficient $r$ satisfying
this bound, and in particular for $r=(\ln 2)/n$.  

\begin{thm}\label{thm:AHDrobust}
For any $n,k,\aP$, the function $\MHD_k^n(-,\aP)$ from finite metric
spaces (with the uniform probability measure) to $\bR$ is robust with
robustness coefficient $> (\ln 2)/n$.
\end{thm}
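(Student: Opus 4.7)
The plan is to specialize the general observation made in the paragraph immediately preceding the theorem statement to the function $f = d_{\aB}(B,\PH_k(-))$ on $n$-tuples, where $B$ is the reference barcode (using the notation of Definition~\ref{defn:MHD}).

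First I would verify the arithmetic claim. If $|X'|/|X| < 1 + r$, then the product measure $\mu_{X'}^{\otimes n}$ on $(X')^n$ assigns to the subset $X^n$ mass $(|X|/|X'|)^n > (1+r)^{-n}$, which exceeds $1/2$ whenever $(1+r)^n < 2$, i.e.\ whenever $r < 2^{1/n} - 1$. Since $e^x > 1 + x$ for $x > 0$, we have $2^{1/n} - 1 = e^{(\ln 2)/n} - 1 > (\ln 2)/n$, so any $r$ in the nontrivial interval $((\ln 2)/n,\,2^{1/n}-1)$ forces $(|X|/|X'|)^n > 1/2$ whenever $|X'|/|X| < 1+r$, which supplies a robustness coefficient strictly greater than $(\ln 2)/n$.

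Next I would transfer this to distributions of real numbers. Let $V = \{d_{\aB}(B,\PH_k(\sigma)) : \sigma \in X^n\} \subset [0,\infty)$, which is finite since $X^n$ is finite, and let $\aD, \aD'$ denote the pushforwards of $\mu_X^{\otimes n}$ and $\mu_{X'}^{\otimes n}$ under $d_{\aB}(B,\PH_k(-))$, as in Definition~\ref{defn:MHD}. Because more than half of the $\mu_{X'}^{\otimes n}$-mass is carried by $X^n$, the distribution $\aD'$ places strictly more than $1/2$ of its mass on $V$; hence both $\aD'((-\infty,\max V])$ and $\aD'([\min V,\infty))$ exceed $1/2$, forcing $\median(\aD') \in [\min V,\max V]$. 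Trivially $\median(\aD) \in [\min V,\max V]$ as well, since $\aD$ is supported on $V$. Combining these gives
\[
|\median(\aD) - \median(\aD')| \leq \max V - \min V,
\]
a bound depending only on $X$ (and on $B, k, n$). Taking $\delta$ to be anything strictly larger than this diameter satisfies Definition~\ref{defn:robust}.

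I do not anticipate a substantive obstacle: the argument is essentially a direct instantiation of the combinatorial observation made in the preceding paragraph for the specific function $d_{\aB}(B,\PH_k(-))$. The only mild care needed is to confirm that \emph{both} medians (not merely the one on $X'$) lie in $[\min V,\max V]$, and to pad $\delta$ slightly to accommodate the strict inequality in Definition~\ref{defn:robust}.
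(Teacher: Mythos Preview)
Your proposal is correct and follows essentially the same argument as the paper: the paper's proof (given in the paragraph preceding the theorem and repeated verbatim as the proof of Theorem~\ref{thm:AHDrobustbody}) is exactly the counting observation that more than half the $n$-samples from $X'$ lie in $X^{n}$ once $r<2^{1/n}-1$, together with the inequality $2^{1/n}-1>(\ln 2)/n$. Your write-up is simply more explicit about why this forces both medians into $[\min V,\max V]$ and about the strict inequality required by Definition~\ref{defn:robust}.
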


The function $\PHD_{k}^{n}$ from finite metric spaces to distributions
on $\aBc$ and the function $\HD^{n}_{k}$ from finite metric spaces to
$\bR$ are robust for any robustness coefficient for trivial reasons
since the Gromov-Prohorov metric is bounded.  However, for these
functions we can give explicit uniform estimates for how much these functions
change when expanding $X$ to $X'$ just based on $|X'|/|X|$.  We
introduce the following notion of uniform robustness which is strictly
stronger than the notion of robustness.

\begin{defn}\label{defn:uniformrobust}
Let $f$ be a function from finite metric spaces to a metric space
$(B,d)$. We say that $f$ is uniformly robust with robustness
coefficient $r>0$ and estimate bound $\delta$ if for
any non-empty finite metric space $(X,\partial)$ and any isometric
embedding of $(X,\partial)$ into a finite metric space 
$(X',\partial')$, $|X'|/|X|<1+r$ implies
$d(f(X,\partial),f(X',\partial'))<\delta$.
\end{defn}

Uniform robustness gives a uniform estimate on the change in the
function from expanding the finite metric space.  For example, the
median function does not satisfy the analogous notion of uniform
robustness for functions on finite multi-subsets of $\bR$. We show in
Section~\ref{sec:distinv} that $\PHD_{k}^{n}$ and $\HD^{n}_{k}$ satisfy
this stronger notion of uniform robustness.

\begin{thm}\label{thm:HDunifrob}
For fixed $n,k$, $\PHD_{k}^{n}$ is uniformly robust with robustness
coefficient $r$ and estimate bound $nr/(1+r)$ for any $r$.  For fixed
$n,k,\aP$, $\HD^{n}_{k}(-,\aP)$ is uniformly robust with robustness
coefficient $r$ and estimate bound $nr/(1+r)$ for any $r$.
\end{thm}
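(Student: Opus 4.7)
The plan is to chain together Theorem~\ref{thm:stability} with an elementary estimate on the Gromov-Prohorov distance between $(X,\partial)$ and its expansion $(X',\partial')$, both viewed as finite metric measure spaces with the uniform probability measure. The isometric embedding $\iota:X\hookrightarrow X'$ makes $X'$ itself a common ambient space for both, so by definition of the Gromov-Prohorov metric we immediately get
\[
d_{GPr}\bigl((X,\partial,\mu_X),(X',\partial',\mu_{X'})\bigr) \leq d_{Pr}(\iota_*\mu_X,\mu_{X'}),
\]
reducing the problem to a computation on $X'$.

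Next I would estimate $d_{Pr}(\iota_*\mu_X,\mu_{X'})$ by total variation, using $d_{Pr}\leq d_{TV}$. Writing $N=|X|$ and $N'=|X'|$, for any $A\subseteq X'$ with $a=|A\cap X|$ and $b=|A\setminus X|$ one has
\[
\mu_{X'}(A)-\iota_*\mu_X(A) = \tfrac{a+b}{N'}-\tfrac{a}{N},
\]
whose extrema over $0\leq a\leq N$ and $0\leq b\leq N'-N$ are both $\pm(N'-N)/N'$. Hence $d_{TV}(\iota_*\mu_X,\mu_{X'})=1-N/N'$. Under the hypothesis $N'/N<1+r$ this gives $d_{GPr}(X,X')<r/(1+r)$. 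Combined with Theorem~\ref{thm:stability}, we obtain the required bound
\[
d_{Pr}\bigl(\PHD_k^n(X),\PHD_k^n(X')\bigr) < \frac{nr}{1+r},
\]
which proves uniform robustness of $\PHD_k^n$.

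For the second assertion about $\HD_k^n(-,\aP)$, the triangle inequality for $d_{Pr}$ on the space of Borel probability measures on $\aBc$ gives
\[
\bigl|\HD_k^n(X,\aP)-\HD_k^n(X',\aP)\bigr| \leq d_{Pr}\bigl(\PHD_k^n(X),\PHD_k^n(X')\bigr),
\]
so the first part of the theorem transfers the same estimate bound $nr/(1+r)$ to $\HD_k^n(-,\aP)$.

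I do not anticipate a serious obstacle: the only nontrivial ingredient is Theorem~\ref{thm:stability}, which is already in hand, and the remaining work is the elementary total variation computation above. The one thing to be careful about is pedantically verifying that the definition of the Gromov-Prohorov metric used in the paper really does allow one to choose $X'$ itself as the witnessing ambient space (rather than requiring a symmetric embedding into a third space); if the paper uses the symmetric formulation, one simply embeds both $X$ and $X'$ into $X'$ via $\iota$ and the identity, and the estimate above is unchanged.
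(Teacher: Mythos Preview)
Your proposal is correct and follows essentially the same route as the paper. The paper bounds $d_{Pr}(i_*\mu_X,\mu_{X'})\leq 1-\mu_{X'}(X)=1-|X|/|X'|$ via the coupling inequality~\eqref{eq:ineq} and then applies Theorem~\ref{thm:main}, which is exactly your chain $d_{GPr}\leq d_{Pr}\leq d_{TV}=1-N/N'$ followed by Theorem~\ref{thm:stability}; the $\HD_k^n$ assertion is likewise dispatched in the paper as an immediate consequence, which is your triangle-inequality step made explicit.
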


As with $\PHD_{k}^{n}$ itself, the law of large numbers and the
convergence implied by Theorem~\ref{thm:stability} tells us that given
a sufficiently large finite sample $S \subset M$, we can approximate
$\HD_k^n$ and $\MHD_k^n$ of the metric measure space $M$ in a robust
fashion from the persistent homology computations on $S$.  (See
Corollaries~\ref{cor:largenum},~\ref{lem:largenumHD},
and~\ref{lem:largenumAHD} below.)

In light of the results on robustness and asymptotic convergence,
$\HD_k^n$, $\MHD_k^n$, and $\PHD_k^n$ (as well as various distributional
invariants associated to $\PHD_k^n$) provide good test statistics for
hypothesis testing.  Furthermore, one of the benefits of the
numerical statistics $\HD_k^n$ and $\MHD_k^n$ is that we can use
standard techniques to obtain confidence intervals, which provide a
means for understanding the reliability of analyses of data sets.  We
discuss hypothesis testing and the construction of confidence
intervals in Section~\ref{sec:intervals}, and explore examples in
Sections~\ref{sec:examples} and~\ref{sec:notklein}.  In this paper we
primarily focus on analytic methods and Monte Carlo simulation for
obtaining confidence intervals; however, these statistics are
well-suited for the construction of resampling confidence intervals.
In a follow-up paper~\cite{BMmm2} we establish the asymptotic
consistency of the bootstrap for $\HD_k^n$ and $\MHD_k^n$.

We regard this paper as a step towards providing a foundation for the
integration of standard statistical methodology into computational
algebraic topology.  Our goal is to provide tools for practical use in
topological data analysis.

\subsection*{Related work}  We have developed an approach to using statistical
tools to study persistent homological invariants for metric measure
spaces accessed through finite samples.  There are a number of related
approaches to studying the statistical properties of persistent
homological estimators; we quickly survey this work.

Bubenik~\cite{bubeniklandscapes} develops statistical inference via an
embedding into function spaces called ``persistence landscapes'', and
with various co-authors in~\cite{bubenikcortical, bubenikstatmorse}
studies an approach using Morse theory (and hence taking advantage of
the ambient metric space for smoothing).  The work of
Harer, Mileyko, and Mukerjee in~\cite{polish} parallels the development in
Section~\ref{sec:polish} and introduces probability measures on
barcode space, and these ideas are developed further (with Turner) in
the context of Frechet means as ways of summarizing barcode
distributions in~\cite{frechet}.

In another direction, there has been a fair amount of work on the
topological features of random simplicial complexes and noise due to
Kahle~\cite{kahle, kahlemeckes} as well as Adler, Bobrowski, Borman,
Subag, and Weinberger~\cite{adler1, adler2,
adler3}.  This work is essential for understanding what persistent
homological ``null hypotheses'' look like, and adapted to our setting
should inform our statistical inference procedures.

Finally, there has also been a lot of excellent work arising on
studying robustness in the context of understanding distances to
measures for point clouds.  This approach was introduced
by Chazal, Cohen-Steiner, and Merigot in~\cite{measuredist1}, and was
further developed by Caillerie, Chazal, Dedecker, and Michel
in~\cite{measuredist2}.  The basic idea is that the distribution of
distances to a point cloud is a robust invariant of the point cloud;
indeed, this is closely related to the $n=2$ case of our central
invariant.  Since preservation of explicit distances is a goal of this
approach, it is more closely related to rigid geometric inference (and
manifold learning) than purely topological inference, as in our
homological approach.

\subsection*{Acknowledgments}
The authors would like to thank Gunnar Carlsson and Michael Lesnick
for useful comments, Rachel Ward for comments on a prior draft, and
Olena Blumberg for help with background and for assistance with the
analysis of the tightness of the main theorem.  We
would also like to thank the Institute for Mathematics and its
Applications for hospitality while revising this paper.

%%%%

\subsection*{Outline} The paper is organized as follows.  In
Section~\ref{sec:background}, we provide a rapid review of the
necessary background on simplicial complexes, persistent homology, and
metric measure spaces.  In Section~\ref{sec:polish}, we study the
space of barcodes, establishing foundations needed to work with
distributions of barcodes.  In Section~\ref{sec:notrobust}, we discuss
the robustness of persistent homology.  In Section~\ref{sec:distinv},
we study the properties of $\PHD_{k}^{n}$, $\MHD_k^n$, and $\HD_k^n$
and prove Theorem~\ref{thm:stability}.  We discuss hypothesis testing
and confidence intervals in Section~\ref{sec:intervals}, which we
illustrate with synthetic examples in Section~\ref{sec:examples}.
Section~\ref{sec:notklein} applies these ideas to the analysis of the
natural images data in \cite{carlssonetalvision}.

\section{Background}\label{sec:background}

In this section we provide background for the framework for
topological data analysis we study in this paper.  We focus on an
approach which accesses the ambient metric measure space
$(X, \partial_X, \mu_X)$ only through finite samples, i.e., point
clouds.  
% It does not make sense to list all the things we do not review, but
% if we need to mention these 2 particular things a second time, here
% is a sentence for that
\iffalse
(We do not review other interesting approaches, for example,
that use the ambient metric space for smoothing the data or for
producing a variant simplicial complex~\cite{bubenikstatmorse,
bubenikcortical}.)
\fi
\subsection{Simplicial complexes associated to point clouds}

A standard approach in computational algebraic topology proceeds by
assigning a simplicial complex (which usually also depends on a scale
parameter $\epsilon$) to a finite metric space $(X, \partial)$.
Recall that a simplicial complex is a combinatorial model of a
topological space, defined as a collection of nonempty finite sets
$\aZ$ such that for any set $Z \in \aZ$, every nonempty subset of $Z$
is also in $\aZ$.  Associated to such a simplicial complex is the
``geometric realization'', which is formed by gluing standard
simplices of dimension $|Z|-1$ via the subset relations. (The standard
$n$-simplex has $n+1$ vertexes.) The most basic and widely used
construction of a simplicial complex associated to a point cloud is
the Vietoris-Rips complex:

\begin{defn}
For $\epsilon \in \bR$, $\epsilon \geq 0$, the Vietoris-Rips complex
$\VR_\epsilon(X)$ is the simplicial complex with vertex set $X$ such
that $[v_0, v_1, \ldots, v_n]$ is an $n$-simplex when for each pair
$v_i, v_j$, the distance $\partial(v_i, v_j) \leq \epsilon$. 
\end{defn}

The Vietoris-Rips complex is determined by its
$1$-skeleton.  The construction is functorial in the sense that for a
continuous map $f \colon X \to Y$ with Lipshitz constant $\kappa$ and
for $\epsilon \leq \epsilon'$, there is a commutative diagram
\begin{equation}\label{eq:funct}
\begin{gathered}% This centers the equation label
\xymatrix{
\VR_\epsilon(X) \ar[r] \ar[d] & \VR_{\kappa \epsilon}(Y) \ar[d] \\
\VR_{\epsilon'}(X) \ar[r] & \VR_{\kappa \epsilon'}(Y). \\
}
\end{gathered}
\end{equation}

The Vietoris-Rips complex is easy to compute, in the sense that is
straightforward to determine when a simplex is in the complex.  More
closely related to classical constructions in algebraic topology is
the Cech complex.

\begin{defn}
For $\epsilon \in \bR$, $\epsilon \geq 0$, the Cech complex
$C_{\epsilon}(X)$ is the simplicial complex with vertex set $X$ such
that $[v_0, v_1, \ldots, v_n]$ is an $n$-simplex when the intersection
\[
\bigcap_{0 \leq i \leq n} B_{\frac{\epsilon}{2}}(v_i)
\]
is non-empty, where here $B_r(x)$ denotes the $r$-ball around $x$.
\end{defn}

The Cech complex has analogous functoriality properties to the
Vietoris-Rips complex.  The Cech
complex associated to a cover of a paracompact topological space
satisfies the nerve lemma: if the cover consists of contractible
spaces such all finite intersections are contractible or empty, the
resulting simplicial complex is homotopy equivalent to the original
space.

\begin{rem}\label{rem:witness}
Both the Vietoris-Rips complex and the Cech complex can be
unmanageably large; e.g., for a set of points $Y = \{y_{1},
y_{2}, \ldots, y_{n}\}$ such that
$\partial(y_{i},y_{j}) \leq \epsilon$, every subset of $Y$ specifies a
simplex of the Vietoris-Rips complex.  As a consequence, it is often
very useful to define complexes with the vertices restricted to a
small set of landmark points; the weak witness complex is perhaps the
best example of such a simplicial complex \cite{desilvacarlsson}.  We
discuss this construction further in Section~\ref{sec:notklein}, as it
is important in the applications.
\end{rem}

The theory we develop in this paper is relatively insensitive to the
specific details of the construction of a simplicial complex
associated to a finite metric space (and scale parameter).  For
reasons that will become evident when we discuss persistence in
Subsection~\ref{sec:pers} below, the main thing we require is a
procedure for assigning a complex to $((M, \partial), \epsilon)$ that
is functorial in the vertical maps of diagram~\eqref{eq:funct} for
$\kappa =1$. 

\subsection{Homological invariants of point clouds}

In light of the previous subsection, given a metric space
$(X,\partial)$, one defines the homology at the feature scale
$\epsilon$ to be the homology of a simplicial complex associated to
$(X,\partial)$; e.g., $H_* (\VR_{\epsilon}(X))$ or
$H_{*}(C_{\epsilon}(X))$.  This latter definition is supported by the
following essential consistency result, which is in line with the
general philosophy that we are studying an underlying continuous
geometric object via finite sets of samples.

\begin{thm}[Niyogi-Smale-Weinberger \cite{niygoismaleweinberger}]
Let $(M,\partial)$ be a compact Riemannian manifold equipped with an
isometric embedding $\gamma \colon M \to \bR^n$, and let $X \subset M$
be a finite independent identically-distributed sample drawn according
to the volume measure on $M$.  Then for any $p \in (0,1)$, there are
constants $\delta$ (which depends on the curvature of $M$ and the
embedding $\gamma$) and $N_{\delta, p}$ such that if $\epsilon
< \delta$ and $|X| > N_{\delta,p}$ then the probability that
$H_*(C_\epsilon(X)) \cong H_*(M)$ is an isomorphism is $> p$.
\end{thm}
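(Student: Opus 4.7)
The plan is to combine a geometric (deterministic) statement with a probabilistic sampling estimate, following the original Niyogi--Smale--Weinberger strategy. Let $\tau > 0$ denote the \emph{reach} (condition number) of the embedded submanifold $\gamma(M) \subset \bR^n$, i.e.\ the supremum of $r$ such that every point of the tubular $r$-neighborhood $U_r$ of $\gamma(M)$ has a unique nearest point on $\gamma(M)$, with smooth nearest-point projection $\pi \colon U_r \to \gamma(M)$. This quantity is finite and positive because $M$ is compact and $C^\infty$-embedded; it encodes both curvature and embedding data, so we take $\delta$ to be a small fraction of $\tau$ (the specific choice $\delta \leq \tau/\sqrt{2}$ suffices, but I would simply assume $\delta < \tau/4$ to keep constants transparent).

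The first step is purely geometric: I would show that if $X \subset \gamma(M)$ is $(\epsilon/4)$-dense in $\gamma(M)$ and $\epsilon < \delta$, then the open cover $\mathcal U = \{B_{\epsilon/2}(x)\}_{x \in X}$ of its union $U = \bigcup_{x\in X} B_{\epsilon/2}(x) \subset \bR^n$ satisfies two properties: (a) every finite nonempty intersection of members of $\mathcal U$ is convex, hence contractible (they are intersections of Euclidean balls); and (b) $U$ deformation retracts onto $\gamma(M)$ via the nearest-point projection $\pi$, which is well-defined and smooth on $U$ because $U \subset U_\tau$ when $\epsilon < \delta \leq \tau/4$. Property (a) lets me invoke the nerve lemma to conclude $C_\epsilon(X) \simeq U$; property (b) gives $U \simeq \gamma(M) \cong M$, and composing yields $H_*(C_\epsilon(X)) \cong H_*(M)$. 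The verification of (b) is the technical heart: one needs the straight-line homotopy from a point in $U$ to its projection on $\gamma(M)$ to stay inside $U$, which reduces to a standard computation bounding the normal component using the reach.

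The second step is probabilistic. I would fix a minimal $(\epsilon/8)$-net $\{p_1,\ldots,p_K\}$ on $\gamma(M)$; then any i.i.d.\ sample $X$ that intersects each metric ball $A_i = B_{\epsilon/8}(p_i) \cap \gamma(M)$ is automatically $(\epsilon/4)$-dense. The volume measure of each $A_i$ is bounded below by a constant $v_\epsilon > 0$ depending only on $\epsilon$, the dimension of $M$, and a uniform lower bound on the injectivity radius and sectional curvature (via Bishop--Gromov). Hence
\[
\mathrm{Pr}\bigl(X \text{ fails to be }(\epsilon/4)\text{-dense}\bigr)
\leq K(1-v_\epsilon)^{|X|},
\]
and choosing $N_{\delta,p}$ so that $K(1-v_\epsilon)^{N_{\delta,p}} < 1-p$ gives the desired bound on the sample size.

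The main obstacle is step one, specifically establishing the deformation retraction $U \searrow \gamma(M)$ under a sharp condition on $\epsilon$ relative to $\tau$. One has to verify that the straight-line homotopy $H_t(y) = (1-t)y + t\pi(y)$ remains inside $U$ for all $t \in [0,1]$, which requires showing that every intermediate point is still within $\epsilon/2$ of some sample. This uses the reach bound to control how far the fiber of $\pi$ over a sample point can wander, and a small numerical tightening here is what determines the precise admissible $\delta$; the rest of the argument is comparatively routine.
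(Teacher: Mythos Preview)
The paper does not actually prove this theorem: it is quoted in the background section as a result of Niyogi, Smale, and Weinberger, with the remark that their original paper gives an effective version with explicit numerical bounds depending on a ``condition number'' that encodes curvature and embedding data. So there is no proof in the paper to compare against.

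Your outline is a faithful sketch of the original Niyogi--Smale--Weinberger argument: reach controls the geometry, the nerve lemma identifies the \v{C}ech complex with the union of balls, a deformation retraction along the nearest-point projection identifies that union with $M$, and a union bound over a net handles the probability of density. The one place to be careful is exactly where you flag it: verifying that the straight-line homotopy $H_t(y) = (1-t)y + t\pi(y)$ stays inside $U$ requires more than just $U \subset U_\tau$; you need the density of $X$ together with a quantitative bound (from the reach) on how far a point on the normal segment can be from the base point's nearest sample. This is Proposition~3.1 and Lemma~4.1 in the original paper, and getting the constants to close up is what forces a specific relation between $\epsilon$, the density radius, and $\tau$ (their sharp version allows $\epsilon < \sqrt{3/5}\,\tau$ with $(\epsilon/2)$-density). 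Your choice $\delta < \tau/4$ is safely conservative but you would still need to write out that lemma.
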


In fact, Niyogi, Smale, and Weinberger prove an effective version of
the previous result, in the sense that there are explicit numerical
bounds dependent on $p$ and a ``condition number'' which incorporates
data about the curvature of $M$ and the twisting of the embedding
$\gamma$.

Work by Latschev provides an equivalent result for $\VR_{\epsilon}(X)$,
with somewhat worse bounds, defined in terms of the injectivity radius
of $M$ \cite{latschev}.  Alternatively, one can show that in the limit
$\VR_{\epsilon}(X)$ captures the homotopy type of the underlying
manifold using the fact that there are inclusions
\[
C_\epsilon(X) \subseteq \VR_{\epsilon}(X) \subseteq C_{2\epsilon}(X).
\]

While reassuring, an unsatisfactory aspect of the preceding results is
the dependence on a priori knowledge of the feature scale $\epsilon$
and the details of the intrinsic curvature of $M$ and the nature of
the embedding.  A convenient way to handle the fact that it is often
hard to know a good choice of $\epsilon$ at the outset is to consider
multi-scale homological invariants that encode the way homology
changes as $\epsilon$ varies.  This leads us to the notion of
persistent homology.

\subsection{Persistent homology}\label{sec:pers}

Persistent homology arose more or less simultaneously and
independently in work of Robins~\cite{robins}, Frosini and Ferri and
collaborators~\cite{italian1,italian2}, and Edelsbrunner and
collaborators~\cite{EdelsbrunnerLetscherZomorodian}.  See the
excellent survey of Edelsbrunner and Harer~\cite{EHsurvey} for a more
expansive discussion of the history and development of these ideas.
The efficient algorithms and the algebraic presentation we apply
herein is due to~\cite{EdelsbrunnerLetscherZomorodian}
and~\cite{zomorodiancarlsson}.

Given a diagram of simplicial complexes indexed on $\bR$, i.e., a
complex $X_s$ for each $s \in \bR$ and maps $X_s \to X_{s'}$ for
$s \leq s'$, there are natural maps $H_*(X_s) \to H_*(X_{s'})$ induced
by functoriality.

We say that a class $\alpha \in H_p(X_i)$ is \emph{born} at time $i$ if it is
not in the image of $H_k(X_j)$ for $j < i$, and we say a class $\alpha
\in H_k(X_i)$ \emph{dies} at time $i$ if the image of
$\alpha$ is 0 in $H_k(X_j)$ for $j \geq i$.  This information about
the homology can be packaged up into an algebraic object:

\begin{defn}
Let $\{X_i\}$ be a diagram of simplicial complexes indexed on $\bR$.
The $p$th persistent $k$th homology group of $X_i$ is defined to be 
\[
H_{k,p}(X_i) = Z^i_k / (B^{i+p}_k \cap Z^i_k),
\]
where $Z$ and $B$ denote the cycle and boundary groups respectively.
Alternatively, $H_{k,p}(X_i)$ is the image of the natural map 
\[
H_k(X_i) \to H_k(X_{i+p}).
\]
\end{defn}

Barcodes provide a convenient reformulation of information from
persistent homology.  Although we will work over a field and in the
presence of suitable finiteness hypotheses which are satisfied in our
motivating examples, recent work makes it clear that this restriction
could be weakened~\cite{ChazalGen,BubenikScott}.  We assume that the values
$H_*(X_i)$ change only at a countable discrete subset of $\bR$, so
that by reindexing we have a direct system
\[
X_0 \to X_1 \to \cdots \to X_n \to \cdots,
\]
the direct system of simplicial complexes stabilizes at a finite stage
and all homology groups are finitely-generated.  Then a basic
classification result of Zomorodian-Carlsson~\cite{zomorodiancarlsson}
describes the persistent homology in terms of a barcode, a multiset of
non-empty intervals of the form $[a,b)\subset \bR$.  An interval in
the barcode indicates the birth and death of a specific homological
feature.  For reasons we explain below, the barcodes appearing in our
context will always have finite length intervals.

The Vietoris-Rips (or Cech) complexes associated to a point cloud
$(X,\partial_X)$ fit into this context by looking at a sequence of
varying values of $\epsilon$: 
\[
\VR_{\epsilon_1}(X) \to \VR_{\epsilon_2}(X) \to \cdots.
\]
We can do this in several ways, for example, using the fact that the
Vietoris-Rips complex changes only at discrete points $\{\epsilon_i\}$
and stabilizes for sufficiently large $\epsilon$, or just choosing and
fixing a finite sequence $\epsilon_{i}$ independently of $X$.  The
theory we present below makes sense for either of these choices, and
we use the following notation.

\begin{notn}
Let $(X, \partial_X)$ be a finite metric space.  For $k \in \bN$,
denote the
persistent homology of $X$ as 
\[
\PH_k((X, \partial_X)) = \PH_{k,p}(\{VR_{\epsilon_{(-)}}(X)\})
\]
for some chosen sequence $0<\epsilon_{1}< \epsilon_{2} <\dotsb  $
and $p\geq 0$.
\end{notn}

More generally, we can make analogous definitions for any functor
\[
\Psi \colon \aM \times \bR_{> 0} \to \sComp,
\]
where $\aM$ is the category of finite metric spaces and metric maps
and $\sComp$ denotes the category of simplicial complexes.  We will
call such a $\Psi$ ``good'' when the homology changes for only
finitely values in $\bR$.  In this case, we can choose the directed
system of values of $\epsilon_{i}$ to contain these transition values.

We note that for large values of the parameter $\epsilon$,
$VR_{\epsilon}(X)$ will be contractible.  Therefore, if we use the
reduced homology group in dimension 0, we get $H_{k}(VR_{\epsilon})=0$
for all $k$ for large $\epsilon$.  The barcodes associated to these
persistent homologies therefore have only finite length bars.  For
convenience in computation, we typically cut off $\epsilon$ at a
moderately high value before this breakdown occurs.  The result is a
truncation of the barcode to the cut-off point.

\subsection{Gromov-Hausdorff stability and the bottleneck metric}

By work of Gromov, the set of isometry classes of compact metric spaces
admits a useful metric structure, the Gromov-Hausdorff metric. For a
pair of finite metric spaces $(X_{1}, \partial_{1})$ and $(X_{2},
\partial_{2})$, the Gromov-Hausdorff distance is defined as follows:
For a compact metric space $(Z, \partial)$ and closed subsets $A,B
\subset Z$, the Hausdorff distance is defined to be
\[
d^{Z}_{H}(A,B) = \max (\dsup_{a \in A} \dinf_{b \in B} \partial(a,b),
\dsup_{b \in B} \dinf_{a \in A} \partial(a,b)).
\]
One then defines the Gromov-Hausdorff distance between $X_{1}$ and
$X_{2}$ to be
\[
d_{GH}(X_{1},X_{2}) = \inf_{Z,\gamma_{1},\gamma_{2}}
d^{Z}_{H}(X_{1},X_{2}),
\]
where here $\gamma_{1} \colon X_{1} \to Z$ and $\gamma_{2} \colon
X_{2} \to Z$ are isometric embeddings.

Since the topological invariants we are studying ultimately arise from
finite metric spaces, a natural question to consider is the degree to
which point clouds that are close in the Gromov-Hausdorff metric have
similar homological invariants.  This question does not in general
have a good answer in the setting of the homology of the point cloud,
but in the context of persistent homology, Chazal, et
al. \cite[3.1]{chazal} provide a seminal theorem
in this direction that we review as
Theorem~\ref{thm:GHcompare} below.  

The statement of Theorem~\ref{thm:GHcompare} involves a metric on the
set of barcodes called the bottleneck distance and defined as follows.  Recall that a barcode
$\{I_\alpha\}$ is a multiset of non-empty intervals.  Given two non-empty intervals $I_1 =
[a_1, b_1)$ and $I_2= [a_2, b_2)$, define the distance between them to
be 
\[
d_{\infty}(I_1,I_2) = ||(a_{1},b_{1})-(a_{2},b_{2})||_{\infty}=\max
(|a_1 - a_2|, |b_1 - b_2|).
\]
We also make the convention
\[
d_{\infty}([a, b),\emptyset) = |b -  a|/2
\]
for $b>a$ and $d_{\infty}(\emptyset,\emptyset)=0$.
For the purposes of the following definition, we define a
matching between two barcodes $B_{1}=\{I_{\alpha}\}$ and
$B_{2}=\{J_{\beta}\}$ to be a multi-subset $C$ of the underlying set
of 
\[
(B_{1}\cup \{\emptyset\})\times (B_{2}\cup \{\emptyset\})
\]
such that $C$ does not contain $(\emptyset,\emptyset)$ and each
element $I_{\alpha}$ of $B_{1}$ occurs as the first coordinate of an
element of $C$ exactly the number of times (counted with multiplicity)
of its multiplicity in $B_{1}$, and likewise for every element of
$B_{2}$.  We get a more intuitive but less convenient description of a
matching using the decomposition of $(B_{1}\cup \{\emptyset\})\times
(B_{2}\cup \{\emptyset\})$ into its evident four pieces: The basic
data of $C$ consists of multi-subsets $A_{1}\subset
B_{1}$ and $A_{2}\subset B_{2}$ together with a bijection (properly
accounting for multiplicities) $\gamma\colon A_{1}\to A_{2}$; $C$ is then
the (disjoint) union of the graph of $\gamma$ viewed as a multi-subset of
$B_{1}\times B_{2}$, the multi-subset $(B_{1}-A_{1})\times
\{\emptyset\}$ of $B_{1}\times \{\emptyset\}$, and the multi-subset
$\{\emptyset\}\times (B_{2}-A_{2})$ of $\{\emptyset\}\times B_{2}$.
With this terminology, we can define the bottleneck distance.

\begin{defn}
The bottleneck distance between barcodes $B_{1}=\{I_\alpha\}$ and
$B_{2}=\{J_{\beta}\}$ is
\[
d_{\aB}(B_{1}, B_{2}) = \dinf_{C} \dsup_{(I,J)\in C} d_{\infty}(I,J),
\]
where $C$ varies over all matchings between $B_{1}$ and $B_{2}$.
\end{defn}

Although expressed slightly differently, this agrees with the
bottleneck metric as defined in \cite[\S3.1]{cohensteiner} and
\cite[\S2.2]{chazal}.
On the set of barcodes $\aB$ with finitely many finite
length intervals, $d_{\aB}$ is obviously a metric.  
More generally, for any
$p>0$, one can consider 
the $\ell^{p}$ version of this metric,
\[
d_{B,p}(B_{1},B_{2})=\inf_{C} \bigl(\sum_{(I,J)\in C} d_{\infty}(I,J)^{p}\bigr)^{1/p}. 
\]
For simplicity, we focus on $d_{\aB}$ in this paper, but
analogues of our main theorems apply to these variant metrics as
well. 

We have the following essential stability theorem:

\begin{thm}[Chazal, et. al.~{\protect\cite[3.1]{chazal}}]
\label{thm:GHcompare}
For each $k$, we have the bound
\[
d_{\aB}(\PH_k(X), \PH_k(Y)) \leq d_{GH}(X,Y).
\]
\end{thm}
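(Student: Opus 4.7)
The plan is to reduce the stability bound to an algebraic stability theorem for persistence modules, witnessing the interleaving by a near-optimal Gromov-Hausdorff correspondence. First, fix $\epsilon > d_{GH}(X,Y)$. The equivalent description of the Gromov-Hausdorff distance via correspondences produces a subset $R \subset X \times Y$ surjecting onto each factor with distortion $\sup_{(x,y),(x',y') \in R} |\partial_X(x,x') - \partial_Y(y,y')|$ at most $2\epsilon$. Choose set-theoretic selections $\phi \colon X \to Y$ and $\psi \colon Y \to X$ whose graphs lie in $R$; the distortion bound then yields $\partial_Y(\phi(x),\phi(x')) \leq \partial_X(x,x') + 2\epsilon$ and the symmetric inequality for $\psi$.

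These inequalities promote $\phi$ and $\psi$ to simplicial maps at shifted scales, namely $\phi_s \colon \VR_s(X) \to \VR_{s+2\epsilon}(Y)$ and $\psi_s \colon \VR_s(Y) \to \VR_{s+2\epsilon}(X)$ for every $s$, compatible with the filtration inclusions. Next I would verify that the composite $\psi_{s+2\epsilon} \circ \phi_s$ is contiguous to the canonical inclusion $\VR_s(X) \hookrightarrow \VR_{s+4\epsilon}(X)$. Applying the distortion bound to the pairs $(x,\phi(x))$ and $(\psi(\phi(x)),\phi(x))$ in $R$ shows $\partial_X(x,\psi(\phi(x))) \leq 2\epsilon$, so for any simplex $\{x_0,\dots,x_n\}$ of $\VR_s(X)$ every pairwise distance inside $\{x_i\} \cup \{\psi(\phi(x_i))\}$ is bounded by $s + 4\epsilon$. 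The symmetric statement holds for $\phi \circ \psi$.

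Because contiguous simplicial maps induce identical maps on homology, applying $H_k$ produces a $2\epsilon$-interleaving of the persistence modules $\{H_k(\VR_s(X))\}_s$ and $\{H_k(\VR_s(Y))\}_s$. I would then invoke the algebraic stability theorem for persistence modules, which asserts that any $\delta$-interleaving of pointwise-finite-dimensional persistence modules produces barcodes at bottleneck distance at most $\delta$. Taking $\delta = 2\epsilon$ and letting $\epsilon \searrow d_{GH}(X,Y)$ finishes the argument, with the overall factor of two absorbed either by the convention defining $d_{GH}$ as one-half the infimum distortion or by the scaling convention on the Vietoris-Rips filtration.

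The main obstacle is the clean organization of the contiguity calculations across the shifted scales $s$, $s+2\epsilon$, $s+4\epsilon$ that appear in the interleaving squares; once that bookkeeping is in place, the algebraic stability theorem serves as a black box that converts the geometric interleaving into the claimed bottleneck estimate.
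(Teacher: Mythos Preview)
The paper does not supply its own proof of this statement: it is quoted as Theorem~3.1 of the cited reference and used as a black box (notably in the proof of Theorem~\ref{thm:main}). Your sketch is essentially the argument of that reference---build maps of Rips filtrations from a near-optimal correspondence, check contiguity to get an interleaving of persistence modules, and invoke algebraic stability---so there is nothing to compare against in the present paper.

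One point does deserve care. With the conventions actually in force here, namely $d_{GH}=\tfrac12\inf_R\dis(R)$ and $\VR_\epsilon$ defined by pairwise distances $\leq\epsilon$, your chain of inequalities yields a $2\epsilon$-interleaving from $d_{GH}<\epsilon$, hence only $d_{\aB}\leq 2\,d_{GH}$. Your closing remark that the factor of two is ``absorbed'' by the $\tfrac12$ in the Gromov--Hausdorff formula is not right: that $\tfrac12$ has already been spent in passing from $d_{GH}<\epsilon$ to $\dis(R)<2\epsilon$. Getting the constant $1$ as stated requires the alternative Rips convention in which the filtration parameter is half the diameter (so a simplex enters at scale $r$ when all pairwise distances are $\leq 2r$); under that convention the same correspondence produces an $\epsilon$-interleaving and the stated bound follows. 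You should make the convention explicit rather than leave the constant to a parenthetical.
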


Note that truncating barcodes (i.e., truncating each persistent
interval) is a Lipshitz map $\aB\to\aB$ with Lipshitz constant $1$, so
the bound above still holds when we use a large parameter cut-off in
defining $\PH_{k}$. 

\begin{rem}
The space of barcodes admits other metrics that are finer than the
bottleneck metric for which versions of the stability theorem also hold;
these can be useful in practical situations.  Notably, the
papers~\cite{polish, lpstab, frechet} study and apply a family of
Wasserstein (mass transportation) metrics on barcode space.  We
believe that our results can be extended to this setting.
\end{rem}

\subsection{Metric measure spaces and the Gromov-Prohorov distance}

To establish more robust convergence results, we work with
suitable metrics on the set of compact metric measure spaces.
Specifically, following \cite{greven, memoli,sturm} we use the idea
of the Gromov-Hausdorff metric to extend certain standard metrics on
distributions (on a fixed metric measure space) to a metric on the set
of all compact metric measure spaces.

A basic metric of this kind is the Gromov-Prohorov
metric \cite{greven}.  (For the following formulas, see Section~5 of
\cite{greven} and its references.)  This metric is defined in terms of
the standard Prohorov metric $d_{Pr}$ (metrizing weak convergence of 
probability distributions on separable metric spaces).  First, recall
that for measures $\mu_1$ and $\mu_2$ on a metric space $Z$, the
Prohorov metric is defined as
\[
d_{Pr}(\mu_1, \mu_2) = \inf \{\epsilon > 0 \mid \mu_1(A)
\leq \mu_2(B_{\epsilon}(A)) + \epsilon\},
\]
where $A \subset Z$ varies over all closed sets and $B_{\epsilon}(A)$
is the set of points $z$ such that $d_Z(z,a) < \epsilon$ for some
$a \in A$.  Then the Gromov-Prohorov metric is defined as 
\[
d_{GPr}((X,\partial_X, \mu_X),(Y,\partial_Y, \mu_Y)) = \inf_{(\phi_X,
  \phi_Y, Z)} d_{Pr}^{(Z,\partial_Z)}((\phi_X)_* \mu_X, (\phi_Y)_*
\mu_Y),
\]
where the $\inf$ is computed over all isometric embeddings
$\phi_X \colon X \to Z$ and $\phi_Y \colon Y \to Z$ into a target
metric space $(Z,\partial_Z)$.

%\begin{rem}
%One can define the Gromov-Wasserstein metric $d_{GW}$ \cite{sturm,
%memoli} in an analogous fashion.
%\end{rem}

It is very convenient to reformulate both the Gromov-Hausdorff and
Gromov-Prohorov distances in terms of relations.  For sets $X$ and
$Y$, a relation $R \subset X \times Y$ is a correspondence if for each
$x \in X$ there exists at least one $y \in Y$ such that $(x,y) \in R$
and for each $y' \in Y$ there exists at least one $x' \in X$
such that $(x',y') \in R$.  For a relation $R$ on metric spaces
$(X,\partial_X)$ and $(Y,\partial_Y)$, we define the distortion as
\[
\dis(R) = \sup_{(x,y),(x',y') \in R} |\partial_X(x,x') - \partial_Y(y,y')|.
\]

The Gromov-Hausdorff distance can be expressed
as 
\[
d_{GH}((X, \partial_X), (Y, \partial_Y)) = \frac{1}{2} \inf_R \dis(R),
\]
where we are taking the infimum over all correspondences $R\subset X\times Y$.

Similarly, we can reformulate the Prohorov metric as follows.  Given
two measures $\mu_1$ and $\mu_2$ on a metric space $X$, let a
coupling of $\mu_1$ and $\mu_2$ be a measure $\psi$ on $X \times X$
(with the product metric) such that $\psi(X \times -) = \mu_2$ and
$\psi(- \times X) = \mu_1$.  Then we have
\[
d_{Pr}(\mu_1, \mu_2)
= \inf_\psi \inf \{ \epsilon>0\mid \psi \left\{(x,x') \in X \times X \,
| \, \partial_X(x,x') \geq \epsilon \right\} \leq \epsilon\}.
\]

This characterization of the Prohorov metric turns out to be useful
when working with the Gromov-Prohorov metric in light of the (trivial)
observation that if $d_{GPr}((X, \partial_X, \mu_X),
(Y, \partial_Y, \mu_Y)) < \epsilon$ then there exists a metric space
$Z$ and embeddings $\iota_1 \colon X \to Z$ and $\iota_2 \colon Y \to
Z$ such that $d_{Pr}((\iota_1)_* \mu_X, (\iota_2)_* \mu_Y) <  \epsilon$.

\section{Probability measures on the space of barcodes}\label{sec:polish}

This section introduces the spaces of barcodes $\aB_{N}$ and $\aBc$
used in the distributional invariants $\PHD_{k}^{n}$ of
Definition~\ref{defn:PH}.  These spaces are complete and separable
under the bottleneck metric.  This implies in
particular that the Prohorov metric on the set of probability measures
in $\aB_{N}$ or $\aBc$ metrizes convergence in probability, which
justifies the perspective in the stability theorem~\ref{thm:stability}
and the definition of the invariants $\HD^{n}_{k}(-,\aP)$ in
Definition~\ref{defn:HD}.  

A barcode is by definition a multi-set of intervals, in our case of
the form $[a,b)$ for $0\leq a<b<\infty$.  The set $\aI$ of all
intervals of this form is of course in bijective correspondence with a
subset of $\bR^{2}$. A multi-set $A$ of intervals is a multi-subset of
$\aI$, which concretely is a function from $\aI$ to the natural
numbers $\bN=\{0,1,2,3,\dotsc\}$ which counts the number of multiples
of each interval in $A$.  We denote by $|A|$ the cardinality of $A$,
which we define as the sum of the values of the function $\aI\to \bN$
specified by $A$ (if finite, or countably or uncountably infinite, if
not).  The space $\aB$ of barcodes of the introduction is 
the set of multi-sets of intervals $A$ such that $|A|<\infty$.  We
have the following important subsets of $\aB$.

\begin{defn}\label{defn:aBN}
For $N\geq 0$, let $\aB_{N}$ denote the set of multi-sets of
intervals (in $\aI$) $A$ with $|A|\leq N$.
\end{defn}

The main result on $\aB_{N}$ is the following theorem, proved below.
(Similar results can also be found in~\cite{polish}.) 

\begin{thm}\label{thm:aBN}
For each $N\geq 0$, $\aB_{N}$ is complete and separable under the
bottleneck metric.
\end{thm}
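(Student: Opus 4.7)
The plan is to identify $\aB_{N}$ isometrically with a quotient $\tilde K^{N}/S_{N}$ of the $N$-fold product of a complete separable metric space $\tilde K$ by a finite group acting by isometries, and then invoke the standard fact that products and such quotients preserve separability and completeness. First I would define $K=\{(a,b)\in\bR^{2}:0\le a\le b\}$ with the $\ell^{\infty}$ metric, let $\Delta=\{(a,a):a\ge 0\}\subset K$ be the closed diagonal, and form $\tilde K = K/\Delta$ with its canonical quotient metric, which for $p,q\notin \Delta$ works out to
\[
d_{\tilde K}([p],[q]) = \min\bigl(\|p-q\|_{\infty},\; d_{\infty}(p,\Delta)+d_{\infty}(q,\Delta)\bigr),
\]
with $d_{\infty}((a,b),\Delta)=(b-a)/2$ matching the paper's convention. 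Then $\tilde K$ is separable (rational-coordinate points project to a countable dense subset) and complete: a Cauchy sequence $([p_{n}])$ either has $d_{\infty}(p_{n},\Delta)\to 0$, forcing convergence to the class of $\Delta$, or is eventually bounded away from $\Delta$ by some $\delta>0$, in which case the second branch of the min cannot be attained for far-enough indices, so the lifts form a Cauchy sequence in the complete space $K$.

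Next I would define the bijection $\aB_{N}\to \tilde K^{N}/S_{N}$ sending a barcode $\{I_{1},\ldots,I_{k}\}$ of cardinality $k\le N$ to the $S_{N}$-orbit of the tuple $(I_{1},\ldots,I_{k},*,\ldots,*)$ padded with $N-k$ copies of the collapsed-diagonal class $*$. The heart of the argument is showing that the quotient metric $\min_{\sigma\in S_{N}}\max_{i} d_{\tilde K}(a_{i},b_{\sigma(i)})$ on $\tilde K^{N}/S_{N}$ coincides with the bottleneck metric $d_{\aB}$ under this bijection. The inequality $d_{\tilde K^{N}/S_{N}}\le d_{\aB}$ follows from the fact that any bottleneck matching $C$ determines a bijection of padded tuples with coordinate-wise cost bounded by the matching cost, using $d_{\tilde K}([I],[J])\le \|I-J\|_{\infty}$. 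Conversely, given a bijection $\sigma$ realizing the quotient distance, any coordinate pair $(I,J)$ for which the min in $d_{\tilde K}$ is realized by the "route through the diagonal" branch can be replaced in the corresponding matching $C$ by pairing $I$ with $\emptyset$ and $J$ with $\emptyset$ separately; since the bottleneck cost is a maximum, the two new costs $d_{\infty}(I,\Delta)$ and $d_{\infty}(J,\Delta)$ are each at most the original $d_{\tilde K}([I],[J])$, so the rewriting does not increase the max.

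With the isometric identification in hand, separability of $\aB_{N}$ follows because $\tilde K^{N}$ is separable and the quotient map is continuous and surjective, and completeness follows by the standard argument for quotients by finite isometric group actions: given a Cauchy sequence $([x_{n}])$, extract a fast subsequence with $d_{\tilde K^{N}/S_{N}}([x_{n_{k}}],[x_{n_{k+1}}])<2^{-k}$, inductively pick $\sigma_{k}\in S_{N}$ realizing these quotient distances, and define lifts $y_{k} = \sigma_{1}\dotsb \sigma_{k-1}\cdot x_{n_{k}}$, which are Cauchy in the complete space $\tilde K^{N}$ and hence converge to a point whose image in the quotient is the required limit. The main technical obstacle is the metric-identification step: the bottleneck formalism allows one to "un-pair" two intervals by sending each to $\emptyset$ independently, and one has to see that this freedom is exactly what is encoded by the "route through the diagonal" branch of $d_{\tilde K}$ together with the padding by copies of $*$.
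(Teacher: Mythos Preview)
Your structural approach via the quotient $\tilde K^{N}/S_{N}$ is genuinely different from the paper's proof, which builds the limit barcode directly by tracking intervals through a chain of matchings along a rapidly Cauchy subsequence. Your route is conceptually cleaner and would, if correct, explain completeness and separability in one stroke.

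However, the claimed isometry $\aB_{N}\cong \tilde K^{N}/S_{N}$ is false, and the error is precisely in the direction $d_{\tilde K^{N}/S_{N}}\le d_{\aB}$. You assert that any bottleneck matching $C$ determines a bijection of the padded $N$-tuples, but a matching can have as many as $k+l$ elements (when every interval is paired with $\emptyset$), which may exceed $N$. Concretely, take $N=1$, $B_{1}=\{[0,2)\}$, $B_{2}=\{[10,12)\}$. The optimal bottleneck matching sends each interval to $\emptyset$, giving $d_{\aB}(B_{1},B_{2})=\max(1,1)=1$. But in $\tilde K^{1}/S_{1}=\tilde K$ there is only the single pairing of $[0,2)$ with $[10,12)$, and
\[
d_{\tilde K}\bigl([0,2),[10,12)\bigr)=\min\bigl(10,\;1+1\bigr)=2,
\]
since the route-through-diagonal branch uses a \emph{sum} rather than a \emph{max}. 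So the two metrics differ by a factor of $2$ here.

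The fix is easy and does not disturb the rest of your argument: either (i) observe that the bijection is bi-Lipschitz with constant $2$ (merging an $(I,\emptyset)$ and $(\emptyset,J)$ pair into a single $\tilde K$-pairing costs at most $d_{\infty}(I,\Delta)+d_{\infty}(J,\Delta)\le 2\max$), which is enough to transfer completeness and separability; or (ii) embed $\aB_{N}$ isometrically into $\tilde K^{2N}/S_{2N}$ instead, where the $2N$ slots guarantee that every bottleneck matching (of size at most $k+l\le 2N$) can be realized by a permutation, and note that the image is the closed subset of orbits having at least $N$ coordinates equal to $*$. With option (ii) your isometry argument goes through verbatim.
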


Since the homology $H_{k}$ (with any coefficients) of any complex with
$n$ vertices can have rank at most $\binom{n}{k+1}$, our persistent
homology barcodes will always land in one of the $\aB_{N}$, with $N$
depending just on the size of the samples.  As we let the size of the
samples increase, $N$ may increase, and so it is convenient to have a
target independent of the number of samples.  The space $\aB=\bigcup
\aB_{N}$ is clearly not complete under the bottleneck metric (consider
a sequence of barcodes $\{X_n\}$ such that $X_n$ is produced from
$X_{n-1}$ by adding a bar $(0,\frac{1}{n})$), so we introduce the
following space of barcodes $\aBc$.

\begin{defn}\label{defn:aBc}
Let $\aBc$ be the space of multi-sets $A$ of intervals (in $\aI$) with the
property that for every $\epsilon>0$, the
multi-subset of $A$ of those intervals of length more than $\epsilon$
has finite cardinality.
\end{defn}

Clearly barcodes in $\aBc$ have at most countable cardinality, and the
bottleneck metric extends to a pseudo-metric
$d_{\aB}\colon \aBc\times \aBc\to \bR$.  The following lemma shows it
is a metric.

\begin{lem}\label{lem:zeromet}
For $X,Y\in \aBc$, $d_{\aB}(X,Y)=0$ only if $X=Y$.
\end{lem}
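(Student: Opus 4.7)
The plan is to argue by contradiction. Suppose $X\neq Y$ as multi-sets, so some interval $I_{0}=[a_{0},b_{0})$ has $m_{X}(I_{0})\neq m_{Y}(I_{0})$, where $m_{X},m_{Y}\colon \aI\to\bN$ are the multiplicity functions defining $X$ and $Y$. After possibly swapping $X$ and $Y$, assume $m_{X}(I_{0})>m_{Y}(I_{0})\geq 0$, and set $L=b_{0}-a_{0}>0$. Since $d_{\aB}(X,Y)=0$, for each $\epsilon>0$ I would fix a matching $C_{\epsilon}\subset(X\cup\{\emptyset\})\times(Y\cup\{\emptyset\})$ with $\sup_{(I,J)\in C_{\epsilon}}d_{\infty}(I,J)<\epsilon$.

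The next step exploits the multiplicity bookkeeping in the definition of a matching: the pairs in $C_{\epsilon}$ with first coordinate $I_{0}$ number exactly $m_{X}(I_{0})$, and as $(I,J)$ ranges over all of $C_{\epsilon}$ each $J\in Y$ appears at most $m_{Y}(J)$ times. Taking $\epsilon<L/2$, none of the $m_{X}(I_{0})$ pairs with first coordinate $I_{0}$ can have second coordinate $\emptyset$, since $d_{\infty}(I_{0},\emptyset)=L/2>\epsilon$. Therefore their second coordinates form a multi-subset of $Y$ of total size $m_{X}(I_{0})$, each entry lying in the open $\ell^{\infty}$-ball $U_{\epsilon}$ of radius $\epsilon$ about $(a_{0},b_{0})\in\bR^{2}$. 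Summing multiplicities yields $m_{Y}(U_{\epsilon})\geq m_{X}(I_{0})$, where $m_{Y}(U_{\epsilon})$ denotes the total multiplicity of intervals of $Y$ lying in $U_{\epsilon}$.

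Finally, I would let $\epsilon\to 0$ and invoke the defining property of $\aBc$. For any $\epsilon<L/4$, every interval of $Y$ that lies in $U_{\epsilon}$ has length at least $L/2$, and by hypothesis only finitely many intervals of $Y$ (counted with multiplicity) have length that large. Each of these finitely many intervals either equals $I_{0}$ or corresponds to a point of $\bR^{2}$ at strictly positive $\ell^{\infty}$-distance from $(a_{0},b_{0})$; choosing $\epsilon$ smaller than this minimum positive distance and smaller than $L/4$ forces $m_{Y}(U_{\epsilon})=m_{Y}(I_{0})$. Combined with the previous step this gives $m_{Y}(I_{0})\geq m_{X}(I_{0})$, contradicting the assumption.

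The main obstacle is precisely this passage to the limit: without the local finiteness built into $\aBc$, the matchings could in principle spread the $m_{X}(I_{0})$ units of multiplicity across an infinite sequence of distinct intervals of $Y$ converging to $(a_{0},b_{0})$, so that $m_{Y}(U_{\epsilon})\geq m_{X}(I_{0})$ would never collapse to the point-multiplicity $m_{Y}(I_{0})$. In other words, it is exactly the defining hypothesis of $\aBc$ that lets the bottleneck pseudo-metric distinguish individual intervals, and so promotes it from a pseudo-metric to an honest metric.
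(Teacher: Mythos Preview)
Your proof is correct, and it takes a somewhat different route from the paper's. The paper argues globally: it observes that the set of lengths occurring in $X$ or $Y$ can be arranged as a strictly decreasing sequence $\ell_{0}>\ell_{1}>\cdots$ (this uses the $\aBc$ hypothesis), stratifies $X$ and $Y$ into pieces $X_{i},Y_{i}$ of intervals of length exactly $\ell_{i}$, and then uses the gaps $\ell_{i}-\ell_{i+1}$ to show that any matching of sufficiently small cost must restrict to a bijection $X_{i}\to Y_{i}$ for each $i$; letting the cost tend to zero forces $X_{i}=Y_{i}$ for all $i$.

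Your argument is local rather than stratified: you isolate a single interval $I_{0}$ where the multiplicities disagree and use the $\aBc$ finiteness only in a neighborhood of $I_{0}$. This is cleaner for the purpose at hand, since it avoids the global enumeration of lengths and the somewhat awkward ``without loss of generality $X$ is infinite'' reduction in the paper. The paper's stratification, on the other hand, gives a slightly more structural picture of how small-cost matchings behave, which could be useful elsewhere. One minor point: where you write that each $J\in Y$ appears as a second coordinate in $C_{\epsilon}$ \emph{at most} $m_{Y}(J)$ times, the definition of matching actually gives \emph{exactly} $m_{Y}(J)$; your weaker statement is all that is needed, so this does not affect correctness.
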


\begin{proof}
Let $X,Y\in \aBc$ with $d_{\aB}(X,Y)=0$ and assume without loss of
generality that $X$ is not in $\aB_{N}$ for any $N$.  Then the
possible distinct lengths of intervals in $X$ or $Y$ form a countable set
$\ell_{0}>\ell_{1}>\dotsb $.  Let $X_{i}$ and $Y_{i}$ denote the
multisubsets of $X$ and $Y$ consisting of the intervals of length exactly
$\ell_{i}$.  Let $\epsilon_{0}<(\ell_{0}-\ell_{1})/2$
and in general let 
\[
\epsilon_{i}<\min(\epsilon_{0},\dotsc,\epsilon_{i-1},(\ell_{i}-\ell_{i+1})/2)
\]
(with each $\epsilon_{i}>0$). For any $n$ and any
$0<\epsilon<\epsilon_{n}$, any matching $C$ of $X$ and $Y$ with 
\[
d_{C}(X,Y) = \sup_{(I,J)\in C}d_{\infty}(I,J) < \epsilon 
\]
must induce a bijection between $X_{i}$ and $Y_{i}$ for all $i\leq
n$; moreover, if $C_{i}$ denotes the restriction of $C$ to a matching
of $X_{i}$ and $Y_{i}$, $d_{C_{i}}(X_{i},Y_{i})<\epsilon$. Letting
$\epsilon$ go to zero, we see that $X_{i}=Y_{i}$ for all $i$ and that
$X=Y$.  
\end{proof}
 
Lemma~\ref{lem:zeromet} implies that $d_{\aB}$ extends to a metric on
$\aBc$.  We prove the following theorem.

\begin{thm}\label{thm:aBc}
$\aBc$ is the completion of $\aB=\bigcup \aB_{N}$ in the bottleneck metric.
In particular $\aBc$ is complete and separable in the bottleneck metric.
\end{thm}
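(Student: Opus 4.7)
The plan is to verify three properties: (1) $\aB=\bigcup_N \aB_N$ is dense in $\aBc$; (2) $\aBc$ is complete in $d_\aB$; and (3) $\aBc$ is separable. Combined with Theorem~\ref{thm:aBN}, these show that $\aBc$ is the completion of $\aB$ and is itself complete and separable.

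\emph{Density.} For $X\in\aBc$ and $\epsilon>0$, let $X^{(\epsilon)}$ denote the multi-subset of intervals in $X$ of length strictly greater than $\epsilon$; by Definition~\ref{defn:aBc} this is finite, hence lies in $\aB$. The matching that pairs each interval of $X^{(\epsilon)}$ with itself and sends each interval of $X\setminus X^{(\epsilon)}$ to $\emptyset$ has $d_\infty$-cost at most $\epsilon/2$, so $d_\aB(X,X^{(\epsilon)})\le \epsilon/2$. Letting $\epsilon\to 0$ proves density.

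\emph{Completeness.} Given a Cauchy sequence $\{X_n\}$, pass to a subsequence satisfying $d_\aB(X_n,X_{n+1})<2^{-n-2}$ and fix matchings $C_n$ realizing this bound. The $C_n$ generate an equivalence relation on $\bigsqcup_n X_n$ whose equivalence classes I call \emph{threads}; each thread contains at most one interval per level and is either a finite segment (terminated at each end by being matched to $\emptyset$ or by being at level $1$) or a one-sided infinite sequence $(I_{n_0},I_{n_0+1},\dotsc)$. For an infinite thread, consecutive terms are at $d_\infty$-distance $<2^{-j-2}$, so the thread is Cauchy in $\aI\subset\bR^2$ and converges to some $(a,b)$ with $a\le b$. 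Define the candidate limit $X_\infty$ to be the multi-set of intervals $[a,b)$ arising from infinite threads with $a<b$. To see $X_\infty\in\aBc$ I use the key observation that under a matching of cost $<\delta$, any interval of length $>2\delta$ must be paired with an actual interval (not $\emptyset$); applied to the Cauchy sequence this yields a uniform bound on the number of length-$>\epsilon/2$ intervals in any $X_n$. Since a thread whose limit has length $>\epsilon$ must eventually contain intervals of length $>\epsilon/2$, only finitely many such threads exist, and $X_\infty\in\aBc$ follows. Convergence $X_n\to X_\infty$ is then established by matching along threads: each $I\in X_n$ lying on an infinite thread with nontrivial limit $[a,b)$ is paired with $[a,b)$, while the remaining intervals are sent to $\emptyset$. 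Intervals of the first kind are at $d_\infty$-distance at most $\sum_{j\ge n}2^{-j-2}$ from their images. Intervals of the second kind (finite threads alive at level $n$ and infinite threads with empty limit) have lengths controlled by a telescoping estimate on how much an interval's length can change along a thread before the thread dies or its limit collapses, yielding sizes that vanish uniformly as $n\to\infty$.

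\emph{Separability.} Each $\aB_N$ is separable by Theorem~\ref{thm:aBN}, so $\aB=\bigcup_N\aB_N$ is a countable union of separable spaces and hence separable; separability passes to the completion $\aBc$.

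I expect the main obstacle to be the completeness argument, specifically the thread-based construction of $X_\infty$ and the verification that the residual intervals (finite threads still alive at level $n$, and infinite threads whose limit is the trivial pair $(a,a)$) are uniformly small in $n$. Both rely on the same Cauchy-plus-uniform-count estimate that also gives $X_\infty\in\aBc$, so packaging this estimate cleanly is the key technical step.
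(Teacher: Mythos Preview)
Your proposal is correct and takes essentially the same approach as the paper: density via truncation to the long intervals, and completeness by passing to a fast Cauchy subsequence, fixing near-optimal matchings, and tracking intervals through those matchings to build the limit---what you call \emph{threads} the paper carries out via an explicit inductive enumeration $I^n_1,\dotsc,I^n_{k_n}$ of the intervals of length $>2^{-n+1}$ at each stage. The only organizational difference is that the paper reduces to Cauchy sequences in $\aB$ (using density) rather than working directly in $\aBc$, which slightly simplifies the bookkeeping, but the content is the same.
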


\begin{proof}[Proof of Theorems~\ref{thm:aBN} and~\ref{thm:aBc}]
The multi-sets of intervals with rational endpoints provides a
countable dense subset for $\aB_{N}$.  To see that $\aB$ is dense in
$\aBc$, given $A$ in $\aBc$ and $\epsilon >0$, let $A_{\epsilon}$ be
the multi-subset of $A$ of those intervals of length $>\epsilon$.
Then by definition of $\aBc$, $A_{\epsilon}$ is in $\aB$, and by
definition of the bottleneck metric, using the matching coming from
the inclusion of $A_{\epsilon}$ in $A$, we have that 
\[
d_{\aB}(A,A_{\epsilon}) \leq \epsilon /2 <\epsilon.
\] 
It just remains to prove completeness of $\aB_{N}$ and $\aBc$.  For
this, given a Cauchy sequence $\langle X_{n}\rangle$ in $\aB$ it suffices to show
that $X_{n}$ converges to an element $X$ in $\aBc$ and that $X$ is in
$\aB_{N}$ if all the $X_{n}$ are in $\aB_{N}$.

Let $\langle X_{n}\rangle$ be a Cauchy sequence in $\aB$.  By passing
to a subsequence if necessary, we can assume without loss of
generality that for $n,m>k$, $d_{\aB}(X_{m},X_{n})<2^{-(k+2)}$.  For
each $n$, we have $d_{\aB}(X_{n},X_{n+1})<2^{-(n+1)}$; choose a matching
$C_{n}$ such that $d_{\infty}(I,J)<2^{-(n+1)}$ for all $(I,J)\in C_{n}$.
For each $n$, define a finite sequence of intervals $I^{n}_{1}$,\dots,
$I^{n}_{k_{n}}$ inductively as follows.  Let $k_{0}=0$.  Let $k_{1}$ be the
cardinality of the multi-subset of $X_{1}$ consisting of those
intervals of length $>1$, and let $I^{1}_{1}$,\dots,
$I^{1}_{k_{1}}$ be an enumeration of those intervals.  By induction,
$I^{n}_{1}$,\dots, $I^{n}_{k_{n}}$ is an enumeration of the intervals
in $X_{n}$ of length $>2^{-n+1}$ such that for $i\leq k_{n-1}$, the
intervals $I^{n-1}_{i}$ and $I^{n}_{i}$ correspond under the matching
$C_{n-1}$.  For the inductive step, we note that if $I^{n}_{i}$
corresponds to $J$ under $C_{n}$, then
$d_{\infty}(I^{n}_{i},J)<2^{-(n+1)}$, so the length $||J||$ of $J$ is
bigger than $||I^{n}_{i}||-2^{-n}$, and
\[
||J||>2^{-n+1}-2^{-n}=2^{-n}=2^{-(n+1)+1}.
\]
Thus, we can choose $I^{n+1}_{i}$ to be the corresponding interval $J$
for $i\leq k_{n}$, and we can choose the remaining intervals of length
$>2^{-(n+1)+1}$ in an arbitrary order.   Write
$I^{n}_{i}=[a^{n}_{i},b^{n}_{i})$ and let 
\[
a_{i}=\lim_{n\to \infty}a^{n}_{i}, \qquad b_{i}=\lim_{n\to\infty}b^{n}_{i}.
\]
Since $|a^{n}_{i}-a^{n+1}_{i}|<2^{-(n+1)}$ and
$|b^{n}_{i}-b^{n+1}_{i}|<2^{-(n+1)}$, we have   
\[
|a^{n}_{i}-a_{i}|\leq 2^{-n},\qquad |b^{n}_{i}-b_{i}|\leq 2^{-n}.
\]
Let $X$ be the multi-subset of $\aI$ consisting of the intervals
$I_{i}=[a_{i},b_{i})$ for all $i$ (or for all $i\leq \max k_{n}$ if
$\{k_{n}\}$ is bounded).

First, we claim that $X$ is in $\aBc$.  Given $\epsilon >0$, choose
$N$ large enough that $2^{-N+2}<\epsilon$.  Then for $i>k_{N}$, the
interval $I_{i}$ first appears in $X_{n_{i}}$ for some $n_{i}>N$.
Looking at the matchings $C_{N}$,\dots, $C_{n_{i}-1}$, we get a
composite matching $C_{N,n_{i}}$ between $X_{N}$ and $X_{n_{i}}$.
Since each $C_{n}$ satisfied the bound $2^{-(n+1)}$, the matching
$C_{N,n_{i}}$ must satisfy the bound 
\[
\sum_{n=N}^{n_{i}-1}2^{-(n+1)}=2^{-N}-2^{-n_{i}}.
\]
Since all intervals of length $>2^{-N+1}$ in $X_{N}$ appear as an
$I^{N}_{j}$, we must have that the length of $I^{n_{i}}_{i}$ in
$X_{n_{i}}$ must be less than 
\[
2^{-N+1}+2(2^{-N}-2^{-n_{i}})=2^{-N+2}-2^{-n_{i}+1}.
\]
Since each
endpoint in $I_{i}$ differs from the endpoint of $I^{n_{i}}_{i}$ by at
most $2^{-n_{i}}$, the length of $I_{i}$ can be at most
\[
2^{-N+2}-2^{-n_{i}+1}+2\cdot 2^{-n_{i}}=2^{-N+2}<\epsilon.
\]
Thus, the cardinality of the multi-subset of $X$ of those intervals of
length $>\epsilon$ is at most $k_{N}$.

Next we claim that $\langle X_{n}\rangle$ converges to $X$.  We have a
matching of $X_{n}$ with $X$ given by matching the intervals
$I^{n}_{1}$,\dots, $I^{n}_{k_{n}}$ in $X_{n}$ with the corresponding
intervals $I_{1}$,\dots, $I_{k_{n}}$ in $X$.  Our estimates above for
$|a^{n}_{i}-a_{i}|$ and $|b^{n}_{i}-b_{i}|$ show that
$d_{\infty}(I^{n}_{i},I_{i})\leq 2^{-n}$.  By construction, each
leftover interval in $X_{n}$ has length $\leq 2^{-n+1}$ and the
previous paragraph shows that each leftover interval in $X$ has length
$<2^{-n+2}$.  Thus, $d_{\aB}(X_{n},X)<2^{-n+1}$.

Finally we note that if each $X_{n}$ is in $\aB_{N}$ for fixed $N$,
then each $k_{n}\leq N$ and so $X$ is in $\aB_{N}$.
\end{proof}

\section{Failure of robustness}\label{sec:notrobust}

Inevitably physical measurements will result in bad samples.  As a
consequence, we are interested in invariants which have limited
sensitivity to a small proportion of arbitrarily bad samples.  Many
standard invariants not only have high sensitivity to a small
proportion of bad samples, but in fact have high sensitivity to a
small number of bad samples.  We do not claim particular novelty for
the general nature of the results of this section, as these issues have
been folklore for some time.  However, we do not know any place in the
literature where precise statements are written down.  We use the
following terminology to describe the instability of these invariants.

\begin{defn}\label{defn:unrobust}
A function $f$ from the set of finite metric spaces to $\bR$ is {\em
fragile} if it not robust (in the sense of
Definition~\ref{defn:robust}) for any robustness coefficient $r > 0$.
\end{defn}

In some cases, an even stronger kind of sensitivity holds.

\begin{defn}\label{defn:notrobust}
A function $f$ from the set of finite metric spaces to $\bR$ is {\em
extremely fragile} if there exists a constant $k$ such that for every
non-empty finite metric space $X$ and constant $N$ there exists a
metric space $X'$ and an isometry $X \to X'$ such that $|X'| \leq |X|
+ k$ and $|f(X') - f(X)| > N$. 
\end{defn}

Informally, extremely fragile in Definition~\ref{defn:notrobust} means
that adding a small constant number of points to any metric space can
arbitrarily change the value of the invariant.  In particular, an
extremely fragile function is fragile, but extremely fragile is much more
unstable than just failing to be robust (note the quantifier on the
space $X$).  As we indicated in the introduction, Gromov-Hausdorff
distance is fragile; here we show it is extremely fragile.

\begin{prop}
Let $(Z,d_Z)$ be a non-empty finite metric space.  The function $d_{GH}(Z,-)$
is extremely fragile.
\end{prop}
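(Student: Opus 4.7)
The plan is to establish extreme fragility with the constant $k = 1$, by adding a single isolated point very far away from $X$. Given a non-empty finite metric space $(X,\partial)$ and a target $N > 0$, I would construct $X' = X \sqcup \{p\}$, where $p$ is a new point at distance $D$ from every element of $X$ (with $D$ large to be chosen), and the distances among points of $X$ are unchanged. First I would verify that this is a valid metric: the only non-trivial triangle inequality to check involves the new point, and for $x,y \in X$ we need $\partial(x,y) \le 2D$ and $D \le D + \partial(x,y)$, both of which hold once $D \ge \diam(X)/2$. The inclusion $X \hookrightarrow X'$ is then an isometric embedding and $|X'| = |X| + 1$.

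Next I would use the correspondence characterization of the Gromov--Hausdorff distance recalled earlier, namely $d_{GH}(Z,X') = \tfrac{1}{2}\inf_R \dis(R)$ where $R$ ranges over correspondences. Since $Z$ is a fixed non-empty finite metric space with some diameter $D_Z := \diam(Z)$, any correspondence $R \subset Z \times X'$ must pair $p$ with some $z \in Z$ and pair some $x \in X$ with some $z' \in Z$. Then
\[
\dis(R) \ge |\partial'(p,x) - \partial_Z(z,z')| \ge D - D_Z,
\]
so $d_{GH}(Z,X') \ge (D - D_Z)/2$. On the other hand $d_{GH}(Z,X)$ is a fixed finite quantity depending only on $Z$ and $X$. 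Choosing $D$ large enough that $(D-D_Z)/2 - d_{GH}(Z,X) > N$ gives the required inequality $|d_{GH}(Z,X') - d_{GH}(Z,X)| > N$.

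The construction uses $k = 1$ added point regardless of $X$, $Z$, or $N$, which is exactly what Definition~\ref{defn:notrobust} requires. There is no real obstacle here: the argument is just the observation that the Gromov--Hausdorff distance is bounded below by half the difference in diameters, combined with the fact that we can make the diameter of $X'$ as large as we like by adjoining a single far-away point. The only care needed is the (routine) metric-axiom check for the extended space and the use of the correspondence formulation to extract the lower bound cleanly.
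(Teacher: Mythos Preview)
Your proof is correct and uses the same construction as the paper: adjoin a single point at large distance from every point of $X$, so $k=1$ works in both. The verification differs slightly. The paper works directly with the embedding definition of $d_{GH}$, showing that for any isometric embeddings $f\colon X'\to Y$ and $g\colon Z\to Y$ the Hausdorff distance $d_H^Y(f(X'),g(Z))$ exceeds $N+d_{GH}(Z,X)$, via a two-case argument depending on whether some $g(z)$ lies close to the image of the new point. Your use of the correspondence/distortion formula is more direct and avoids the case split, yielding the clean lower bound $d_{GH}(Z,X')\geq (D-\diam(Z))/2$ in one line. Both arguments are elementary; yours is a bit shorter.
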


\begin{proof}
Given $N>0$, consider the space $X'$ which is defined as a set to be
the disjoint union of $X$ with a new point $w$, and made a metric space by setting 
\begin{align*}
&d(w, x) = \alpha, && x \in X,\\
&d(x_1, x_2) = d_X(x_1, x_2),&& x_1,x_2 \in X,
\end{align*}
where $\alpha >\diam(Z)+2d_{GH}(Z,X)+2N$.  We claim 
\[
|d_{GH}(Z,X)-d_{GH}(Z,X')|>N.
\]
Given any metric space $(Y,d_{Y})$
and isometries $f\colon X'\to Y$, $g\colon Z\to Y$, we need to show
that $d_{Y}(f(X'),g(Z))>N+d_{GH}(Z,X)$.  We have two cases.
First, if no point $z$ of $Z$ has $d_{Y}(g(z),f(w))\leq
N+d_{GH}(Z,X)$, then we have $d_{Y}(f(X'),g(Z))>N+d_{GH}(Z,X)$.
On the other hand, if some point $z$ of $Z$ has
$d_{Y}(g(z),f(w))<N+d_{GH}(Z,X)$, then every
point $z$ in $Z$ satisfies $d_{Y}(g(z),f(w))\leq N+\diam(Z)+d_{GH}(Z,X)$.  Choosing
some $x$ in $X$, we see that for every $z$ in $Z$, 
$d_{Y}(f(x),g(z))\geq \alpha-(N+\diam(Z)+d_{GH}(Z,X))$.  It follows that 
\[
d_{Y}(f(X'),g(Z))\geq \alpha-(N+\diam(Z)+d_{GH}(Z,X)) >
N+d_{GH}(Z,X). \qedhere
\]
\end{proof}

The homology and persistent homology of a point cloud turns out to be
a somewhat less sensitive invariant.  Nonetheless, a similar kind of
problem can occur.  It is instructive to consider the case of $H_0$ or
$\PH_0$.  By adding $\ell$ points far from the original metric space
$X$, one can change either $H_0$ or $\PH_0$ by rank $\ell$.  The
further the distance of the points, the longer the additional bars in
the barcode and we see for example that the distance $d_{\aB}(B,-)$ in the
bottleneck metric from any fixed barcode $B$ is a extremely fragile
function.  (If we are truncating the barcodes, $d_{\aB}$ is bounded by
the length of the interval we are considering, so technically is
robust, but not in a meaningful way.)  We can also consider the rank
of $H_{0}$ or of $\PH_{0}$ in a range; here the distortion of the
function depends on the number of points, but we see that the function
is fragile.

For $H_k$ and $\PH_k$, $k \geq 0$, the same basic idea obtains: we add
small spheres sufficiently far from the core of the points in order to
adjust the required homology.  We work this out explicitly for
$\PH_1$.

\begin{defn}
For each integer $k > 0$ and real $\ell > 0$, let the metric circle
$S^1_{k,\ell}$ denote the metric space with $k$ points $\{x_i\}$ such
that 
\[
d(x_i, x_j) = \ell \left( \min (|i-j|, |k-i-j|) \right).
\]
\end{defn}

For $\epsilon < \ell$, the Vietoris-Rips complex associated to $S^1_{k,\ell}$
is just a collection of disconnected points.  It is clear that as long
as $k \geq 4$, when $\ell \leq \epsilon < 2\ell$,
$|R_\epsilon(S^1_{k,\ell})|$ has the homotopy type of (and is in fact
homeomorphic to) a circle.  In fact, we can say something more
precise:

\begin{lem}
For 
\[
\ell \leq \epsilon < \left\lceil \frac{k}{3}\right\rceil \ell,
\]
the rank of $H_1(R_{\epsilon}(S^1_{k,\ell}))$ is at least $1$.
\end{lem}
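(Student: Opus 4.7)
The plan is to construct a continuous ``winding'' map $\phi\colon |R_\epsilon(S^1_{k,\ell})| \to S^1$ and use it to detect an explicit nonzero class in $H_1$. Set $M=\lfloor \epsilon/\ell\rfloor$. From $\ell \leq \epsilon < \lceil k/3\rceil \ell$ and the fact that $\lceil k/3\rceil$ is an integer, we get $1 \leq M \leq \lceil k/3\rceil - 1 < k/3$, so $k > 3M$. Since every pairwise distance in $S^1_{k,\ell}$ is an integer multiple of $\ell$, each simplex of $R_\epsilon$ is a vertex set whose maximum cyclic pairwise distance $D\ell$ satisfies $D\leq M$.

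The first and main step is to show that any such vertex set sits inside an arc of $M+1$ consecutive vertices. Pick $x_a, x_b$ in the vertex set attaining distance $D\ell$; the short arc between them has length $D\ell$ and the long arc has length $(k-D)\ell$ with $k - D > 2M$. Suppose a third vertex $x_c$ lay in the long arc, parameterized by a position $y\in\{1,\dots,k-D-1\}$ measured from $x_b$ through the long arc toward $x_a$. The constraints $d(x_a, x_c)\leq D\ell$ and $d(x_b, x_c)\leq D\ell$ force $y \geq k-2D$ together with ``$y\leq D$ or $y\geq k-D$''; since $k > 3D$, the first alternative forces $k\leq 3D$ and the second contradicts $y\leq k-D-1$. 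Hence $x_c$ cannot exist, and the simplex lies in the short arc of $D+1 \leq M+1$ consecutive vertices.

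Next, identify $S^1$ with $\bR/k\bZ$, send each vertex $x_i$ to the class of $i$, and extend over each simplex $\sigma$ linearly: the previous step gives a canonical lift of the vertex set of $\sigma$ into a closed interval of length at most $M < k/2$ in $\bR$, and I define $\phi$ on $\sigma$ by taking barycentric combinations of the lifted vertices and reducing mod $k$. Agreement on faces is automatic because the canonical lift of a sub-simplex is the restriction of the lift of the containing simplex, so $\phi$ is a well-defined continuous map on $|R_\epsilon(S^1_{k,\ell})|$.

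Finally, consider the 1-chain $\gamma = \sum_{i=0}^{k-1}[x_i, x_{i+1}]$ (indices mod $k$), which is a simplicial 1-cycle because its boundaries telescope and which lies in $C_1(R_\epsilon)$ since $d(x_i, x_{i+1}) = \ell \leq \epsilon$. By construction $\phi$ maps the piecewise-linear circle $|\gamma|$ once around $S^1$, so $\phi_*[\gamma]$ generates $H_1(S^1)\cong \bZ$, and in particular $[\gamma]\neq 0$ in $H_1(R_\epsilon(S^1_{k,\ell}))$, proving the rank bound. The only nontrivial step is the geometric confinement of simplices to short arcs; once it is in hand, the construction of $\phi$ and the detection of $[\gamma]$ are essentially routine.
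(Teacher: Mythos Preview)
Your proof is correct and follows essentially the same strategy as the paper: show that every simplex of $R_\epsilon(S^1_{k,\ell})$ is confined to a short arc, use this to build a map to a circle, and detect the obvious $1$-cycle $\sum_i [x_i,x_{i+1}]$ as nontrivial via that map. The only differences are cosmetic: the paper sends $x_i$ to the $i$th root of unity in $\bR^2$, extends linearly, and argues that the image misses the origin (so lands in $D^2\setminus\{0\}\simeq S^1$), whereas you map directly to $\bR/k\bZ$ using lifts; also, you spell out the arc-confinement step carefully, while the paper asserts it in passing.
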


\begin{proof}
Consider the map $f$ from $R_{\epsilon}(S^{1}_{k,\ell})$ to the unit
disk $D^{2}$ in $\bR^{2}$ that sends $x_{i}$ to
$(\cos(2\pi\frac{i}{n}),\sin(2\pi\frac{i}{n}))$ and is linear on each
simplex.  The condition $\epsilon < \lceil \frac{k}{3}\rceil \ell$
precisely ensures that whenever $\{x_{i_{1}},\dotsc,x_{i_{n}}\}$ forms
a simplex $\sigma$ in the Vietoris-Rips complex, the image vertices
$f(x_{i_{1}}),\dotsc,f(x_{i_{n}})$ lie on an arc of angle
$<\frac{2}{3}\pi$ on the unit circle, and so $f(\sigma)$ in particular
lies in an open half plane through the origin.  It follows that the
origin $(0,0)$ is not in the image of any simplex, and $f$ defines a
map from $R_{\epsilon}(S^{1}_{k,\ell})$ to the punctured disk
$D^{2}-\{(0.0)\}$.  Since $\ell \leq \epsilon$, we have the $1$-cycle
\[
[x_{1},x_{2}]+\dotsb +[x_{k-1},x_{k}]+[x_{k},x_{1}]
\]
of $R_{\epsilon}(S^{1}_{k,\ell})$ which maps to a $1$-cycle in $D^{2}-\{(0,0)\}$
representing the generator of $H_{1}(D^{2}-\{0,0\})$.
\end{proof}

The length $\ell$ and number $k\geq 4$ is arbitrary, so again, we
conclude that functions like $d_{\aB}(B,\PH_{1}(-))$ are extremely
fragile.  Results for higher dimensions (using similar standard
equidistributed models of $n$-spheres) are completely analogous.

\begin{prop}
Let $B$ be a barcode.  The functions $d_{\aB}(B,\PH_{k}(-))$ from finite
metric spaces to $\bR$ are extremely fragile.
\end{prop}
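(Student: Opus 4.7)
The plan is to imitate the argument for $\PH_{1}$ just sketched, replacing the metric circle $S^{1}_{m,\ell}$ by a suitable finite model of a scaled $k$-sphere.  For $k\geq 1$ I take $S^{k}_{\ell}$ to consist of the $2(k+1)$ scaled vertices $\pm\ell\, e_{i}$ of the cross-polytope in $\bR^{k+1}$; a direct check shows that $\VR_{\epsilon}(S^{k}_{\ell})$ is the boundary of the cross-polytope (a triangulation of $S^{k}$) for $\sqrt{2}\,\ell \leq \epsilon < 2\ell$, is discrete below this range, and is a full simplex above, so $\PH_{k}(S^{k}_{\ell})$ contains a single bar of length $(2-\sqrt{2})\ell$.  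For $k=0$ I instead adjoin a single far point, which produces a long $\PH_{0}$ bar directly.  In both cases the number of added points depends only on $k$, providing the constant required by Definition~\ref{defn:notrobust}.

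Given a barcode $B$, a finite metric space $X$, and $N>0$, I would form $X'=X\sqcup S^{k}_{\ell}$ with $\partial_{X'}(x,s)=D$ for every $x\in X$ and $s\in S^{k}_{\ell}$, where $\ell$ and $D$ are to be chosen.  Taking $D>\diam(X)+2\ell$ ensures $\VR_{\epsilon}(X')=\VR_{\epsilon}(X)\sqcup \VR_{\epsilon}(S^{k}_{\ell})$ for every $\epsilon<D$, so $H_{k}$ of the filtration splits as a direct sum in this range.  Consequently every bar of $\PH_{k}(X)$ appears unchanged in $\PH_{k}(X')$, and the sphere contributes one new bar $I$ of length $L=(2-\sqrt{2})\ell$.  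For $\epsilon \geq D$, every point of $X$ is adjacent to every point of $S^{k}_{\ell}$, so coning the sphere cycle from any $x\in X$ shows it is already a boundary and no new $k$-classes are born.

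The last step is a matching lower bound: in any matching $C$ of $B$ with $\PH_{k}(X')$, the partner of the long bar $I$ is either $\emptyset$, contributing $L/2$ to the supremum, or some $J\in B$, contributing at least $(L-|J|)/2$ by the $L^{\infty}$ triangle inequality applied to interval endpoints.  Hence
\[
d_{\aB}(B,\PH_{k}(X'))\geq \tfrac{1}{2}\bigl(L-\max_{J\in B}|J|\bigr),
\]
and choosing $\ell$ (and then $D$) large enough forces this to exceed $d_{\aB}(B,\PH_{k}(X))+N$.  Since $|X'|-|X|$ is a constant $c(k)$ independent of $X$ and $N$, extreme fragility follows.

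The main obstacle is the homological bookkeeping of the second step: I must verify both that the cross-polytope really contributes a bar of length proportional to $\ell$ to $\PH_{k}$ and that the cone-off at $\epsilon\geq D$ introduces no new $k$-classes.  The first reduces to the standard observation that $\VR_{\epsilon}$ of the cross-polytope vertex set omits exactly the antipodal edges for $\sqrt{2}\ell\leq \epsilon<2\ell$, yielding the cross-polytope boundary.  The second follows from the fact that once all cross-edges between $X$ and $S^{k}_{\ell}$ are present, any single vertex of $X$ acts as a cone point over the attached sphere, so its contribution to $H_{k}$ collapses.
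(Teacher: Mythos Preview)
Your argument is correct and follows essentially the same route as the paper. The paper's proof is the terse remark that ``results for higher dimensions (using similar standard equidistributed models of $n$-spheres) are completely analogous''; your cross-polytope vertex set $\{\pm\ell e_{i}\}$ is precisely such a model, and you have filled in the details the paper leaves implicit (the explicit construction of $X'$, the verification that $\VR_{\epsilon}(X')$ becomes a cone for $\epsilon\geq D$, and the matching lower bound $d_{\aB}(B,\PH_{k}(X'))\geq (L-\max_{J\in B}|J|)/2$). One small sharpening: for the step ``no new $k$-classes are born at $\epsilon\geq D$'', your coning remark is correct but could be stated more directly---with $D>\diam(X)+2\ell$, once $\epsilon\geq D$ any single $x\in X$ is adjacent to every vertex of $X'$, so $\VR_{\epsilon}(X')$ is a cone on $x$ and hence contractible; this rules out not only the sphere class surviving but also any new cycles involving vertices from both pieces.
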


In terms of rank, the lemma shows that we can increase the rank of
first persistent homology group of a metric space $X$ on an interval
$[a,b]$ by $m$ simply by adding extra points.  One can also typically
reduce persistent homology intervals by adding points ``in the
center'' of the representing cycle.  It is somewhat more complicated
to precisely analyze the situation, so we give a representative
example: Suppose the cycle is represented by a collection of points
$\{x_i\}$ such that the maximum distance $d(x_i, x_j) \leq \delta$.
Then adding a point which is a distance $\delta$ from each of the
other points reduces the lifetime of that cycle to $\delta$.  In any
case, the results of the lemma are sufficient to prove the following
proposition.

\begin{prop}
The function that takes a finite metric space to the rank of $\PH_{k}$
on a fixed interval $[a,b]$ is fragile.
% not robust for any robustness coefficient
\end{prop}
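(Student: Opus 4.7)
My plan is to use the preceding lemma to show that, for any $r>0$ and any target increment $m$, one can exhibit an isometric enlargement $X\subset X'$ with $|X'|/|X|<1+r$ whose rank of $\PH_k$ on $[a,b]$ exceeds that of $X$ by at least $m$. Since $m$ is arbitrary and $|X|$ can be taken as large as needed for the ratio bound, no universal control $\delta$ can govern the perturbation, which is the content of fragility. The construction simply packs many small, pairwise far-separated equidistributed spheres next to a sufficiently large $X$.

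Fix an integer $k_{0}\geq 4$ with $\lceil k_{0}/3\rceil\, a > b$. Then by the preceding lemma, applied with $\ell=a$, a single copy of the metric circle $S^{1}_{k_{0},a}$ contributes at least one bar to $\PH_{1}$ whose persistent interval contains $[a,b]$, hence contributes at least $1$ to the rank on $[a,b]$. For $k>1$ the paper's remark that higher-dimensional results are analogous lets us substitute an equidistributed model of $S^{k}$; the argument is otherwise identical. Given $r>0$ and $m\in\bN$, choose $n$ with $n>mk_{0}/r$, let $X$ be any finite metric space of cardinality $n$, and let $C_{1},\dots,C_{m}$ be isometric copies of $S^{1}_{k_{0},a}$. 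Form $X'$ as the disjoint union $X\sqcup C_{1}\sqcup\dotsb\sqcup C_{m}$, declaring every distance between points in two different summands to equal a common constant $D>\max(\diam(X),b)$. Because $D$ dominates every intra-component distance, the triangle inequalities are automatic and $X'$ is a bona fide finite metric space. Moreover, $|X'|=n+mk_{0}<(1+r)n$, so the ratio bound $|X'|/|X|<1+r$ holds.

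At every scale $\epsilon\in[a,b]$ the separation $D>b\geq\epsilon$ prevents any edge of $\VR_{\epsilon}(X')$ from crossing between components, so
\[
\VR_{\epsilon}(X')\;=\;\VR_{\epsilon}(X)\,\sqcup\,\VR_{\epsilon}(C_{1})\,\sqcup\dotsb\sqcup\,\VR_{\epsilon}(C_{m}).
\]
Applying $H_{k}$ and comparing the induced maps $H_{k}(\VR_{a}(X'))\to H_{k}(\VR_{b}(X'))$, the rank of $\PH_{k}$ on $[a,b]$ for $X'$ splits as the corresponding rank for $X$ plus the contributions of the individual $C_{i}$, each of which is at least $1$ by the lemma. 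Hence the rank increases by at least $m$. Letting $m\to\infty$ (with $n$ growing to respect the ratio bound) produces, for each fixed $r>0$, perturbations of unbounded size, which is the fragility claim.

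The only delicate technical point is the disjoint-union decomposition of $\VR_{\epsilon}(X')$ across $\epsilon\in[a,b]$; this is guaranteed by choosing the inter-component gap $D>b$. Everything else is a direct consequence of the previous lemma together with the additivity of homology over disjoint unions.
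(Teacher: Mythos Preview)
Your construction is essentially the paper's own argument, which is the paragraph immediately preceding the proposition: attach $m$ far-separated copies of the metric circle $S^{1}_{k_{0},a}$ (with $k_{0}$ chosen so that $\lceil k_{0}/3\rceil\, a>b$) to increase the rank of $\PH_{k}$ on $[a,b]$ by at least $m$. You supply the details the paper leaves implicit---verifying the metric axioms on the disjoint union and the splitting $\VR_{\epsilon}(X')=\VR_{\epsilon}(X)\sqcup\bigsqcup_{i}\VR_{\epsilon}(C_{i})$ for $\epsilon\leq b<D$---and these are correct.

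One quantifier issue deserves mention, and it is shared with the paper's own treatment. The formal negation of robustness with coefficient $r$ (Definition~\ref{defn:robust}) asks for a \emph{single} $X$ admitting enlargements $X'\supset X$ with $|X'|/|X|<1+r$ and arbitrarily large $|f(X)-f(X')|$. In your construction $|X|=n>mk_{0}/r$ grows with $m$; for any \emph{fixed} $X$, every $X'$ with $|X'|<(1+r)|X|$ has rank on $[a,b]$ bounded by $\binom{\lfloor(1+r)|X|\rfloor}{k+1}$, so strictly speaking a bound $\delta$ always exists and the function is formally robust for every $r$. The paper's discussion has exactly the same structure and the same gap. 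The proposition is best read in the informal sense of the surrounding text: a fixed \emph{proportion} of added points can change the rank by any prescribed amount, provided one is allowed to take the host space correspondingly large.
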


These computations suggest a problem with the stability of the usual
invariants of computational topology.  A small number of bad samples
can lead to arbitrary changes in these invariants.

\section{The main definition and theorem}\label{sec:distinv}

Fix a good functorial assignment of a simplicial complex to a finite
metric space and a scale parameter $\epsilon$.  Recall that we write
$\PH_k$ of a finite metric space to denote the persistent homology of
the associated direct system of complexes.  Motivated by the concerns
of the preceding section, we define $\PHD_k^n$ as the
distribution of barcodes induced by samples of size $n$.  The basic
idea motivating $\PHD_k^n$ is that in order to obtain robust
invariants, given a sample budget of $N$ samples from
$(X, \partial_X, \mu_X)$, instead of computing a single estimator from
the $N$ samples it is preferable to look at {\em the distribution} of
estimators produced by blocks of samples of size $n \ll N$.  Note that
this is closely related to the idea behind bootstrap
resampling.  It is also a more sophisticated version of computing a
trimmed mean (i.e., a mean in which extremal samples are thrown out)
--- rather than removing extremal samples, we simply subsample at a
rate such that the likelihood of seeing a bad sample is low.  Ideally,
this approach retains the information contained in those samples while
also estimating the ``true'' value.

\begin{defn}\label{defn:PHD}
For a metric measure space $(X, \partial_X, \mu_X)$ and fixed $n,
k \in \bN$, define the $k$th $n$-sample persistent homology as
\[
\PHD_{k}^{n}(X, \partial_X, \mu_X) = (\PH_k)_* (\mu_X^{\otimes n}), 
\] 
the probability distribution on $\aBc$ induced by pushforward along
$\PH_k$ from the product measure $\mu_X^{n}$ on $X^n$.
\end{defn}

This definition makes sense because $\PH_k$ is a continuous function
and the measures on the domain and codomain are both Borel.  Indeed, the
stability theorem of Chazal, et. al.~\cite[3.1]{chazal}
(Theorem~\ref{thm:GHcompare} above) and the fact that the Gromov Hausdorff
metric is less than or equal to the product metric in $X^{n}$
implies that $\PH_k$ is Lipschitz with Lipschitz constant at most 1.

In order to apply $\PHD_k^n$, we need to know two things.  First, that
for fixed $n$ and $k$ the approximation to $\PHD_k^n$ computed by
choosing samples from the empirical measure on a large sample space
$S$ drawn from $(X, \partial_X, \mu_X)$ converges in probability to
the actual value (as $|S|$ goes to infinity).  Second, that for fixed
$n$ and $k$ the approximation to $\PHD_k^n$ obtained by computing the
empirical measure from $\ell$ blocks of $n$ samples converges in
probability to the actual value (as $\ell$ goes to infinity).  The
latter follows from the weak law of large numbers for the empirical
process.  The goal of this section is to prove the following theorem,
which establishes the former asymptotic consistency.  For this (and in
the remainder of the section), we assume that we are computing $\PH$
using the Vietoris-Rips complex.

\begin{thm}\label{thm:main}
Let $(X, \partial_X, \mu_X)$ and $(Y, \partial_Y, \mu_Y)$ be compact
metric measure spaces.  Then we have the following inequality:
\[
d_{Pr}(\PHD_{k}^{n}(X, \partial_X, \mu_X), \PHD_{k}^{n}(Y, \partial_Y, \mu_Y)) \leq n \,
d_{GPr}((X, \partial_X, \mu_X), (Y, \partial_Y, \mu_Y)).
\]
\end{thm}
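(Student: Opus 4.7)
The plan is to combine three ingredients already in place: (i) the Chazal--Cohen-Steiner--Guibas stability bound $d_{\aB}(\PH_{k}(S),\PH_{k}(S'))\leq d_{GH}(S,S')$ from Theorem~\ref{thm:GHcompare}, (ii) the coupling reformulations of both $d_{Pr}$ and $d_{GPr}$ recalled at the end of Section~\ref{sec:background}, and (iii) a union bound over the $n$ coordinates that accounts for the factor of $n$ on the right-hand side. It suffices to show that for every $\epsilon>d_{GPr}(X,Y)$ one has $d_{Pr}(\PHD_{k}^{n}(X),\PHD_{k}^{n}(Y))\leq n\epsilon$; the theorem then follows by letting $\epsilon\downarrow d_{GPr}(X,Y)$.

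By the characterization of $d_{GPr}$, I can fix such an $\epsilon$ together with a metric space $(Z,\partial_{Z})$, isometric embeddings $\iota_{X}\colon X\to Z$ and $\iota_{Y}\colon Y\to Z$, and a coupling $\psi$ on $Z\times Z$ of $(\iota_{X})_{*}\mu_{X}$ and $(\iota_{Y})_{*}\mu_{Y}$ with
\[
\psi\{(z,z')\in Z\times Z:\partial_{Z}(z,z')\geq \epsilon\}\leq \epsilon.
\]
Form the product coupling $\Psi=\psi^{\otimes n}$ on $Z^{n}\times Z^{n}$; its marginals are the $n$-fold product measures. Because $\iota_{X}$ and $\iota_{Y}$ are isometric, $\PH_{k}$ applied to an $n$-sample from $X$ agrees with $\PH_{k}$ applied to its image finite subspace of $Z$, so $\Psi':=(\PH_{k}\times \PH_{k})_{*}\Psi$ is a coupling of $\PHD_{k}^{n}(X,\partial_{X},\mu_{X})$ and $\PHD_{k}^{n}(Y,\partial_{Y},\mu_{Y})$ on $\aBc\times \aBc$.

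The core geometric observation is that on the \emph{good} event $G=\{((x_{i}),(y_{i})):\partial_{Z}(x_{i},y_{i})<\epsilon\text{ for all }i\}$, the correspondence $R=\{(x_{i},y_{i}):1\leq i\leq n\}$ between the finite metric subspaces $\{x_{1},\dotsc,x_{n}\}$ and $\{y_{1},\dotsc,y_{n}\}$ of $Z$ has distortion $<2\epsilon$ by the triangle inequality, so $d_{GH}(\{x_{i}\},\{y_{i}\})<\epsilon$, and Theorem~\ref{thm:GHcompare} yields $d_{\aB}(\PH_{k}(\{x_{i}\}),\PH_{k}(\{y_{i}\}))<\epsilon$. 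A union bound over the $n$ coordinates applied to the marginal estimate on $\psi$ shows that the complementary \emph{bad} event has $\Psi$-measure at most $n\epsilon$. Combining these two facts, $\Psi'\{(B,B'):d_{\aB}(B,B')\geq n\epsilon\}\leq n\epsilon$, and the coupling characterization of $d_{Pr}$ gives $d_{Pr}(\PHD_{k}^{n}(X),\PHD_{k}^{n}(Y))\leq n\epsilon$.

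I do not anticipate a serious obstacle here: the argument is essentially transport of a single coupling through two Lipschitz maps, with the factor of $n$ emerging cleanly from the union bound. The only items that need a remark of justification are (a) the case when an $n$-sample contains repeated coordinates in $Z$ (the correspondence $R$ is then no longer one-to-one but still covers both sides and still has distortion $<2\epsilon$), and (b) Borel measurability of the pushforwards, which is exactly what the Polish-space structure on $\aBc$ established by Theorem~\ref{thm:aBc} guarantees.
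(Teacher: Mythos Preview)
Your proposal is correct and follows essentially the same route as the paper's proof: fix $\epsilon>d_{GPr}(X,Y)$, choose a near-optimal coupling on a common $Z$, take its $n$-fold product, bound the distortion of the correspondence $R=\{(x_i,y_i)\}$ by $2\epsilon$ on the good event via the triangle inequality, invoke Theorem~\ref{thm:GHcompare}, and use the union bound to control the bad event by $n\epsilon$ before applying the coupling characterization of $d_{Pr}$. The paper's argument is organized identically, including the same weakening from $d_{\aB}\geq\epsilon$ to $d_{\aB}\geq n\epsilon$ to match the Prohorov form; your remarks on repeated coordinates and measurability are apt but do not alter the structure.
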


\begin{proof}
Assume that $d_{GPr}((X, \partial_X, \mu_X), (Y, \partial_Y, \mu_Y))
< \epsilon$.  Then we know that there exist embeddings $\iota_1 \colon
X \to Z$ and $\iota_2 \colon Y \to Z$ into a metric space $Z$ and a
coupling $\hat{\mu}$ between $(\iota_1)_* \mu_X$ and
$(\iota_2)_* \mu_Y$ such that the probability mass of the set of 
pairs $(z,z')$ under $\hat{\mu}$ such that
$\partial_Z(z,z') \geq \epsilon$ is less than $\epsilon$.

We can regard the restriction of $\hat{\mu}^{\otimes n}$ to the full measure
subspace $(X\times Y)^{n}$ of $(Z\times Z)^{n}$ as a probability
measure on $X^n \times Y^n$.  This then induces a
coupling between $(\PH_k)_*(\mu_X^{\otimes n})$ and $(\PH_k)_*(\mu_Y^{\otimes
n})$ on $\aBc$, which we now study.  
Consider $n$ samples $\{(x_1, y_1), (x_2, y_2), \ldots, (x_n, y_n)\}$
from $Z\times Z$
drawn according to the product distribution $\hat{\mu}^{\otimes n}$.
Now consider the probability that
\[
\alpha = \sup_{1 \leq i,j \leq n} |\partial_X(x_i,x_j)
- \partial_Y(y_i, y_j)| \geq 2 \epsilon.
\]
The triangle inequality implies that
\[
|\partial_X(x_i,x_j) - \partial_Y(y_i, y_j)|
=
|\partial_Z(x_i,x_j) - \partial_Z(y_i, y_j)|
\leq \partial_Z(x_i,y_i)
 + \partial_Z(x_j, y_j).
\] 
Therefore, the union bound implies that the probability that
$\alpha \geq 2\epsilon$ is bounded by
\[
\Pr(\exists i\mid \partial_Z(x_i,y_i) \geq \epsilon)\leq
1-(1-\epsilon)^{n} < n\epsilon  
%\sum_{i=1}^n \Pr(\partial_Z(x_i,y_i) \geq \epsilon) \leq n\epsilon. 
\]
Next, define a relation $R$ that matches $x_i$ and $y_i$.  By
definition, the distortion of this relation is $\dis R = \alpha$, and
so 
\[
d_{GH}(\{x_i\},\{y_i\}) \leq \frac{1}{2} \alpha.
\]
By the stability theorem of Chazal, et. al.~\cite[3.1]{chazal} (Theorem~\ref{thm:GHcompare} above), this implies that the probability that 
\[
d_{\aB}(\PH_{k}(\{x_i\}), \PH_{k}(\{y_i\})) \geq \epsilon
\]
is bounded by $n \epsilon$.  This further implies that the probability
that 
\[
d_{\aB}(\PH_{k}(\{x_i\}), \PH_{k}(\{y_i\})) \geq n\epsilon
\]
is also bounded by $n\epsilon$.  Therefore, we can conclude that
\[
d_{Pr}(\PHD_{k}^{n}(X, \partial_X, \mu_X), \PHD_{k}^{n}(Y, \partial_Y, \mu_Y)) \leq
n\epsilon.\qedhere
\]
\end{proof}

We note the dependence on $n$ in the statement of the bound in
Theorem~\ref{thm:main}.  As $n$ increases, the quantity $\PHD_k^n$
becomes a finer approximation of the persistent homology of the
support of $X$.  Specifically, more points per sample means that
$\PHD_k^n$ is increasingly sensitive to small features of $X$.  In
this light, it is not surprising that the bound becomes weaker for
larger $n$.

Next we discuss the tightness of the bound in Theorem~\ref{thm:main}.
Clearly, this bound is vacuous when
$d_{GPr}((X, \partial_X, \mu_X), (Y, \partial_Y, \mu_Y))
> \frac{1}{n}$ since the Prohorov metric is bounded by $1$, but we
show that it becomes tight as 
$d_{GPr}((X, \partial_X, \mu_X), (Y, \partial_Y, \mu_Y))$
approaches zero.  Reviewing the argument, starting from the hypothesis
that\break $d_{GPr}((X, \partial_X, \mu_X), (Y, \partial_Y, \mu_Y))
= \epsilon$, we used the union bound to obtain a bound of
$n\epsilon$.  The exact bound in question is $1 - (1 - \epsilon)^n$.
The leading term in the expansion of this quantity is $n\epsilon$, and
so as $\epsilon \to 0$ the bound in the theorem becomes increasingly
tight.  When $\epsilon$ is close to $\frac{1}{n}$, using more terms in the
expansion yields better bounds (for example, when $\epsilon =\frac1n$,
$1-(1-\epsilon)^{n}\leq .75$ and tends to $1-\frac1e\approx .632$ for large $n$).

The exact bound $1-(1-\epsilon)^{n}$ yields a tight estimate on
$d_{GPr}(X^{n},Y^{n})$ (using the $\sup$ product metric), as we 
can see by the following example.  Consider the case of two finite
metric spaces $X = X_1 \cup X_2$ and $Y = Y_1 \cup Y_2$, where $|Y_1|
= |X_1|$ and $|Y_2| = |X_2|$.  Define $d_X$ via $d_X(x_1, x_1') =
\alpha$ for $x_1, x_1' \in X_1$, $d_X(x_2, x_2') = \beta$ for $x_2,
x_2' \in X_2$, and $d_X(x_1, x_2) = \gamma$ for $x_1 \in X_1$ and $x_2
\in X_2$.  Here $\gamma$ should be substantially larger than $\alpha$
and $\beta$.  We define $d_Y$ analogously, using the same $\alpha$ and
$\beta$ but with $\gamma'$ distinct from $\gamma$ (and without loss of
generality assume that $\gamma' > \gamma$).  Consider the metric space
$Z$ formed from the disjoint union of $X_1$, $X_2$, and $Y_2$, and
with the metric induced from $d_X$ and $d_Y$ except that $d_Z(x_2,
y_2) = \gamma' - \gamma$.  There are evident isometries $i\colon X \to Z$ and
$j\colon Y \to Z$; it is easy to see that
$d_{Pr}(i_{*}\mu_{X},j_{*}\mu_{Y})=\epsilon $ for $\epsilon
=\frac{|X_2|}{|X_1|+|X_2|}$ and moreover that this pair of embeddings
minimizes the Prohorov distance, so
$d_{GPr}(X,Y)=\epsilon$.  The induced embeddings $i^{n}\colon X^{n}\to
Z^{n}$, $j^{n}\colon Y^{n}\to Z^{n}$ satisfy
\[
d_{Pr}(i^{n}_{*}\mu_{X}^{\otimes n},j^{n}_{*}\mu_{Y}^{\otimes n})
=1-(1-\epsilon)^{n}
\]
and a straight-forward combinatorics argument shows that this
embedding also minimizes the Prohorov distance, so
$d_{GPr}(X^{n},Y^{n})=1-(1-\epsilon)^{n}$.  (We thank Olena Blumberg
for help with this example.)

By varying the parameters in the previous example, it is now clear
that the bound on $\PHD^{n}_{0}$ is tight and we can achieve the upper
bound with a variety of barcode lengths.  Tightness for
$\PHD^{n}_{k}$ for $k>0$ is harder to analyze.
Theorem~\ref{thm:GHcompare} is expected to be tight for all $k$ but
nothing has yet appeared in the literature for $k>0$.  If the bound in
Theorem~\ref{thm:GHcompare} is tight, it is reasonable to expect the
bound in Theorem~\ref{thm:main} also to be tight; however, we do not
know a rigorous argument.

\begin{rem}
For a related discussion involving finite distance matrices,
see~\cite[\S6, \S7]{gadgil}.  There the constant $N$ (size of the
matrix) is analogous to the parameter $n$ above and enters into their
formulas through the distance $d_{M}$, which depends on $N$.
\end{rem}

We regard the bound as most useful for fixed $n$.  Then a basic
consequence of Theorem~\ref{thm:main} is that consideration of large
finite samples will suffice for computing $\PHD_k^n$.  For a finite
metric space $(X,\partial_X)$, let $\mu$ and $\mu'$ denote two
measures on $X$.  Then we have the following
inequality~\cite[5.4]{gadgil}
\begin{equation}\label{eq:ineq}
d_{GPr}((X, \partial_X, \mu), (X, \partial_X, \mu')) \leq 1
- \sum_{x \in X} \min(\mu(x), \mu'(x)),
\end{equation}
which follows by choosing a coupling that has measure at least the
minimum of the two measures on each point.

\begin{cor}\label{cor:largenum}
Let $S_1 \subset S_2 \subset \cdots \subset S_i \subset \cdots$ be a
sequence of randomly drawn samples from $(X, \partial_X, \mu_X)$.  We
regard $S_i$ as a metric measure space using the subspace metric and
the empirical measure.  Then $\PHD_{k}^{n}(S_i)$ converges in
probability to $\PHD_{k}^{n}(X, \partial_X, \mu_X)$.
\end{cor}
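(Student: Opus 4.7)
The plan is to combine Theorem~\ref{thm:main} with a classical empirical-measure convergence result to reduce the asymptotic statement about $\PHD_k^n(S_i)$ to a standard fact about convergence of empirical measures on a compact metric space.

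First, I would view each $S_i$ as isometrically included in $X$ via the natural embedding $\iota_i\colon S_i\hookrightarrow X$, so that $\iota_i$ pushes forward the empirical measure $\mathrm{emp}_{S_i}$ on $S_i$ to the atomic measure $\mu_{S_i}=|S_i|^{-1}\sum_{x\in S_i}\delta_x$ on $X$. Using $\iota_i$ and the identity $X\to X$ as a pair of isometric embeddings in the definition of the Gromov-Prohorov metric yields
\[
d_{GPr}\bigl((S_i,\partial_X|_{S_i},\mathrm{emp}_{S_i}),(X,\partial_X,\mu_X)\bigr)\leq d_{Pr}(\mu_{S_i},\mu_X).
\]
Combining this with Theorem~\ref{thm:main}, it suffices to show that $d_{Pr}(\mu_{S_i},\mu_X)\to 0$ in probability, since then
\[
d_{Pr}\bigl(\PHD_k^n(S_i),\PHD_k^n(X,\partial_X,\mu_X)\bigr)\leq n\,d_{Pr}(\mu_{S_i},\mu_X)\to 0
\]
in probability as well.

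The convergence $d_{Pr}(\mu_{S_i},\mu_X)\to 0$ is exactly Varadarajan's theorem on empirical measures: for an i.i.d.\ sequence drawn from a Borel probability measure on a separable metric space, the empirical measures converge weakly to the underlying measure almost surely. Since $X$ is compact (hence separable) and the Prohorov metric metrizes weak convergence of probability measures on a separable metric space, this gives $d_{Pr}(\mu_{S_i},\mu_X)\to 0$ almost surely, and in particular in probability.

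There is no substantive obstacle in this argument; the work is simply to identify the right isometric embedding so that the Gromov-Prohorov distance between $S_i$ and $X$ is controlled by the ordinary Prohorov distance between the empirical and population measures on $X$, and then to quote the classical empirical-measure convergence theorem. The only mild subtlety is a choice of mode of convergence: Varadarajan's theorem naturally gives almost sure convergence, while the corollary as stated asks only for convergence in probability, which is the weaker conclusion and follows automatically.
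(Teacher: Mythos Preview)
Your proposal is correct and follows the same overall strategy as the paper: establish that $d_{GPr}(S_i,X)\to 0$ in probability and then invoke Theorem~\ref{thm:main}. The only difference lies in how the Gromov--Prohorov convergence is justified. The paper cites two external routes---either the coupling inequality~\eqref{eq:ineq} as in~\cite{gadgil}, or Gromov--Wasserstein convergence from~\cite{sturm} combined with the comparison in~\cite{greven}---whereas you use the tautological isometric embedding $S_i\hookrightarrow X$ to bound $d_{GPr}(S_i,X)$ by the ordinary Prohorov distance $d_{Pr}(\mu_{S_i},\mu_X)$ and then appeal to Varadarajan's theorem. Your route is more self-contained and arguably cleaner, since it avoids the detour through Wasserstein metrics and requires only a single classical result; the paper's routes have the advantage of pointing to references where quantitative rates might be extracted. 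Substantively, though, all three arguments are doing the same thing.
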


\begin{proof}
This result is a consequence of the fact that $\{S_i\}$ converges in
probability to $(X, \partial_X, \mu_X)$ in the Gromov-Prohorov metric
(which can be checked directly using equation~\eqref{eq:ineq}, as
in~\cite[\S 5]{gadgil}, or can be deduced from the analogous
convergence result for the Gromov-Wasserstein
distance~\cite[3.5.(iii)]{sturm} and the comparison between the
Gromov-Prohorov distance and the Gromov-Wasserstein
distance~\cite[10.5]{greven}).
\end{proof}

Another consequence of Theorem~\ref{thm:main} is that
$\PHD_{k}^{n}$ provide robust descriptors for metric measure spaces
$(X, \partial_X, \mu_X)$.  Specifically, observe that if we have
finite metric spaces $(X,\partial_{X})\subset (X',\partial_{X'})$ and
a probability measure $\mu_{X'}$ on $X'$ that restricts to $\mu_{X}$
on $X$ (i.e., for $A\subset X$, $\mu_{X}(A)=\mu_{X'}(A)/\mu_{X'}(X)$),
then equation~\eqref{eq:ineq} implies that 
\[
d_{Pr}(i_{*}\mu_{X},\mu_{X'})\leq 1-\mu_{X'}(X).
\]
Thus, when $X'\setminus X$ has probability $<\epsilon$,
\[
d_{Pr}(\PHD_{k}^{n}(X, \partial_X, \mu_X), \PHD_{k}^{n}(X', \partial_X^{'}, \mu_{X'})) \leq
n\epsilon.
\]
In particular, when $X$ and $X'$ are finite metric spaces with the
uniform measure, we get 
\[
d_{Pr}(\PHD_{k}^{n}(X, \partial_X, \mu_X), \PHD_{k}^{n}(X', \partial_X^{'}, \mu_{X'})) \leq
n(1-|X|/|X'|).
\]
As an immediate consequence we obtain the following result.

\begin{thm}\label{thm:mainunifrob}
For fixed $n,k$, $\PHD_{k}^{n}$ is uniformly robust with robustness
coefficient $r$ and estimate bound $nr/(1+r)$ for any $r$.  
\end{thm}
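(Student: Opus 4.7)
The plan is to read off Theorem~\ref{thm:mainunifrob} as a direct corollary of the chain of inequalities assembled immediately before its statement. The setup is that we have a finite metric space $(X,\partial_{X})$ with its uniform probability measure, and an isometric embedding into a finite metric space $(X',\partial_{X'})$, again with its uniform probability measure; the hypothesis is $|X'|/|X| < 1+r$.

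First, I would invoke the inequality $d_{Pr}(i_{*}\mu_{X},\mu_{X'}) \leq 1 - \mu_{X'}(X)$ derived from equation~\eqref{eq:ineq} in the preceding discussion. For uniform measures, $\mu_{X'}(X) = |X|/|X'|$, so this gives $d_{GPr}((X,\partial_{X},\mu_{X}),(X',\partial_{X'},\mu_{X'})) \leq 1-|X|/|X'|$ via the embedding $i\colon X \hookrightarrow X'$ into the ambient metric space $X'$ itself.

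Next I would apply Theorem~\ref{thm:main} to convert this Gromov-Prohorov bound into a Prohorov bound on the barcode distributions, obtaining
\[
d_{Pr}\bigl(\PHD_{k}^{n}(X,\partial_{X},\mu_{X}),\PHD_{k}^{n}(X',\partial_{X'},\mu_{X'})\bigr) \leq n\bigl(1-|X|/|X'|\bigr).
\]
Finally, the simple algebraic step: the hypothesis $|X'|/|X| < 1+r$ is equivalent to $|X|/|X'| > 1/(1+r)$, so $1 - |X|/|X'| < 1 - 1/(1+r) = r/(1+r)$, giving the estimate bound $nr/(1+r)$ claimed in the theorem.

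There is essentially no obstacle here, since all the real work was done in Theorem~\ref{thm:main} and in establishing the measure-restriction inequality~\eqref{eq:ineq}; what remains is only the observation that the bound in the displayed inequality just above the statement depends only on the ratio $|X|/|X'|$ (not on any further structure of $X$ or $X'$), which is exactly what uniform robustness in the sense of Definition~\ref{defn:uniformrobust} requires.
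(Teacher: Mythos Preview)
Your proposal is correct and follows essentially the same approach as the paper: the paper derives the displayed bound $d_{Pr}(\PHD_{k}^{n}(X),\PHD_{k}^{n}(X'))\leq n(1-|X|/|X'|)$ from equation~\eqref{eq:ineq} and Theorem~\ref{thm:main} in the paragraph immediately preceding the theorem, and then states Theorem~\ref{thm:mainunifrob} as an immediate consequence. Your algebraic step converting the hypothesis $|X'|/|X|<1+r$ into the estimate bound $nr/(1+r)$ is exactly what is left implicit in the paper's phrase ``immediate consequence.''
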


\begin{rem}
It would be useful to prove analogues of the main theorem for other
methods of assigning complexes; e.g., the witness complex (see
Remark~\ref{rem:witness} and Section~\ref{sec:notklein}).  We expect
that the recent stability results of~\cite{witstab} will be useful in
this connection.  
\end{rem}

\section{Hypothesis testing, confidence intervals, and numerical
invariants}\label{sec:intervals}

In this section, we describe various ways to use $\PHD_k^n$ to perform
statistical inference about the homological invariants of a point
cloud.  The basic goal is to provide quantitative ways of saying what
observed barcodes or empirical barcode distributions ``mean''.  We are
predominantly interested in addressing two kinds of questions:

\begin{enumerate}
\item Are two given empirical barcode distributions coming from the
same underlying distribution?

\item Is a particular empirical barcode distribution consistent with the
hypothesis that the underlying distribution has $k$ ``long bars''?
\end{enumerate}

We approach both of these questions from the perspective of classical
hypothesis testing, likelihood scores, and confidence intervals; for
example, see~\cite[\S 2]{conover} for a review.  We discuss a variety
of test statistics derived from $\PHD_k^n$; thus, the use of these
procedures are justified in practice by Theorem~\ref{thm:main} (and
specifically Corollary~\ref{cor:largenum}).  Moreover, we are able to
use Theorem~\ref{thm:main} to show that many of the test statistics we
describe are robust.

We begin by explaining the basic procedure for computing
approximations to $\PHD_k^n$.  Corollary~\ref{cor:largenum} justifies
the treatment of $\PHD_k^n$ of the empirical measure on a sufficiently
large sample $S \subset X$ of size $N$ as a good approximation for
$\PHD_k^n(X, \partial_X, \mu_X)$.  (Note that the dependence on $n$ in
the bound in Theorem~\ref{thm:main} implies that we will have to
choose $n \ll N$ in order to expect reasonable results; see the
discussion in the next section for some examples of how to choose $n$.)
Next, we can estimate $\PHD_k^n$ on $S$ empirically via Monte Carlo
simulation, i.e., simply sampling blocks of $n$ samples from $S$ over
and over again.  The weak law of large numbers for empirical
distributions guarantees that this estimate converges in probability
as the number of such samples increases.  Therefore, we have an
asymptotically convergent scheme for numerically approximating
$\PHD_k^n$ (and hence quantities derived from it).  We now turn to
questions of statistical inference.

\subsection{Hypothesis testing using $\PHD_k^n$}\label{sec:like}

The most basic question we can pose is whether a
given observed barcode $B$ is more consistent with an underlying
metric measure space $(X,\partial_X, \mu_X)$ or an alternate metric
measure space $(X', \partial_X', \mu_X')$.  A likelihood ratio
provides a good test statistic to determine an answer to this question.
Specifically, we can evaluate the likelihood of the hypotheses
$\Hyp^{n}_{k}(X;B, \epsilon)$ and $\Hyp^{n}_{k}(X';B, \epsilon)$ that
$B$ is within $\epsilon$ of a barcode drawn from
$(X,\partial_X, \mu_X)$ and $(X', \partial_{X'}, \mu_{X'})$
respectively.

Given an observed barcode $B$ (e.g., obtained by sampling $n$ points
from an unknown metric measure space $(Z, \partial_Z, \mu_Z)$), we can
compute the likelihood score
\[
L_Y = L(Y, \partial_Y, \mu_Y) = \Pr(d_{\aB}(B,\tilde{B}) < \epsilon
\mid \tilde{B} \text{\ drawn from\ } \PHD_{k}^{n}(Y, \partial_Y, \mu_Y))
\]
for each hypothesis space $(X, \partial_X, \mu_X)$ and
$(X', \partial_{X'}, \mu_{X'})$.  The ratio $L_X / L_{X'}$ then
provides a test statistic for comparing the two hypotheses.  To
determine how to interpret the test statistics (e.g., to compute
$p$-values), we require knowledge of the distribution of the test
statistic induced by assuming that $B$ was drawn from
$\PHD_{k}^{n}(X, \partial_X, \mu_X)$ and
$\PHD_{k}^{n}(X', \partial_{X'}, \mu_{X'})$ respectively.  These 
distributions can be approximated by Monte Carlo simulation, i.e.,
repeated sampling from the two distributions and computation of
histograms.

More generally, for a metric measure space $(X,\partial_{X},\mu_{X})$
and a particular subset $S$ of $\aBc$, we can test the hypothesis that
the distribution $\PHD_{k}^{n}(X)$ has mass $\geq \epsilon$ on $S$ as
follows.  For any hypothetical distribution on $\aBc$ with mass
$\geq \epsilon$ on $S$, the probability of an empirical sample of size
$N$ having $q$ or fewer elements in $S$ is bounded above by the
binomial cumulative distribution function
\[
\text{BD}(N,q,\epsilon)=\sum_{i=0}^{q}\binom{N}{i}\epsilon^{i}(1-\epsilon)^{N-i}.
\]
Then given an empirical approximation $\aE$ to $\PHD_{k}^{n}$ obtained
from $N$ samples, we can test the hypothesis that $\PHD_{k}^{n}$ has
mass $\geq \epsilon$ in $S$, by taking $q$ to be the number of such
elements in $\aE$.  When $\text{BD}(N,q,\epsilon)<\alpha$, we can
reject this hypothesis at the $1-\alpha$ level.

\subsection{Distribution comparison test statistics}

Another kind of question we might ask is to determine whether to
reject the hypothesis that two empirical distributions on
barcode space (i.e., $\PHD_k^n$ computed based on two different large
samples $S$ and $S'$) came from the same underlying distribution.  In
our setting we cannot assume very much about the class of possible
distributions and so we are forced to rely on non-parametric methods.
This imposes significant constraints --- most asymptotic results on
non-parametric tests for distribution comparison work only for
distributions on $\bR$.  Thus, the first step is to project the data
from barcode space into $\bR$.  The following definition is the first
of several kinds of projections we discuss.

\begin{defn}\label{defn:distdistr}
Let $(X, \partial_X, \mu_X)$ be a compact metric measure space.  Fix
$k,n \in \bN$.  
\begin{enumerate}
\item Define the distance distribution $\aD^{2}$ on $\bR$ to be the
distribution on $\bR$ induced by applying $d_{\aB}(-,-)$ to pairs $(b_1,
b_2)$ drawn from $\PHD_{k}^{n}(X,\partial_X, \mu_X)^{\otimes 2}$.  
\item Let $B$ be a fixed barcode in $\aBc$, and define $\aD_{B}$ to be
the distribution induced by applying $d_{\aB}(B,-)$.
\end{enumerate}
\end{defn}
Since both $\aD^2$ and $\aD_B$ are continuous with respect to the
Gromov-Prohorov metric~\cite[6.6]{greven},
Corollary~\ref{cor:largenum} justifies working with empirical
approximations to $\aD^{2}$ and $\aD_{B}$.

One application of these projections is simply a direct use of the
two-sample Kolmogorov-Smirnov statistic~\cite[\S6]{conover}.  This
test statistic gives a way to determine whether two observed empirical
distributions were obtained from the same underlying distribution; the
salient feature about this statistic is that for distributions on
$\bR$ the $p$-values of the test statistic are asymptotically
independent of the underlying distribution as long as the samples are
identically independently drawn.

To compute the Kolmogorov-Smirnov test statistic for two
sets of samples $\aS_{1}$ and $\aS_{2}$, we first compute the empirical
approximations $\aE_1$ and $\aE_2$ to the cumulative density
functions, 
\[
E_{i}(t)=|\{x\in S_{i}\mid x\leq t\}|/|S_{i}|,
\]
and use the test statistic $\sup_t |\aE_1(t) - \aE_2(t)|$.  In
practice, since $|S_{i}|$ is large we approximate $\aE_{i}$ using
Monte Carlo simulation.  The distribution-independence of the
statistic now implies that standard tables (e.g., in the appendix
to~\cite{conover}) or the built-in Matlab functions can then be used
to compute $p$-values for deciding if the statistic allows us to
reject the hypothesis that the distributions are the same.

One might similary consider the Mann-Whitney test or various other
nonparametric techniques for testing the same hypotheses~\cite[\S
5]{conover}.  For example, another way to handle this problem is to
use a $\chi^{2}$ test for discrete distributions.  There are many ways
to construct suitable distributions for this test; we present two
natural choices here.

\begin{enumerate}

\item Take histograms from $\aD^2_{S_1}$ and $\aD^2_{S_2}$ with
identical fixed numbers of bins and bin widths.

\item Fix a finite set $\{B_{j}\} \subset \aBc$ of reference barcodes,
where $1 \leq j \leq m$.  These reference barcodes should be chosen
without reference to the observed data.  Next, for each barcode with
nonzero probability measure in (the given empirical approximation to)
$\PHD_{k}^{n}$, assign the count to the nearest reference barcode.

\end{enumerate}

The second method makes sense if we have a priori information about
the expected shape of the barcode distributions.

Let $\aA_{i}(j)$ denote the count either for bin $j$ or for reference
barcode $B_{j}$ in sample $i$ (for $i= 1,2$).  The test statistic in
the $\chi^2$ test for comparing $S_1$ and $S_2$ is then defined to be
\[
\chi^{2} = \sum_{j=1}^{m} \frac{(\aA_{1}(j) -
\aA_{2}(j))^{2}}{\aA_{1}(j) + \aA_{2}(j)}.
\]
As the notation suggest, asymptotically this has a $\chi^{2}$
distribution with $m' - 1$ degrees of freedom (where $m'$ is the
number of reference barcodes with nonzero counts)~\cite[\S
17]{asympto}.  As such, we can again look up the $p$-values for this
distribution in standard tables when performing hypothesis testing.

\subsection{Numerical summaries as test statistics}

Natural test statistics for studying hypotheses about empirical
barcode distributions come from numerical summaries associated to
$\PHD_k^n$.  For instance, a natural test statistic measures the
distance to a fixed hypothesis distribution.

\begin{defn}\label{defn:HDD}
Let $(X, \partial_X, \mu_X)$ be a compact metric measure space and let
$\aP$ be a fixed reference distribution on $\aBc$.   Fix
$k,n \in \bN$.  Define the homological distance on $X$ relative to
$\aP$ to be
\[
\HD_k^n((X,\partial_X, \mu_X), \aP) =
d_{Pr}(\PHD_{k}^{n}(X,\partial_X, \mu_X), \aP).  
\]
\end{defn}

Corollary~\ref{cor:largenum} again applies to show that large
finite samples $S \subset X$ suffice to approximate $\HD_k^n$.  In
fact, the convergence is better since we are working over $\bR$ and
the Glivenko-Cantelli theorem applies.

\begin{lem}\label{lem:largenumHD}
Let $S_1 \subset S_2 \subset \cdots \subset S_i \subset \cdots$ be a
sequence of randomly drawn samples from $(X, \partial_X, \mu_X)$.
We regard $S_i$ as a metric measure space using the subspace metric
and the empirical measure.  Then for $\aP$ a fixed reference
distribution on $\aBc$, $\HD_k^n(S_{i}, \aP)$ converges almost surely
to $\HD_k^n((X,\partial_X, \mu_X), \aP)$.
\end{lem}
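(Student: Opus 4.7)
The plan is to combine the triangle inequality for the Prohorov metric with the stability bound of Theorem~\ref{thm:main} and then invoke the strong law of large numbers for empirical measures on a compact (hence Polish) space. These three ingredients collapse the claim quickly.

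First I would apply the reverse triangle inequality for $d_{Pr}$ directly to Definition~\ref{defn:HDD}:
\[
|\HD_k^n(S_i,\aP) - \HD_k^n((X,\partial_X,\mu_X),\aP)|
\leq d_{Pr}\bigl(\PHD_k^n(S_i),\,\PHD_k^n(X,\partial_X,\mu_X)\bigr).
\]
Theorem~\ref{thm:main} then bounds the right-hand side by $n\cdot d_{GPr}(S_i,(X,\partial_X,\mu_X))$, reducing the lemma to showing that $d_{GPr}(S_i,X)\to 0$ almost surely as $i\to\infty$.

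Since $S_i$ is literally a subset of $X$ with the inherited metric, the inclusion $S_i\hookrightarrow X$ together with the identity $X\to X$ provides a pair of isometric embeddings into the common target $X$, so
\[
d_{GPr}(S_i,X)\leq d_{Pr}^X(\mu_{S_i},\mu_X),
\]
where $\mu_{S_i}$ denotes the empirical measure of the sample $S_i$ pushed forward to $X$. Varadarajan's theorem (the strong law of large numbers for empirical measures on a Polish space, whose real-valued incarnation is the Glivenko--Cantelli theorem referenced in the preamble to the lemma) now applies: for i.i.d.\ samples from $\mu_X$, $\mu_{S_i}\to\mu_X$ weakly almost surely, and because $X$ is compact and hence separable this weak convergence is metrized by $d_{Pr}^X$. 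Therefore $d_{Pr}^X(\mu_{S_i},\mu_X)\to 0$ almost surely, and chaining the inequalities yields the conclusion.

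There is no substantive obstacle: the first two reductions are direct consequences of earlier material in the paper, and the last is a classical result. The only point that deserves note is that this argument genuinely improves on Corollary~\ref{cor:largenum} by upgrading in-probability convergence to almost sure convergence; this upgrade is exactly what the Glivenko--Cantelli/Varadarajan strong law provides, which is the point of the parenthetical remark in the text just preceding the lemma.
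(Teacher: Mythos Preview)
Your argument is correct and matches the paper's (implicit) approach: the paper does not give a formal proof but simply remarks just before the lemma that Corollary~\ref{cor:largenum} applies and that the Glivenko--Cantelli theorem upgrades the convergence, which is exactly the chain of reductions you spell out. Your identification of Varadarajan's theorem as the correct general form of the ``Glivenko--Cantelli'' ingredient is apt, since the empirical-measure convergence happens on $X$ rather than on $\bR$.
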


An immediate consequence of Theorem~\ref{thm:main} is the following
robustness result (paralleling Theorem~\ref{thm:mainunifrob}).

\begin{thm}\label{thm:mainunifrob2}
For fixed $n,k,\aP$, $\HD^{n}_{k}(-,\aP)$ is uniformly robust with
robustness coefficient $r$ and estimate bound $nr/(1+r)$ for any $r$.
\end{thm}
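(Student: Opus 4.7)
The plan is to deduce Theorem~\ref{thm:mainunifrob2} essentially immediately from Theorem~\ref{thm:mainunifrob} using the triangle inequality for the Prohorov metric. The key observation is that by Definition~\ref{defn:HDD}, $\HD^n_k((X,\partial_X,\mu_X), \aP) = d_{Pr}(\PHD^n_k(X,\partial_X,\mu_X), \aP)$, so as a function of the barcode distribution, $\HD^n_k(-,\aP)$ is $1$-Lipschitz with respect to $d_{Pr}$. Any uniform robustness estimate for $\PHD^n_k$ therefore transfers directly to a uniform robustness estimate (with the same bound) for $\HD^n_k(-,\aP)$.

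Concretely, I would proceed as follows. Suppose $(X,\partial)$ is a finite metric space with uniform measure, and $(X',\partial')$ is a finite metric space with uniform measure containing $X$ isometrically, satisfying $|X'|/|X|<1+r$. Theorem~\ref{thm:mainunifrob} (whose proof already appears above as the displayed consequence of equation~\eqref{eq:ineq}) gives
\[
d_{Pr}(\PHD^n_k(X,\partial,\mu_X), \PHD^n_k(X',\partial',\mu_{X'})) \leq n(1 - |X|/|X'|) < \frac{nr}{1+r},
\]
where the last inequality uses $|X|/|X'| > 1/(1+r)$. Applying the reverse triangle inequality for $d_{Pr}$ to compare distances from $\aP$ then yields
\[
\bigl|\HD^n_k(X,\aP) - \HD^n_k(X',\aP)\bigr| \leq d_{Pr}(\PHD^n_k(X), \PHD^n_k(X')) < \frac{nr}{1+r},
\]
which is precisely the uniform robustness bound claimed.

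There is no significant obstacle here; the entire content of the theorem is the combination of Theorem~\ref{thm:mainunifrob} with the $1$-Lipschitz property of $d_{Pr}(-,\aP)$. The only minor thing to check is that the argument is truly uniform in $(X,\partial)$, which it is, since the bound $nr/(1+r)$ depends only on $n$ and $r$ and not on $X$, $X'$, or $\aP$.
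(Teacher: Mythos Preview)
Your proposal is correct and matches the paper's approach: the paper simply declares Theorem~\ref{thm:mainunifrob2} to be ``an immediate consequence of Theorem~\ref{thm:main}\dots\ (paralleling Theorem~\ref{thm:mainunifrob})'' without writing out any details, and what you have written is exactly the intended two-line deduction---apply the robustness bound for $\PHD^n_k$ and then the reverse triangle inequality for $d_{Pr}(-,\aP)$.
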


Another source of tractable test statistics is the moments of the
distributions introduced in Definition~\ref{defn:distdistr}.  A virtue
of distributions on $\bR$ is that they can be naturally summarized by
moments; in contrast, moments for distributions on barcode space are
hard to compute (for instance, see~\cite{frechet}).  Even simply
constructing meaningful centroids for a set of points in barcode space
is challenging; for example, geodesics between close points are not
unique, although the barcode metric space is a length space (it is
straightforward to construct midpoints between any pair of barcodes).
Because we have emphasized robust statistics, we work with the median
or a trimmed mean and introduce the following test statistics:

\begin{defn}\label{defn:ahdbody}
Let $(X, \partial_X, \mu_X)$ be a compact metric measure space and fix a reference barcode $B \in \aB$.   Fix $k,n \in \bN$.
Define the median homological distance relative to $B$ to be 
\[
\MHD_k^n ((X,\partial_X,\mu_X), B)
= \median(\aD^2).
\]
For $0 < \alpha < \frac{1}{2}$, define the $\alpha$-trimmed mean
homological distance to be  
\[
\widetilde{\MHD}\mhdstrut_k^n ((X, \partial_X, \mu_X), B) = \frac{1}{1 -
2\alpha} \int_{\alpha}^{1-\alpha} q(\aD^2),
\]
where $q$ denotes the quantile function.  (Roughly speaking, we
discard the fraction $\alpha$ of the highest and lowest values and
take the mean of the remainder.)
\end{defn}

Again, Corollary~\ref{cor:largenum} implies that consideration of
large finite samples $S \subset X$ suffices to approximate these test
statistics.

\begin{lem}\label{lem:largenumAHD}
Let $S_1 \subset S_2 \subset \cdots \subset S_i \subset \cdots$ be a
sequence of randomly drawn samples from $(X, \partial_X, \mu_X)$.  We
regard $S_i$ as a metric measure space using the subspace metric and
the empirical measure.  Let $B \in \aB$ be a fixed reference barcode.
\begin{enumerate}
\item Assume that $\aD^2 ((X,\partial_X, \mu_X), B)$ has a distribution
function with a positive derivative at the median.  Then
$\MHD_k^n(S_{i}, B)$ almost surely converges to $\MHD_k^n
((X,\partial_X, \mu_X), B)$.
\item $\widetilde{\MHD}\mhdstrut_k^n(S_{i}, B)$ almost surely converges to
$\widetilde{\MHD}\mhdstrut_k^n((X, \partial_X, \mu_X), B)$.
\end{enumerate}
\end{lem}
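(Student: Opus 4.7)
The plan is to reduce both parts of the lemma to classical real-variable results on convergence of quantile-type functionals, by first establishing weak convergence of the pushforward distance distributions on $\bR$ and then invoking standard quantile-convergence theorems.

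First I would upgrade Corollary~\ref{cor:largenum} to almost sure convergence. Since $S_i$ is the empirical measure from i.i.d.\ samples of $(X,\partial_X,\mu_X)$, Varadarajan's theorem gives almost sure weak convergence of the empirical measures to $\mu_X$ on $X$, and the Gromov-Prohorov distance between $(X,\partial_X,\mu)$ and $(X,\partial_X,\mu')$ is bounded above by the Prohorov distance between $\mu$ and $\mu'$, so $d_{GPr}(S_i,(X,\partial_X,\mu_X))\to 0$ almost surely. Combined with Theorem~\ref{thm:main}, this gives $\PHD_k^n(S_i)\to\PHD_k^n(X,\partial_X,\mu_X)$ in $d_{Pr}$, and hence weakly, almost surely. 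Pushing forward along the continuous ($1$-Lipschitz) map $d_{\aB}(B,-)$, or along $d_{\aB}(-,-)$ on the product measure in the case of $\aD^2$, and invoking the continuous mapping theorem, I obtain almost sure weak convergence of the corresponding real-valued distance distributions.

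For part (1), I invoke the classical fact that if $F_i\to F$ weakly on $\bR$ and $F$ is strictly increasing at its median $m$ (which is precisely the stated positive-derivative hypothesis), then $\median(F_i)\to m$; this is a special case of pointwise convergence of the quantile function at values where $F^{-1}$ is single-valued. For part (2), I work directly with the quantile functions $q_i$ and $q$ of the approximating and limiting distance distributions. Weak convergence gives $q_i(u)\to q(u)$ at every continuity point of $q$, and $q$ has at most countably many discontinuities. Because persistent homology is computed with a fixed cutoff, barcodes lie in a bounded subset of $\aBc$, so the distance distributions are supported in some bounded interval $[0,D]$; hence the $q_i$ are uniformly bounded on $[\alpha,1-\alpha]$, and the bounded convergence theorem yields
\[
\int_\alpha^{1-\alpha} q_i(u)\,du \longrightarrow \int_\alpha^{1-\alpha} q(u)\,du
\]
almost surely, which is the claimed convergence of $\widetilde{\MHD}\mhdstrut_k^n$. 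No regularity hypothesis is needed here because the trimmed mean is a Lebesgue integral over an interior interval and the countably many possible discontinuities of $q$ form a null set.

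The main nontrivial step is the almost sure upgrade of Corollary~\ref{cor:largenum}; once that is in hand, Theorem~\ref{thm:main}, the continuous mapping theorem, and standard quantile facts do the remaining work. The secondary subtlety is recognizing that the positive-derivative hypothesis in part (1) is exactly what is needed for quantile continuity at $1/2$, while no such hypothesis is required for the trimmed mean in part (2) because integration averages out the isolated jumps of $q$.
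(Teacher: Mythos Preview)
Your proposal is correct and follows the same high-level outline as the paper: establish Gromov--Prohorov convergence of $S_i$ to $(X,\partial_X,\mu_X)$, push this through Theorem~\ref{thm:main} and the continuous map $d_{\aB}(B,-)$ (or $d_{\aB}(-,-)$) to get weak convergence of the real-valued distance distributions, then deduce convergence of the median and trimmed mean. The paper's proof is terse and cites the central limit theorems for the sample median (Pollard) and the trimmed mean (Gin\'e--Chen) for the last step; you instead argue directly via continuity of the quantile function at $1/2$ for part~(1) and via bounded convergence on the quantile integral for part~(2).

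Your route is somewhat more elementary and arguably cleaner on two points. First, you explicitly address the upgrade from convergence in probability (which is all Corollary~\ref{cor:largenum} claims) to the almost sure convergence asserted in the lemma, via Varadarajan's theorem; the paper glosses over this. Second, the CLT references the paper invokes are really theorems about i.i.d.\ sampling from a \emph{fixed} distribution, whereas here one has a \emph{sequence} of distributions $\aD(S_i)$ converging weakly to $\aD(X)$ and wants convergence of their medians and trimmed means; your quantile-continuity and dominated-convergence arguments apply directly to this situation without needing to unpack what is actually being used from those CLT statements. The paper's citations buy brevity and a pointer to the literature where the positive-derivative hypothesis is standard; your argument buys transparency and self-containment. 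One small remark: your boundedness justification for part~(2) via the barcode cutoff is fine, but even without a cutoff the compactness of $X$ bounds all Vietoris--Rips barcode intervals in $[0,\diam(X)]$, so the distance distributions are uniformly bounded regardless.
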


\begin{proof}
As in the proof of Corollary~\ref{cor:largenum}, the fact that
$\{S_i\}$ converges to $(X, \partial_X, \mu_X)$ in the Gromov-Prohorov
metric implies that $\aD(S_i)$ weakly converges to $\aD$.  Now the
central limit theorem for the sample median (see for
instance \cite[III.4.24]{pollard}) and the hypothesis about the
derivative at the median implies the convergence of medians.
Analogously, the central limit theorem for the trimmed
mean~\cite[\S4]{trimmed} (which holds without further assumption
provided that $\alpha < \frac{1}{2}$) gives the second part of the
result.
\end{proof}

The hypothesis on the median is the standard hypothesis for
consistency of the central limit theorem (and the bootstrap estimator
for) the sample median; it is known that this hypothesis is
required~\cite[5.11]{asympto}.  Although it is our experience that
this hypothesis holds in practice, it can be difficult to rigorously
verify for an unknown underlying distribution.  For this reason, the
use of the trimmed mean may be preferable in cases where constraint on
the possible hypotheses is unavailable.  As $\alpha$ approaches
$\frac{1}{2}$, the trimmed mean converges to the median, and so
choosing $\alpha = \frac{1}{2} - \epsilon$ for small $\epsilon$ yields
a reasonable alternative to the median.

As discussed in the introduction, a counting argument yields the
following robustness result.

\begin{thm}\label{thm:AHDrobustbody}
For any $n,k,B$, the function $\MHD_k^n(-,B)$ from finite metric
spaces (with the uniform probability measure) to $\bR$ is robust with
robustness coefficient $> (\ln 2)/n$.  
\end{thm}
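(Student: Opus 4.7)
My plan is to follow the counting argument sketched in the discussion preceding Theorem~\ref{thm:AHDrobust} in the introduction, made precise for the distribution $\aD$ defined via $d_{\aB}(B,-)$. The key input is the elementary inequality $(1+(\ln 2)/n)^n<e^{\ln 2}=2$ (strict because $e^x>1+x$ for $x>0$), which implies $(\ln 2)/n<2^{1/n}-1$. I would therefore fix any $r$ strictly between these two quantities and verify robustness with that coefficient, giving the claim that the coefficient can be taken strictly greater than $(\ln 2)/n$.

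Next, given a non-empty finite metric space $(X,\partial)$, I would define the candidate bound as
\[
\delta = \max_{S\in X^n} d_{\aB}(B,\PH_k(S)) - \min_{S\in X^n} d_{\aB}(B,\PH_k(S)) + 1,
\]
a finite quantity depending only on $(X,\partial)$. For any isometric embedding $(X,\partial)\hookrightarrow (X',\partial')$ with $|X'|/|X|<1+r$, the chosen range of $r$ gives
\[
\left(\frac{|X|}{|X'|}\right)^{n} > \frac{1}{(1+r)^n} > \frac12.
\]
Thus more than half of the product-uniform mass on $(X')^n$ comes from tuples lying in $X^n$, and on such tuples the value of $d_{\aB}(B,\PH_k(-))$ is identical whether computed in $X$ or $X'$ (the Vietoris--Rips complex depends only on the subspace metric).

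Let $\aD$ and $\aD'$ denote the pushforwards of the uniform measures on $X^n$ and $(X')^n$ under $d_{\aB}(B,\PH_k(-))$, and let $A\subset\bR$ be the (finite) image of $X^n$. Then $\aD$ is supported on $A$, so $\median(\aD)\in[\min A,\max A]$. The previous paragraph gives $\aD'(A)>1/2$, so if $\median(\aD')>\max A$ we would have $\aD'([\median(\aD'),\infty))\ge 1/2$ while $\aD'(\bR\setminus A)\ge \aD'([\median(\aD'),\infty))\ge 1/2$, contradicting $\aD'(A)>1/2$; the symmetric argument rules out $\median(\aD')<\min A$. Hence $\median(\aD')\in[\min A,\max A]$ as well, so
\[
|\MHD_k^n(X,B)-\MHD_k^n(X',B)| \le \max A-\min A < \delta,
\]
proving robustness with coefficient $r>(\ln 2)/n$.

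The only real subtlety is the ``median lives in the support majority'' step, which I would state carefully using both defining inequalities of the median on $\bR$ (so that strictness is preserved). Everything else is bookkeeping: the choice of $r$ strictly between $(\ln 2)/n$ and $2^{1/n}-1$, the observation that $X^n\subset (X')^n$ with compatible pushforward, and the choice of $\delta$ depending only on $X$.
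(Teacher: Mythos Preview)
Your proposal is correct and follows essentially the same counting argument as the paper's proof: both use that $(|X|/|X'|)^n>1/2$ whenever $|X'|/|X|<1+r$ with $r<2^{1/n}-1$, and then that the median of $d_{\aB}(B,\PH_k(-))$ over $(X')^n$ must fall within the range of values already attained on $X^n$. Your write-up is more explicit than the paper's---you spell out the strict inequality $(\ln 2)/n<2^{1/n}-1$, name the bound $\delta$ as the diameter of the value set on $X^n$ (plus one, to secure the strict inequality in Definition~\ref{defn:robust}), and carefully justify the ``median lands in the majority support'' step---but the underlying idea is identical.
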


\begin{proof}
For a finite metric space $X$, expanding $X$ to $X'$, the proportion
of $n$-element samples of $X'$ which are samples of $X$ is
$(|X|/|X'|)^{n}$; when this number is more than $1/2$, the median
value of any function $f$ on the set of $n$-element samples of $X'$ is
then bounded by the values of $f$ on $n$-element samples of $X$.
Since $(N/(N+rN))^{n}>1/2$ for $r<2^{1/n}-1$, any such function $f$
will be robust with robustness coefficient $r$ satisfying this bound,
and in particular for $r=(\ln 2)/n$.
\end{proof}

In order to obtain the $p$-value cutoffs for performing hypothesis
testing, we can again use Monte Carlo simulation to estimate the
distribution of these estimators under different hypotheses.  Another
possibility is to use asymptotic estimates, which we discuss below in
the context of confidence intervals.

We close the section by remarking that there are many other possible
numerical invariants one might associate to $\PHD_k^n$ (and apply as
test statistics).  For instance, we define for a barcode $B$ the
quantity 
\[
g_m(B) = |B(m)| - |B(m+1)|,
\]
where $B(i)$ denotes the $i$th largest interval in $B$.  Then the
quantity 
\[
g_m = \median(g_m(\PHD_k^n(-)))
\]
and the related quantity
\[
g = \max(g_m)
\]
are useful test statistics for determining if there is a group of
``long bars'' in the underlying distribution by checking for which (if
any $m$) has a large value of $g_{m}$.  For instance, when $g$
is small this suggests that the underlying metric measure space is
generated by a topological space with no homology in dimension $k$.
And large values of $g_m$ suggest that the underlying space has
rank $m$ homology in dimension $k$.  Of course, in order to make
precise statistical statements to replace ``suggests'', we have to use
Monte Carlo simulation in order to compute $p$-values and confidence
intervals.

\subsection{Confidence intervals}

For $\HD_k^n$, the only way to produce $p$-values and confidence
intervals is to use Monte Carlo simulation (to estimate the
distribution of $\HD_k^n$ on finite approximations to $\PHD^n_k$).
A particular advantage of $\MHD_k^n$ is that we can
define confidence intervals using the standard non-parametric
techniques for determining confidence intervals for the median and
trimmed mean~\cite[\S7.1]{david}.  For the median, we use appropriate
sample quantiles (order statistics) to determine the bounds for an
interval which contains the actual median with confidence $1
- \alpha$.  These confidence intervals then immediately yield cutoffs
for $p$-values for hypothesis testing.  For example, a simple
approximation can be obtained from the fact that order statistics
asymptotically obey binomial distributions, which lead to the
following definition using the normal approximation to the binomial
distribution.

%% Standard notation for the binomial distribution?
%% None to be found, alas.

\begin{defn}\label{defn:confint}
Let $(X,\partial_X, \mu_X)$ be a metric measure space and $B$ a fixed
barcode.  Fix
$0 \leq \alpha \leq 1$ and $n,k$.  Given $m$ samples from $\aD_{B}$, let
$\{s_m\}$ denote the samples sorted from smallest to largest.  Let
$u_{\alpha}$ denote the $\frac{\alpha}{2}$ significance threshold for
a standard normal distribution.  The $1- \alpha$ confidence interval
for the sample median (i.e., $\MHD_k^n$) is given by the interval
\[
\left[s_{\lfloor \frac{m+1}{2} - \frac{1}{2} \sqrt{m}
u_{\alpha}\rfloor}, s_{\lceil \frac{m+1}{2} + \frac{1}{2} \sqrt{m}
u_{\alpha}\rceil}\right].
\]
\end{defn}

For the trimmed mean, the situation is similar: asymptotic confidence
intervals can be obtained from the sample standard deviation and an
explicit formula~\cite{trimnonpar}.  Since we find that the median
converges in practice, we do not write out the formula here (as it
involves a number of complicated auxiliary quantities) and refer the
interested reader to the cited reference.

\subsection{The validity of asymptotic $p$-values}

In the preceding discussion, the $p$-values and confidence intervals
for our tests are always computed either via Monte Carlo simulation
(i.e., sampling to estimate the distribution of the test statistic) or
using formulas derived from asymptotic results.  The latter are
substantially easier and less computationally intensive to apply.
However, we may be concerned about whether sample sizes are large
enough for the asymptotic $p$-value to be good approximations of the
exact $p$-value --- this issue is a pervasive problem when applying
such non-parametric tests based on asymptotic results (e.g.,
see~\cite[\S 1.3]{asympto}).  

A standard approach to mitigating such concerns is to perform Monte
Carlo simulation of the distribution of these test statistics computed
from representative models for $\PHD_k^n$ (e.g., synthetic
distributions generated by various standard manifolds); this provides
heuristic guidance about suitable sample sizes.  We provide some
example calculations of this form in Section~\ref{sec:examples} below.
However, the careful analyst with access to adequate samples and
computer resources may simply choose to rely on Monte Carlo simulation
methods.  When adequate samples are lacking, resampling methods also
often provide a more reliable means to compute cutoffs than asymptotic
results.  For instance, standard results about the consistency of the
bootstrap for the sample median and sample trimmed mean~\cite{trimmed}
allow us to compute $p$-value thresholds and cofidence intervals for
$\MHD_k^n$ via bootstrap resampling.  
%We investigate this question in
%more detail in~\cite{BMmm2}.

\section{Demonstration of hypothesis testing on synthetic
examples}\label{sec:examples}

In this section, we provide numerical experiments on synthetic data
sets to demonstrate the statistical inference procedures
and robustness results described in the previous section.  We study a
pair of examples embedded in $\mathbb{R}^2$ (an annulus and a pair of
nested circles) and three families of examples in $\mathbb{R}^3$
(spheres, tori, and uniform noise in a box).  Although the examples
embedded in $\mathbb{R}^2$ are essentially trivial, the simplicity of
the expected results allows us to focus on the methodology.  The
examples in $\mathbb{R}^3$ are more realistic but correspondingly are
more complicated to interpret.

All of our experiments rely on the following procedures for producing
empirical approximations to $\PHD_k^n$.  We fix a Monte Carlo
parameter $K$ which is large (we discuss estimates of how large $K$
needs to be below).  We then have the following basic algorithm:

\begin{alg}\label{alg:absolute}
For a fixed metric measure space $(X, \partial_X, \mu_X)$.
\begin{enumerate}
\item Uniformly select $K$ subsamples of size $n$ from $\mu_X$.
\item Compute the empirical approximation to $\PHD_k^n$ from the
$K$ subsamples.
\end{enumerate}
\end{alg}

To better represent the use of these procedures in practice, we have
the following variant algorithm.  Fix a subsample size $N$.

\begin{alg}\label{alg:relative}
For a fixed metric measure space $(X, \partial_X, \mu_X)$.
\begin{enumerate}
\item Uniformly sample $N$ points from $\mu_X$.
\item Uniformly select $K$ subsamples of size $n$ from the empirical
measure on the $N$ samples.
\item Compute the empirical approximation to $\PHD_k^n$ from the
$K$ subsamples.
\end{enumerate}
\end{alg}

To actually carry out these algorithms, we used the Perseus
codebase~\cite{perseus} to compute persistent homology from a finite
metric space, executed from within a series of Python and Cython
scripts that ran our various experimental setups.  In order to avoid
combinatorial explosion in the number of simplices when the scale
parameter results in complete graphs, we typically capped the maximum
scale parameter (i.e., truncated each of the bars in the barcodes).
The experiments were run on various stock Linux machines; no
individual experiment took more than a few minutes to complete.  Our
random number generation was done using the GSL library~\cite{GSL} to
produce uniform and Gaussian samples, and rejection sampling to
simulate all other distributions (as described below).

\subsection*{Synthetic Example 1: The annulus and the annulus plus
diameter linkage}

We first consider a simple example which illustrates the robustness of
the distributional invariants.  The underlying metric measure space
$A$ is an annulus of inner radius $0.8$ and outer radius of $1.2$ in
$\bR^2$ (see Figure~\ref{fig:ann}), equipped with the subspace metric
and the area measure.  The underlying manifold of $A$ is clearly
homotopy equivalent to a circle.

We sample from the annulus via rejection sampling; we sample uniformly
from the bounding box $[-1.2,1.2] \times [-1.2, 1.2]$ and only keep
points $(x,y)$ such that $0.8 \leq \sqrt{x^2 + y^2} \leq 1.2$.

We began by examining the rate of convergence in
Corollary~\ref{cor:largenum}.  Specifically, for $k = 1$ and various
$n$, we consider subsamples $S_i$ in the annulus of increasing
cardinality and study the convergence of various distributions derived
from $\PHD_k^n(S_i)$, using Algorithms~\ref{alg:absolute}
and~\ref{alg:relative} as a base.  We compute the distance
distribution $\aD^2$ from $\PHD_k^n(S_i)$ and $\PHD_k^n(A)$ as the
cardinality $N_i$ of $S_i$ increases and $n$ varies, using a barcode
cutoff of $0.375$.  We then used both the Kolmogorov-Smirnov test and
the $\chi^2$ test on histograms to repeatedly compare the estimates
computed from samples of cardinality $N_i$ to each other and to the
reference distribution from $A$.  Fixing $K = 1000$, our results
indicated that $|S_i| = 1000$ sufficed to approximate the distribution
for $n \le 100$; with these parameters, we were essentially never able
to reject the null hypothesis that the two distributions were drawn
from the same underlying distribution.

\iffalse
\begin{figure}
\begin{tabular}{|c|c|c|c|c|c|c|c|}
\hline
\headerstrut
$N_i$ & $n$ & $\chi^2$ 99\% & $\chi^2$ 95\% & $\chi^2$ 90\% &
KS 99\% & KS 95\% & KS 90\% \\
\hline
250 & 25 & 0.0 & 0.0 & 0.0 & 0.0 & 0.05 & 0.05 \\ % 20
250 & 50 & 0.0 & 0.0 & 0.0 & 0.0 & 0.0 & 0.1 \\ % 20
250 & 75 & 0.1 & 0.2 & 0.3 & 0.25 & 0.3 & 0.35 \\ % 20
250 & 100 & 0.05 & 0.1 & 0.1 & 0.3 & 0.45 & 0.6 \\ % 20
\hline
500 & 25 & 0.0 & 0.0 & 0.05 & 0.05 & 0.05 & 0.05 \\ % 20
500 & 50 & 0.0 & 0.0 & 0.05 & 0.0 & 0.0 & 0.1 \\ % 20
500 & 75 & 0.0 & 0.0 & 0.05 & 0.0 & 0.1 & 0.2 \\ % 20
500 & 100 & 0.0 & 0.1 & 0.15 & 0.1 & 0.1 & 0.15 \\ % 20 0
\hline
1000 & 25 & 0.0 & 0.0 & 0.05 & 0.0 & 0.05 & 0.05 \\ % 20
1000 & 50 & 0.0 & 0.05 & 0.05 & 0.0 & 0.0 & 0.05 \\ % 20
1000 & 75 & 0.0 & 0.05 & 0.05 & 0.0 & 0.05 & 0.1 \\ % 20
1000 & 100 & 0.0 & 0.0 & 0.025 & 0.025 & 0.05 & 0.075 \\ % 40
\hline
\end{tabular}
\vspace{2 pt}
\caption{Convergence of finite sample approximations for the
annulus.  The value in the cell indicates the fraction of runs in which the null hypothesis (of samples drawn from the same distributions) could be rejected.}\label{fig:converge1}
\end{figure}
\fi

Next, we turn to an illustrative example of the behavior of
$\PHD_k^n$ in the face of maliciously chosen noise points.  We
generated sets $\aS_1$ of 1000 points by sampling uniformly (via
rejection sampling) from the annulus.  Using the Vietoris-Rips complex,
computing the barcode for the first homology group (with cutoff of
$0.375$) yields a single long interval, displayed in Figure~\ref{fig1}.
(We repeated this procedure many times with different subsamples of
size $1000$; the picture displayed is wholly representative of the
results, which varied only very slightly across the samples.)

\begin{figure}[htp]
\begin{center}
\hfill
\includegraphics{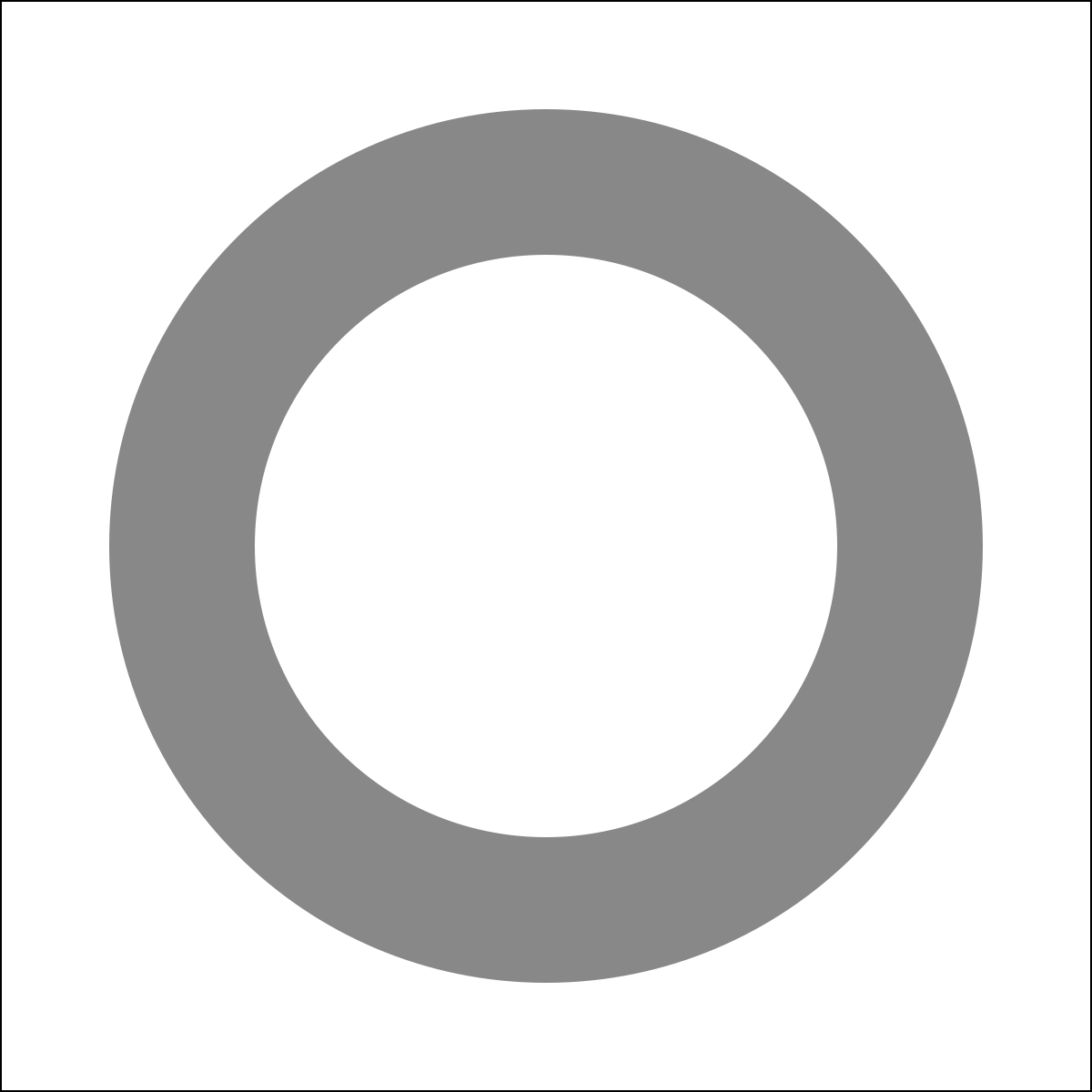}%
\hfill
\includegraphics{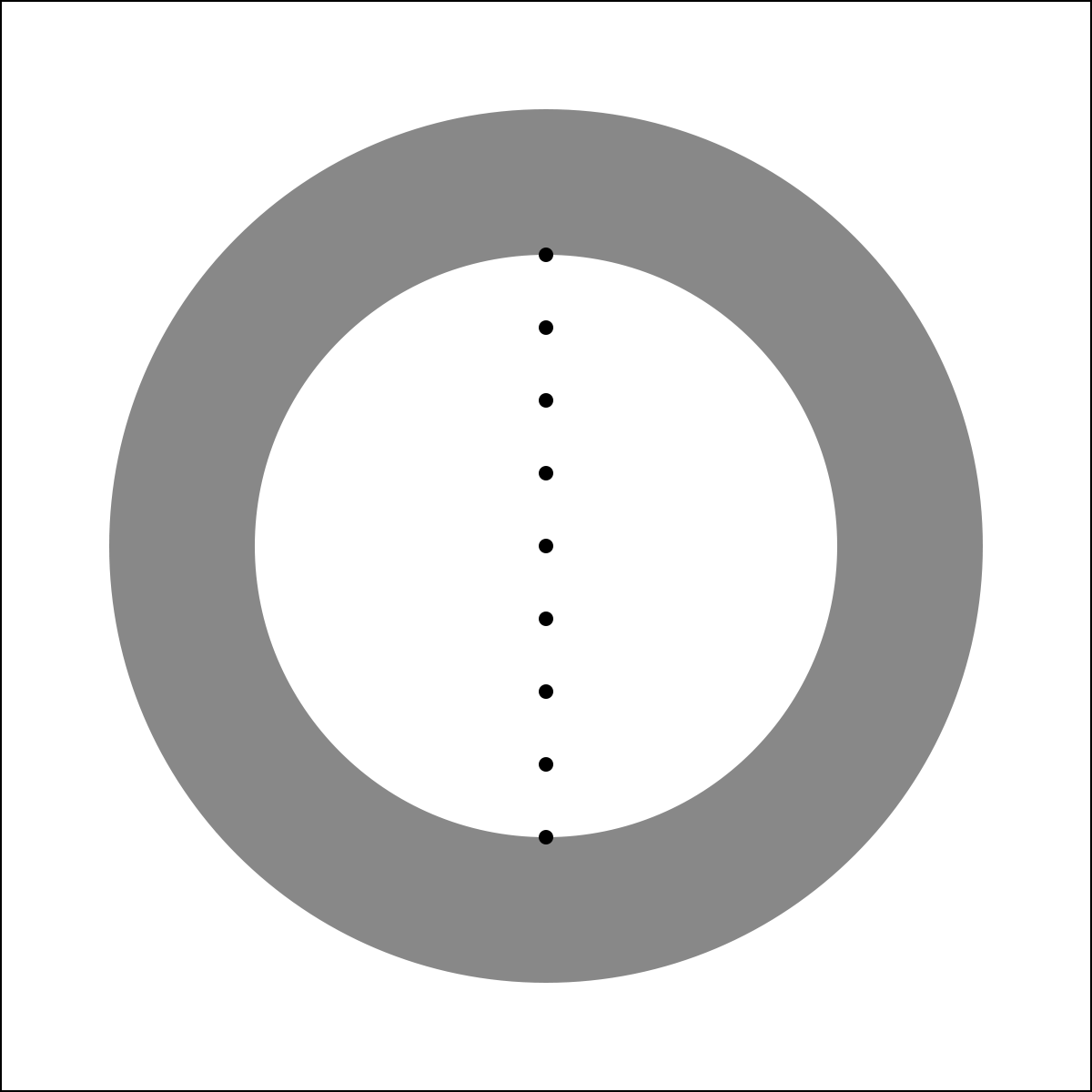}%
\hfill
\end{center}

\caption{Annulus with inner radius $0.8$ and outer radius $1.2$
(left), and same annulus together with diameter linkage (right). (In
experiments, points on diameter are chosen randomly.)}
\label{fig:ann}
\end{figure}

\begin{figure}[htp]
\begin{center}
\includegraphics{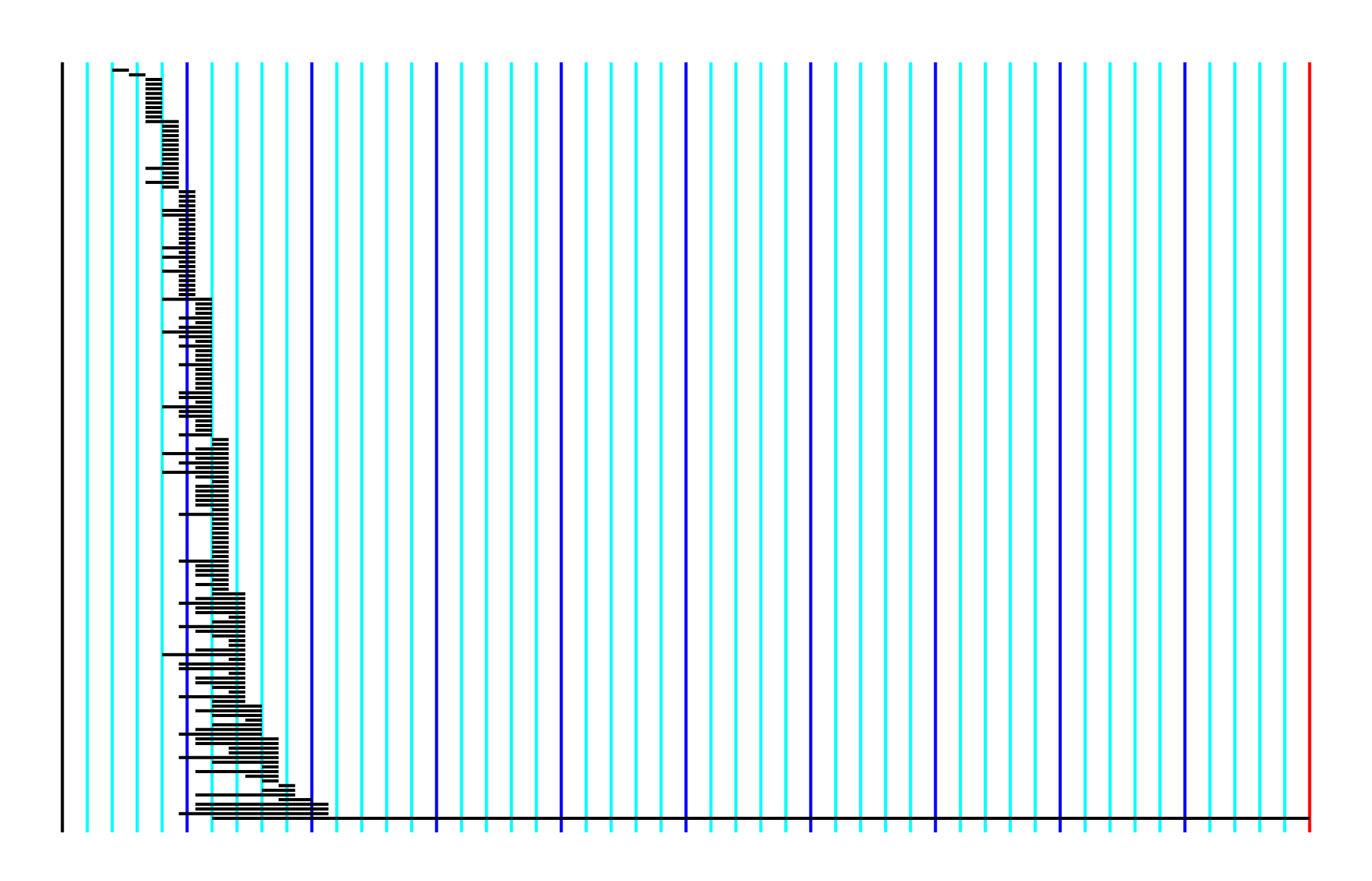}
\end{center}
\caption{Barcode for annulus via the Vietoris-Rips complex with 1000
points, showing 1 long bar.
Horizontal scale goes from 0 to 0.375. (Vertical scale is not meaningful.)}
\label{fig1}
\end{figure}

We then generated sets by drawing $\aS_1$ as above and unioning with
sets $X$ drawn uniformly from the region
$\{0\}\times [-0.8,0.8] \subset \mathbb{R}^2$ to form sets $\aS_2
= \aS_1 \cup X$.  When the added points are sufficiently numerous and
well-distributed, the point cloud now appears to have been sampled
from an underlying manifold homotopy equivalent to a figure 8 when the
scale parameter is sufficiently large.  (See Figure~\ref{fig:ann}.)
Computing the barcode for the first homology group now yields two long
intervals, displayed in Figure~\ref{fig2}.  (Again, this barcode was
stable over many repetitions.)

\begin{figure}[htp]
\centering
\includegraphics{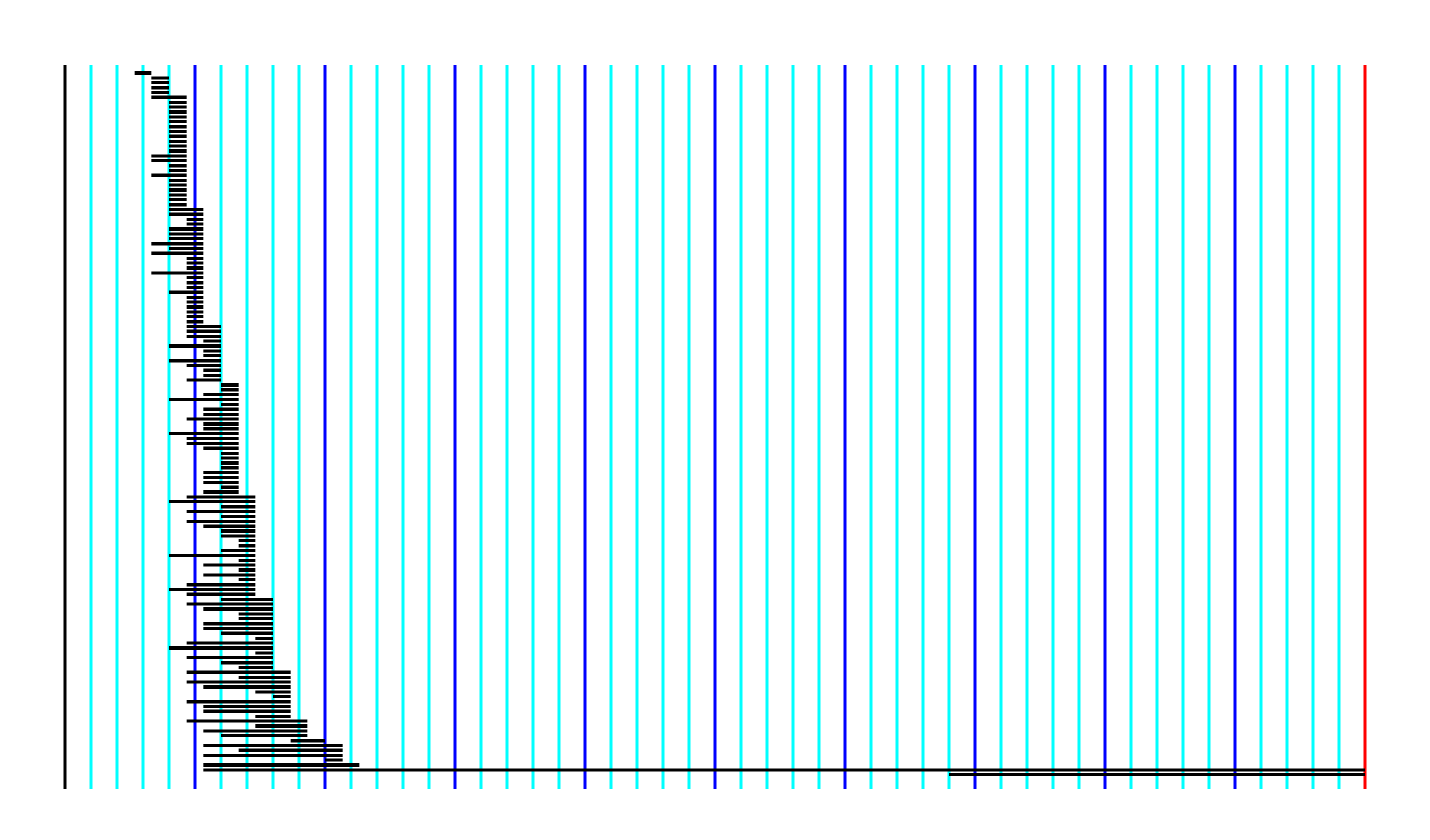}
\caption{Barcode for the annulus plus diameter linkage via the Vietoris-Rips
complex with 1000 points, showing 2 long bars.
Horizontal scale goes from 0 to 0.375. (Vertical scale is not meaningful.)}
\label{fig2}
\end{figure}

To test our methodology, we considered varying sizes for $X$ (as a
proportion of $|\aS_{1}|=1000$), and using
Algorithm~\ref{alg:relative} we computed $1000$ empirical
approximations to $\PHD_1^{75}(\aS_1)$ and $\PHD_1^{75}(\aS_2)$, using
the parameters $K = 1000$ and $n = 75$ and barcode cutoffs of $0.375$.

We then ran the following tests:

\begin{enumerate}

\item We compared the empirical distance distributions $\aD^{2}$ for $\aS_1$
and $\aS_2$ using the Kolmogorov-Smirnov statistic.

\item We computed histograms from $\aD^2$ for $\aS_1$ and $\aS_2$ (with
25 bins equally spaced over the maximum bounding region) and compared using the
$\chi^2$ test.

\item Fixing a reference barcode $B_{1}$ with a single long bar,
we computed the distance distribution $\aD_{B_1}$ for $\aS_1$ and
$\aS_2$ and repeated the comparisons above, using the Kolmogorov-Smirnov and
$\chi^2$ statistic (after forming histograms).  

\end{enumerate}

The results of these tests are summarized in Figure~\ref{fig:theta}.
We see that whereas the first two tests detect differences even with
relatively small amounts of malicious noise, the third test is less
sensitive and only begins to suggest rejection of the null hypothesis
around at $2.0\%$ or $2.5\%$ noise added.  (Note that in the third
test, the median of 
the distribution is precisely the statistic $\MHD_k^n$.)  On the one
hand, these results provide context for interpreting the results of
using the Kolmogorov-Smirnov and $\chi^{2}$ statistics with more
reasonable noise models (in other examples below).  On the other hand, we see that using
the third test we can extract robust topological information from the
data. 

\begin{figure}%
\begin{tabular}{|c|c|c|c|c|c|c|c|}
\hline
\headerstrut
Noise & $\chi^2$ 99\% & $\chi^2$ 95\% & $\chi^2$ 90\% &
KS 99\% & KS 95\% & KS 90\% \\
\hline
0.0\% & 0.0 & 0.0 & 0.0 & 0.0 & 0.0 & 0.0 \\ 
0.5\% & 0.05 & 0.05 & 0.05 & 0.2 & 0.2 & 0.2 \\
1.0\% & 0.05 & 0.15 & 0.15 & 0.2 & 0.45 & 0.55 \\
1.5\% & 0.15 & 0.2 & 0.35 & 0.25 & 0.4 & 0.65 \\
2.0\% & 0.2 & 0.45 & 0.55 & 0.35 & 0.5 & 0.65 \\
\hline
0.0\% & 0.0 & 0.0 & 0.0 & 0.0 & 0.0 & 0.0 \\
0.5\% & 0.0 & 0.0 & 0.0 & 0.0 & 0.0 & 0.0 \\ 
1.5\% & 0.0 & 0.0 & 0.05 & 0.0 & 0.05 & 0.1 \\ 
2.0\% & 0.0 & 0.1 & 0.15 & 0.0 & 0.1 & 0.2 \\ 
2.5\% & 0.1 & 0.15 & 0.2 & 0.35 & 0.55 & 0.65 \\ 
\hline
\end{tabular}
%\vspace{2 pt} 
%
\caption{Comparison tests for samples $\aS_{1}$ and
$\aS_{2}=\aS_{1}\cup X$, where $\aS_{1}$ is a random sample of 1000
points from annulus $A$ and $X$ consists of a given proportion of
random ``noise'' points along the diameter.  Top: comparison tests for
the $\aD^2$ distribution (tests~(1) and~(2) in the text).  Bottom:
comparison tests for the $\aD(B_{1},-)$ distributions, where $B_{1}$
is the barcode with a single long bar (test~(3) in the text).}
\label{fig:theta}
\end{figure}

Finally, for a different application of the $\chi^{2}$ test to compare
these distributions, we used $k$-means clustering to produce discrete
distributions, as follows.  Performing $k$-means clustering on the
empirical approximations to $\PHD_k^n$ for $A$ indicated that the
resulting distributions had nontrivial mass clustered in three
regions: around a barcode $B_{0}$ with no long intervals, a barcode
$B_{1}$ with one long interval, and a barcode $B_{2}$ with two long
intervals.  This led to the following test, which we repeated $1000$
times.

\begin{enumerate}
\item Fixing $K = 1000$, for $\aS_1$ and $\aS_2$, we counted the number
of ``long bars'' (i.e., bars with length over a threshold of $0.125$,
which was determined by the $k$-means cluster centroids).

\item We use the $\chi^{2}$ test to determine if we can reject the
hypothesis that the resulting histograms were drawn from the same
distribution even at the 90\% level.
\end{enumerate}

The results were analogous to the more sensitive preceding
experiments; at $1.0\%$ noise added, we found that the $\chi^{2}$ test {\em
never} permitted rejection of the null hypothesis.  (As an example, a
sample distribution of masses on the centroid from a single run was
$0.017$, $0.983$, and $0$ for $\aS_1$ and $0.020$, $0.975$, and
$0.005$ for $\aS_2$.)  On the other hand, at $2.0\%$ noise added, we always
rejected the null hypothesis.  However, looking at the actual values,
we see that even at $5.0\%$ noise added, representative masses for
$\aS_2$ were $0.024$, $0.827$, and $0.149$.  We will see below how to
use confidence intervals to extract precise inferences about the
underlying homology from such data.

The example of the annulus also begins to illuminate a relationship
between the distributional invariants and density filtering.  Notice
that the second interval at the bottom of Figure~\ref{fig2} starts
somewhat later, reflecting a difference in average interpoint distance
between the original samples and the additional points added.  As a
consequence, one might imagine that appropriate density filtering
would also remove these points.  On the one hand, in many cases
density filtering is an excellent technique for concentrating on
regions of interest.  On the other hand, it is easy to construct
examples where density filtering fails --- for instance, we can build
examples akin to the one studied here where the ``connecting strip''
has comparable density to the rest of the annulus simply by reducing
the number of sampled points or by expanding the outer radius while
keeping the number of sampled points fixed.  In the former case our
methods also degrade, but the latter produces results akin to the
reported results above.  More generally, studying distributional
invariants (such as $\PHD^{n}_{k}$) by definition allows us to integrate
information from different density scales.  In practice, we expect
there to be a synergistic interaction between density filtering and
the use of $\PHD_{k}^{n}$; see Section~\ref{sec:notklein} for an
example of this interaction in practice.

\subsection*{Synthetic Example 2: Friendly circles}

Next, we considered a somewhat more complicated example.  The
underlying metric measure space $X$ is the subset of $\mathbb{R}^2$
specified as the union of the circle of radius $2$ centered at $(0,0)$
and the circle of radius $1$ centered at $(0.8,0)$, equipped with the
intrinsic metric and the length measure.  We sampled from $X$ by
choosing uniformly $\theta \in [0,2\pi]$ and assigning the indicated
point to the first circle with probability $\frac{2}{3}$ and the
second circle with probability $\frac{1}{3}$ (proportionally to their
lengths).  Convergence experiments 
analogous to those discussed in the previous example indicated
that choosing subsamples of cardinality greater than roughly $500$ resulted
in good approximations to $\PHD_k^n$.

Our experiments here are designed to indicate the robustness of our
invariants to both Gaussian and uniform noise --- the point of this
example is that noise points will introduce many classes in $H_1$ by
linking the two circles where they are near one another.  Once again,
it is illuminating to simply begin with persistent homology computed
from the entire subsample.  We sampled 1500 points from $X$.  We then
consider two noise models:
\begin{enumerate}
\item All points have ambient Gaussian noise added (i.e., we convolved
with a Gaussian of mean $0$ and covariance matrix $\sigma^{2}I_{2}$ in $\bR^2$).  (See
Figure~\ref{fig:twocirclesgaussian}.) 
\item A fraction of the points are replaced with uniform noise sampled
from the bounding rectangle $[-2,2] \times [-2,2] \subset \bR^2$. 
(See Figure~\ref{fig:twocircles}.)
\end{enumerate}

Computing the persistent homology from the Vietoris-Rips complex on these
points, without noise, we saw the expected pair of long bars in the
barcode for the first persistent homology group, computed using the
Vietoris-Rips complex.  With Gaussian noise, the results of computing the
barcodes degraded as the width of the Gaussians increased; for
example, when the width was $\sigma^{2}=0.1$ there were many long bars in the
barcode.  (We omit a graph of the barcode in the interest of space, as
the phenomenon is similar to the uniform noise case.)  And as uniform
noise was added, the results of computing barcodes using the Vietoris-Rips
complex degraded very rapidly, as we see in Figure~\ref{fig3} ---
there are many long bars.  This is precisely what one would expect in
light of the discussion in Section~\ref{sec:notrobust} and the
geometry of the situation.

\begin{figure}[htp]
\begin{center}
\includegraphics{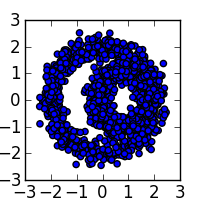}
\end{center}
\caption{Two circles with Gaussian noise.}
\label{fig:twocirclesgaussian}
\end{figure}

\begin{figure}[htp]
\begin{center}
\includegraphics{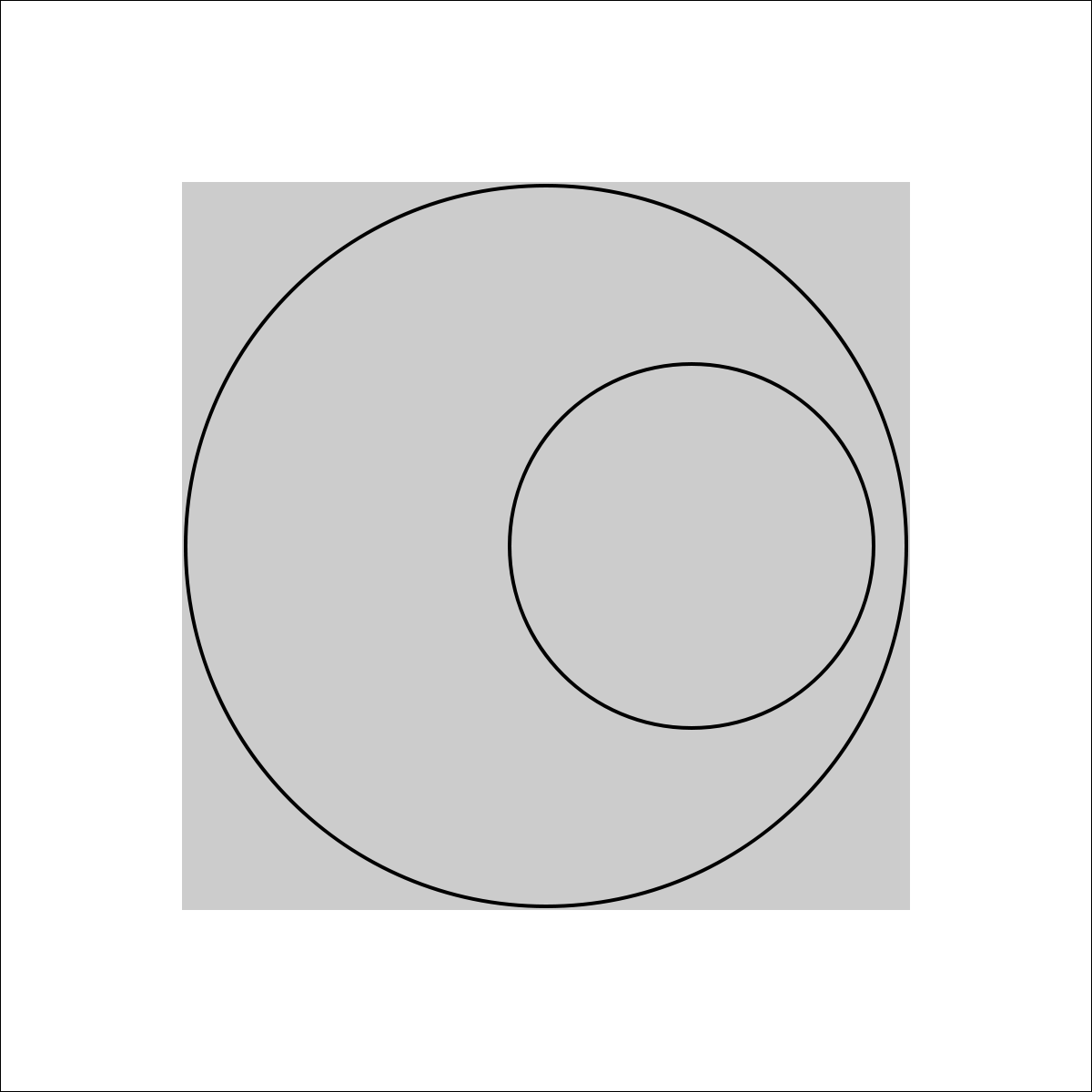}
\end{center}
\caption{Two circles with uniform noise (indicated by gray box).}
\label{fig:twocircles}
\end{figure}

\begin{figure}[htp]
\begin{center}
\includegraphics{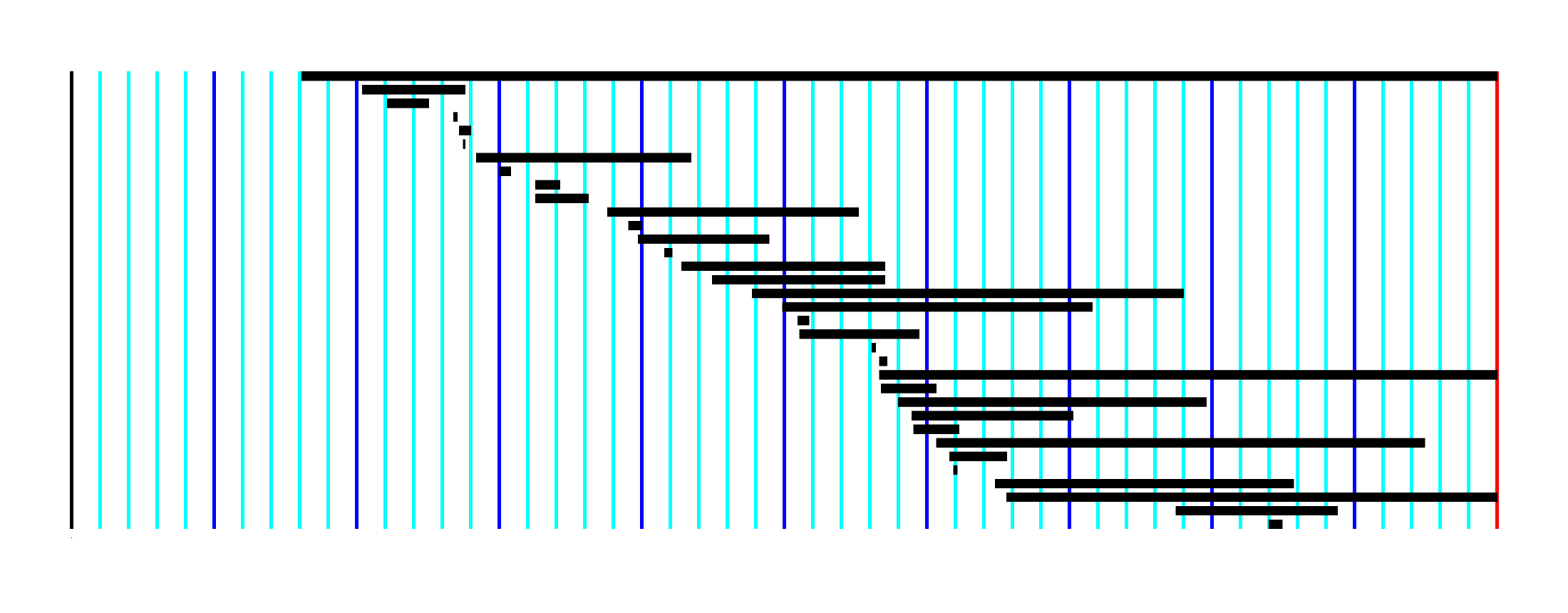}\\
\includegraphics{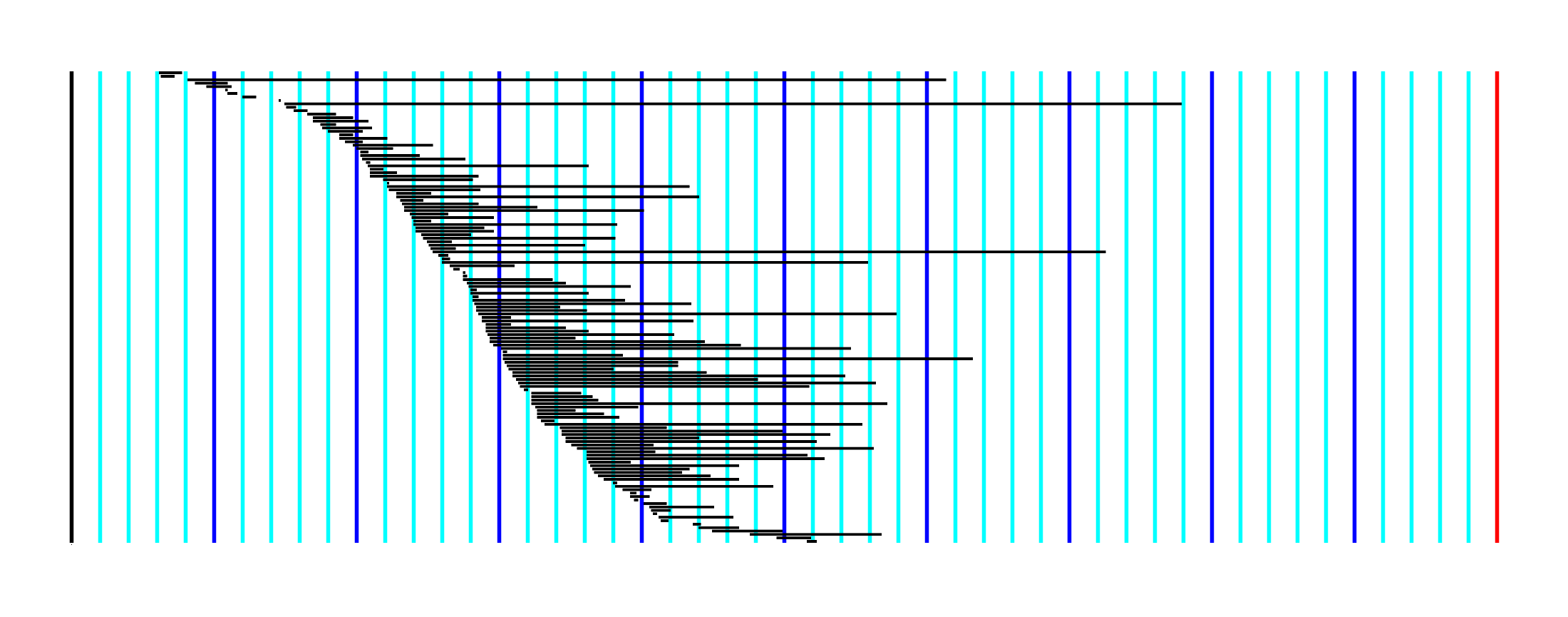}\\
\includegraphics{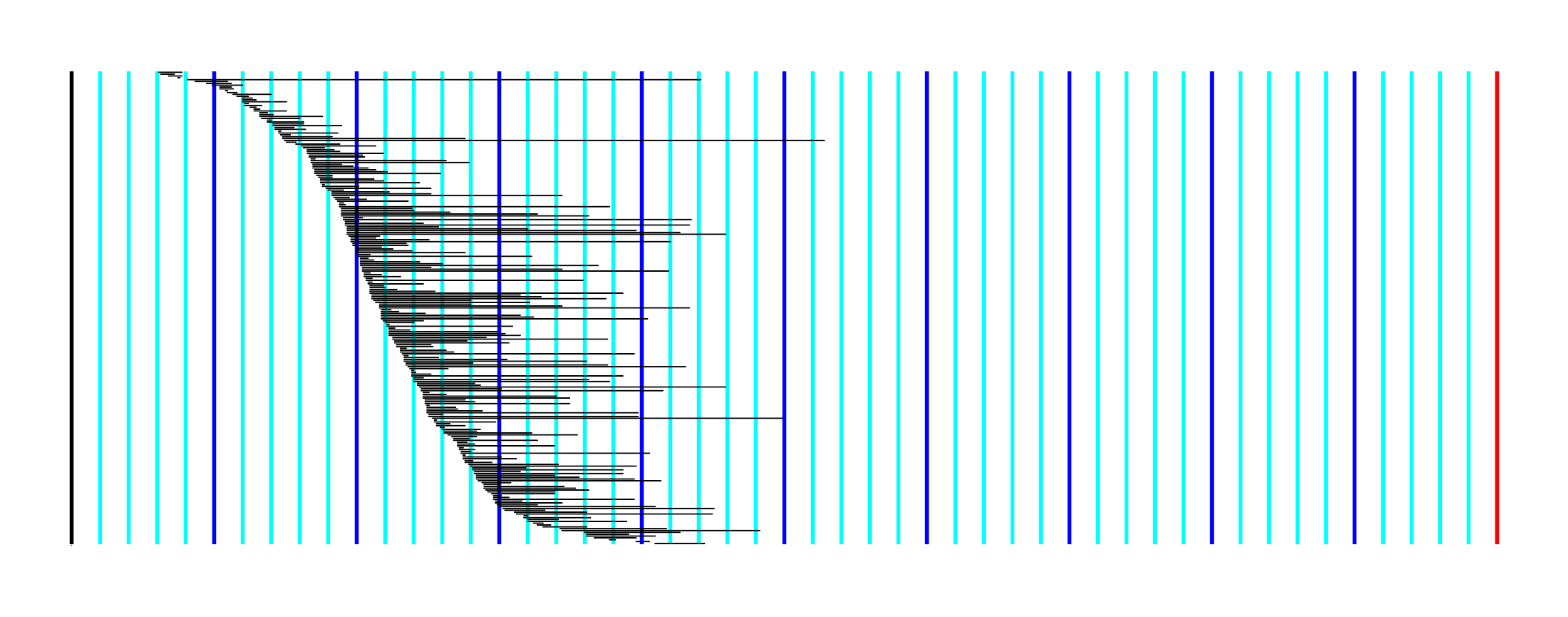} 
\end{center}
\caption{Barcode for two circles with 10, 50, and 90 noise points.
Horizontal scale goes from 0 to 0.75. (Vertical scale is not meaningful.)}
\label{fig3}
\end{figure}

Even with only 10 noise points, we see 3 bars, and with 90 noise
points there are 12.  (These results were stable across different
samples; we report results for a representative run.)

In contrast, we computed $\PHD_1^{300}$ for the same point clouds
(i.e., the two circles plus varying numbers of noise points), using $K
= 1000$ samples of size $300$ and a cutoff of $0.75$.  The resulting
empirical distributions had essentially all of their weight
concentrated around barcodes with a small number of long intervals
(revealed once again by $k$-means clustering).  For the points in the
empirical estimate of $\PHD_1^{300}$ around we counted the number of
``long bars'' with length over the threshold of $0.25$ (again
determined from the $k$-means centroids).  The results are summarized
in Figure~\ref{table2} below.

\begin{figure}[htp]
{\centering
\begin{tabular}{c c c c c c c}
\hline\hline
\headerstrut
Number of noise pts & 0 bars & 1 bars & 2 bars & 3 bars & 4 bars & 5
bars \\ [0.125ex]
\hline
0 & 0 & 303 & 696 & 1 & 0 & 0 \\
10 & 0 & 305 & 589 & 106 & 0 & 0 \\
20 & 0 & 278 & 590 & 132 & 0 & 0 \\
30 & 0 & 285 & 594 & 119 & 2 & 0 \\
40 & 1 & 259 & 584 & 149 & 6 & 1 \\
50 & 0 & 289 & 553 & 154 & 4 & 0 \\
60 & 0 & 254 & 591 & 146 & 7 & 2 \\
70 & 0 & 277 & 564 & 154 & 5 & 0 \\
80 & 1 & 229 & 543 & 196 & 29 & 2 \\
90 & 0 & 229 & 533 & 207 & 28 & 3 \\
\hline
\end{tabular}}
\caption{Distribution summaries for $\PHD_1^{300}$ in ``Friendly
Circles'' example for number of long
bars occurring in $1000$ tests with given number of noise points
added.}
\label{table2}
\end{figure}

A glance at the table shows that the majority of the weight is
clustered around a barcode with 2 long bars and that the data
overwhelming supports a hypothesis of $\leq 3$ barcodes under all
noise regimes.  More precisely, the likelihood statistic of
Section~\ref{sec:like} allows us to evaluate the hypothesis $H$ that
the observed empirical approximation to $\PHD_{k}^{n}$ was drawn from
an underlying barcode distribution with weight $\geq 5\%$ on barcodes
with more than 3 long bars.  In the strictest tests with 80 and 90
noise points, 31 out of 1000 samples were near barcodes with more than
3 long bars, and so we estimate that the probability of the
distribution having $\geq 5\%$ of the mass at 4 or more barcodes as
$\leq \text{BD}(1000,31,.05)<0.22\%$.  Put another way, we can reject
the hypothesis that the actual distribution has more than $5\%$ mass
at 4 or more barcodes at the $99.7\%$ level.

We also ran a similar experiment with Gaussian noise, looking at
$\Phi_1^{300}$ and varying widths; the results are summarized in the
Figure~\ref{table:gaussnoise} below.  As one would expect,
sufficiently wide Gaussians cause
the smaller circle to appear to be a (contractible) disk attached to
the larger circle.

\begin{figure}[htp]
{\centering
\begin{tabular}{c c c c c c c}
\hline\hline
\headerstrut\relax
$\sigma^{2}$& 0 bars & 1 bars & 2 bars & 3 bars & 4 bars & 5
bars \\[0.125ex]
\hline
0.05 & 2 & 59 & 930 & 9 & 0 & 0 \\
0.075 & 44 & 351 & 585 & 20 & 0 & 0 \\
0.1 & 204 & 537 & 249 & 10 & 0 & 0 \\
\hline
\end{tabular}}
\caption{Distribution summaries for $\PHD_1^{300}$ in ``Friendly
Circles'' example for number of long
bars occurring in $1000$ tests with Gaussian noise added of mean 0 and
covariance $\sigma^{2}I_{2}$.}
\label{table:gaussnoise}
\end{figure}

\subsection*{Spheres and tori in $\mathbb{R}^3$}

We now turn to more realistic synthetic examples that are less easily
summarized (and better represent the ambiguity present in the typical
application of topological data analysis).  We studied two standard
geometric examples of smooth manifolds.

\begin{enumerate}
\item Two-dimensional spheres of varying radii $r$, which we denote $S(r)$,
\item Tori of inner radius $r$ and outer radius $R$ for varying parameter
values which we denote $T(r,R)$ (see figure~\ref{fig:torus}).
\end{enumerate}

\begin{figure}[htp]
\begin{center}
\includegraphics{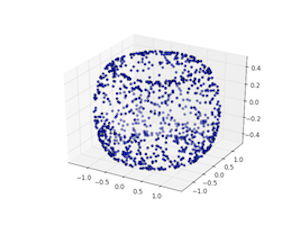}
\end{center}
\caption{Torus $T(0.5,1)$}
\label{fig:torus}
\end{figure}

These examples have interestingly different characteristics; detecting
the sphere's top homology class is relatively easy even in the face of
noise, whereas noise can introduce many spurious homology classes in
degree $1$.  In contrast, the torus $T(0.5, 1)$ is sufficiently
different in the scale of its two axes that detecting the degree $2$
homology class and both degree $1$ homology classes is quite
challenging.

There are various reasonable choices to make about how to sample from
these objects.  In our experiments, we use the intrinsic metric and
sample using the area measure in each case:

\begin{enumerate}

\item To draw a uniform point on the sphere using the area measure, we
draw points $z_1, z_2, z_3$ from the standard normal distribution and
consider the point $(\frac{z_1}{\sqrt{z_1^2 + z_2^2 +
z_3^2}}, \frac{z_2}{\sqrt{z_1^2 + z_2^2 +
z_3^2}}, \frac{z_3}{\sqrt{z_1^2 + z_2^2 + z_3^2}})$.

\item To draw a uniform point on the torus using the area measure, we
parametrize the torus as 
\[
(\theta, \psi) \mapsto \left( (R + r\cos(\theta))\cos(\psi), (R +
r\cos(\theta))\sin(\psi), r\sin(\theta) \right),
\]
for $0 \leq \theta, \psi \leq 2\pi$ and use the rejection sampling
procedure explained in~\cite[2.2]{diaconis}.  (Note that drawing
$\theta$ and $\psi$ uniformly in $[0,2\pi]$ does not work.)

\end{enumerate}

We again work with two noise models, adding both Gaussian noise (by
convolving with a mean $0$ Gaussian with covariance matrix
$\sigma^{2}I_{3}$ in $\mathbb{R}^3$) to all points and replacing some
of the points with uniform noise (obtained from uniform samples in
$\mathbb{R}^3$ using the bounding box $[-2,2]^{3}$) to the samples.
We note that these two noise models are somewhat different in
character; the Gaussian noise affects all points, whereas the uniform
noise corrupts some fraction of the total number of points.

%% Need to fix paragraph for removed table
Our first set of experiments studied the rate of convergence in
Corollary~\ref{cor:largenum}; our methodology is the same as in the
previous section, and we find that acceptable minimum cardinalities
for $S \subset X$ in order for $\Phi^n_k(S)$ and $\Phi^n_k(X)$ (for
varying $n$) to be indistinguishable to the $\chi^2$ and
Kolmogorov-Smirnov tests are around $1000$ for the sphere and $2000$
for the torus.  We fix $K = 1000$ throughout.  We use these results as
a guide when carrying out experiments analyzing the metric measure
spaces in this region.

\iffalse
\begin{figure}
\begin{tabular}{|c|c|c|c|c|}
\hline\headerstrut 
$n$ & Shape & Min $|S|$ & Shape & Min $|S|$ \\
\hline
100 & S(1) & 1000 & T(0.5, 1) & 1500 \\
200 & S(1) & 1000 & T(0.5, 1) & 2000 \\
300 & S(1) & 1500 & T(0.5, 1) & 2000 \\
\hline
\end{tabular}
\vspace{2 pt}
\caption{Convergence of finite sample approximations}\label{fig:converge2}
\end{figure}
\fi

Next, in order to explore how the inference procedures described in
Section~\ref{sec:intervals} can be used in the context of our running
examples in $\mathbb{R}^3$, we carry out the following different
experiments, again using Algorithms~\ref{alg:absolute}
and~\ref{alg:relative} as the base.

\begin{enumerate}
\item We use the Kolmogorov-Smirnov and $\chi^2$ tests to study how
the distribution $\aD^2$ of distances induced from $\PHD$ changes as
noise is added.

\item We use estimates of $\MHD_k^n$ both to extract information about
the salient topological features of the sphere and the torus and also
to test the robustness of this invariant to added noise.
\end{enumerate}

We began by looking at what the Kolmogorov-Smirnov and $\chi^2$ tests
tell us about the sphere and torus.  Working with the uniform noise
model, we used subsamples of $1000$ points for the sphere and
considered $\PHD_2^{150}$ as our base.  For the torus, we used
subsamples of $2000$ points and $\PHD_1^{150}$ for our base.  We
replaced an increasing fraction of the points with noise and compared
to the distribution from the underlying (noiseless) model, with the
results summarized in Figure~\ref{fig:spherenoise}.  Here the
percentage in the table once again indicates the fraction of runs in
which we could reject the null hypothesis of the same distribution at
the indicated significance level.  A clear conclusion to draw is that
$\PHD_k^n$ is relatively insensitive to even large amounts of uniform
noise.  In contrast, when the corresponding experiments were run with
the Gaussian noise model, we found that there was a threshold effect;
for noise widths smaller than roughly $\sigma^{2}=0.05$, the distributions could
not be distinguished by these tests, but for larger noise widths they
basically always appeared to be distinct.

\begin{figure}
\begin{tabular}{|c|c|c|c|c|c|c|c|c|}
\hline\headerstrut
Shape & Noise& $\chi^2$ 99\% & $\chi^2$ 95\% & $\chi^2$ 90\% &
KS 99\% & KS 95\% & KS 90\% \\
\hline
S(1) & 1\% & 0.0 & 0.0 & 0.0 & 0.0 & 0.05 & 0.05 \\ % 20
S(1) & 5\% & 0.0 & 0.0 & 0.0 & 0.0 & 0.0 & 0.05 \\ % 20
S(1) & 10\% & 0.0 & 0.0 & 0.1 & 0.0 & 0.1 & 0.1 \\ % 20
S(1) & 20\% & 0.0 & 0.1 & 0.15 & 0.1 & 0.2 & 0.35 \\ % 20
\hline
T(0.5,1) & 1\% & 0.0 & 0.0 & 0.0 & 0.0 & 0.0 & 0.0 \\ % 20
T(0.5,1) & 5\% & 0.0 & 0.0 & 0.0 & 0.0 & 0.0 & 0.05 \\ % 20
T(0.5,1) & 10\% & 0.0 & 0.05 & 0.05 & 0.0 & 0.05 & 0.1 \\ % 20
T(0.5,1) & 20\% & 0.0 & 0.1 & 0.2 & 0.15 & 0.2 & 0.3 \\ % 20
\hline
\end{tabular}
\vspace{2 pt}
\caption{Comparisons tests for the sphere (for Betti 2) and torus
(for Betti 1) for distributions with
uniform noise replacing a fraction of the points compared against the
noiseless distributions.}\label{fig:spherenoise}  
\end{figure}

Although the previous experiments indicate the degree to which
$\PHD_k^n$ is robust against noise, in practice it is more likely that
we will want to extract information about easily expressed hypotheses
concerning the rank of the homology groups of the underlying space.
To this end, we consider $\MHD_k^n$ with regards to various reference
barcodes; let $m[a,b)$ denote the barcode consisting of $m$
copies of the interval $[a,b)$.  We used Algorithms~\ref{alg:absolute}
and ~\ref{alg:relative} to compute $\MHD_k^n$ and we used the
asymptotic estimates of Definition~\ref{defn:confint} to produce the
confidence intervals.  We chose subsets of size $1000$ to subsample
from.

We begin by considering results for uniform noise in a box, as a
reference benchmark; the results are summarized in
Figure~\ref{fig:MHDnoiseints}.  We then compute for the sphere; the
results are summarized in Figure~\ref{fig:MHDsphereints} below.
Finally, we did the computations for the torus; the results are
summarized in Figure~\ref{fig:MHDtorusints} below.  We obtained the
reference barcodes by inspection of a single run; this procedure is a
proxy for the kind of exploratory data analysis that we expect would
generate the hypotheses to test using our test statistics.  The
confidence intervals in the table were generated by using $100$
samples; the reported results are representative for these parameter
settings.  We also ran a number of experiments with Gaussian noise
as well.  In the interest of space, we report only the results on the
torus, which are summarized in Figure~\ref{fig:MHDtorusintsg}, as these
are representative.

Before we begin to discuss these results, a few observations about the
data sets are in order.  We expect that the sphere should be a
relatively easy example; uniform noise is unlikely to interfere with
the top-dimensional homology class.  This expectation is borne out by
simply computing the persistent homology using 1000 points --- even
with 10\% uniform noise added, we see a single much longer bar.  (We
omit the picture of this.)  In this situation, we regard our
experiments as validating the use of $\MHD_k^n$ to make precise
statistical statements about topological hypotheses.  In contrast, the
torus $T(0.5,1)$ is a difficult test; the scale of the two one-cycles
is different, and we need a large number of points in order to resolve
them both.  When running the persistent homology using all 1000
points, even tiny amounts of uniform noise cause substantial
disruptions in the results, i.e., many long bars.  (We again omit the
picture of this.)  As a consequence, in the presence of noise, working
without the statistical methodology makes it basically impossible to draw
conclusions about the data.

For the sphere, the measured results indicate that $\MHD_2^{150}$ does
an excellent job of detecting the class in dimension $2$.
Specifically, until the noise reaches 20\%, the confidence interval
for the hypothesis $1[0.4,0.55)$ is the closest to $0$ and does not
overlap with the other confidence intervals.  When confidence
intervals for different population quantities do not overlap, the
difference between the two is statistically significant at the 99\%
level.  We could also use Monte Carlo simulation to estimate the
difference between the medians (for the two hypotheses), if a more
refined test statistic explicitly comparing the hypotheses was
desired.  The measured results do not detect any classes in dimension
$1$, even with really substantial amounts of noise.  (The results are
comparable to the results for the box in dimension $1$.)

For the torus, we begin by discussing the case of uniform noise.  In
dimension $1$ we see that both $1$ and $2$ bar variants are close to
the observed data.  When we perform Monte Carlo simulation to obtain
confidence intervals for the difference between the medians, the 95\%
intervals contain $0$ --- this suggests that we cannot distinguish
between the two hypotheses with this test statistic.  One
interpretation of this result is that there are in fact a larger
number of long bars, and indeed inspection of the barcode results
reflect approximately $5$ ``long'' bars.  It is encouraging that our
results are very robust in the face of large amounts of uniform noise,
however.  We can obtain better results by increasing the number of
samples points; when using $\MHD_1^{500}$ and a subsample of size
$10000$, the medians and confidence intervals for $2$ bars is
substantially smaller than for $1$ bar or $3$ bars --- the
difference is now statistically significant at the 99\% level.
(For reasons of space we omit reporting the specific tables.)

For the torus with Gaussian noise, the results admit a comparable
analysis, with the exception of the fact that we see a substantial
degradation as the width increases (and at noise of width $\sigma^{2}=0.1$ our
procedures are basically useless).

\begin{figure}
\begin{tabular}{|c|c|c|c|}
\hline\headerstrut
$k$ & $m$ & Median & 95\% Confidence interval \\
\hline
1 & 0 & 0.09 & [0.0875,0.0925] \\
1 & 1 & 0.08 & [0.0775,0.0825] \\
1 & 2 & 0.075 & [0.075,0.0775] \\
\hline
2 & 0 & 0.0175 & [0.0125,0.025] \\
2 & 1 & 0.085 & [0.075,0.1] \\
2 & 2 & 0.095 & [0.085,0.115] \\
\hline
\end{tabular}
\vspace{2 pt}
\caption{Confidence intervals for $\MHD_k^{150}$ applied to uniform noise in $[-2,2]
\times [-2,2]$ for reference bar codes $m[0.40.55)$.}\label{fig:MHDnoiseints} 
\end{figure}

\begin{figure}
\begin{tabular}{|c|c|c|c|c|c|}
\hline\headerstrut
Shape & Noise& $k$ & $m$& Median & 95\% Confidence interval \\
\hline
S(1) & 0\% & 1 & 0 & 0.0925 & [0.09,0.0975] \\
S(1) & 0\% & 1 & 1 & 0.195 & [0.195,0.2] \\
S(1) & 0\% & 1 & 2 & 0.205 & [0.205,0.21] \\
\hline
S(1) & 5\% & 1 & 0 & 0.095 & [0.0925,0.1] \\
S(1) & 5\% & 1 & 1 & 0.175 & [0.17,0.175] \\
S(1) & 5\% & 1 & 2 & 0.185 & [0.185,0.19] \\
\hline
S(1) & 10\% & 1 & 0 & 0.0975 & [0.095,0.1025] \\
S(1) & 10\% & 1 & 1 & 0.15 & [0.14,0.165] \\
S(1) & 10\% & 1 & 2 & 0.18 & [0.175,0.185] \\
\hline
S(1) & 20\% & 1 & 0 & 0.0975 & [0.0925,0.1025] \\
S(1) & 20\% & 1 & 1 & 0.115 & [0.105,0.12] \\
S(1) & 20\% & 1 & 2 & 0.145 & [0.135,0.155] \\
\hline
\hline 
S(1) & 0\% & 2 & 0 & 0.07 & [0.065, 0.075] \\
S(1) & 0\% & 2 & 1 & 0.02 & [0.02, 0.025] \\
S(1) & 0\% & 2 & 2 & 0.075 & [0.075, 0.075] \\
\hline
S(1) & 5\% & 2 & 0 & 0.065 & [0.0625, 0.07] \\ 
S(1) & 5\% & 2 & 1 & 0.025 & [0.015, 0.03] \\
S(1) & 5\% & 2 & 2 & 0.075 & [0.075, 0.075] \\
\hline
S(1) & 10\% & 2 & 0 & 0.06 & [0.0575,0.065] \\
S(1) & 10\% & 2 & 1 & 0.03 & [0.025,0.035] \\
S(1) & 10\% & 2 & 2 & 0.075 & [0.075,0.08] \\
\hline
S(1) & 20\% & 2 & 0 & 0.0525 & [0.045,0.0575] \\
S(1) & 20\% & 2 & 1 & 0.045 & [0.04,0.065] \\
S(1) & 20\% & 2 & 2 & 0.075 & [0.075,0.075] \\
\hline
\end{tabular}
%\vspace{2 pt}
\caption{Confidence intervals for $\MHD_k^{150}$ for reference barcodes
$m[0.4,0.55)$
applied to the sphere with a given percentage of points replaced with
uniform noise.}\label{fig:MHDsphereints} 
\end{figure}

\begin{figure}
\begin{tabular}{|c|c|c|c|c|c|}
\hline\headerstrut
Shape & Noise& $k$ &$m$ & Median & 95\% Confidence interval \\
\hline
T(0.5,1) & 0\% & 1 & 0 & 0.1575 & [0.1525,0.16] \\
T(0.5,1) & 0\% & 1 & 1 & 0.0925 & [0.09,0.0975] \\
T(0.5,1) & 0\% & 1 & 2 & 0.1 & [0.0975,0.1] \\
\hline
T(0.5,1) & 5\% & 1 & 0 & 0.15 & [0.145,0.1525] \\
T(0.5,1) & 5\% & 1 & 1 & 0.095 & [0.0925,0.0975] \\
T(0.5,1) & 5\% & 1 & 2 & 0.0975 & [0.095,0.1] \\
\hline
T(0.5,1) & 10\% & 1 & 0 & 0.145 & [0.14,0.15] \\
T(0.5,1) & 10\% & 1 & 1 & 0.0975 & [0.0925,0.1025] \\ 
T(0.5,1) & 10\% & 1 & 2 & 0.0975 & [0.095,0.1] \\
\hline
T(0.5,1) & 20\% & 1 & 0 & 0.14 & [0.1375,0.1425] \\
T(0.5,1) & 20\% & 1 & 1 & 0.0925 & [0.09,0.095] \\
T(0.5,1) & 20\% & 1 & 2 & 0.0975 & [0.0925,0.1] \\
\hline
\end{tabular}
\vspace{2 pt}
\caption{Confidence intervals for $\MHD_k^{150}$ for reference barcodes
$m[0.3,0.55)$
applied to the torus with a given percentage of points replaced with
uniform noise.}\label{fig:MHDtorusints}
\vspace{2em}
\end{figure}

\begin{figure}
\begin{tabular}{|c|c|c|c|c|c|}
\hline\headerstrut
Shape & $\sigma^{2}$ & $k$ & $m$ & Median & 95\% Confidence interval \\
\hline
T(0.5,1) & 0.01 & 1 & 0 & 0.145 & [0.1425,0.15] \\
T(0.5,1) & 0.01 & 1 & 1 & 0.085 & [0.0825,0.0875] \\
T(0.5,1) & 0.01 & 1 & 2 & 0.115 & [0.11,0.125] \\
\hline
T(0.5,1) & 0.05 & 1 & 0 & 0.1275 & [0.1225,0.135] \\
T(0.5,1) & 0.05 & 1 & 1 & 0.0775 & [0.075,0.0825] \\ 
T(0.5,1) & 0.05 & 1 & 2 & 0.11 & [0.105,0.115] \\
\hline
T(0.5,1) & 0.1 & 1 & 0 & 0.095 & [0.0925,0.1] \\
T(0.5,1) & 0.1 & 1 & 1 & 0.0875 & [0.085,0.0925] \\ 
T(0.5,1) & 0.1 & 1 & 2 & 0.105 & [0.105,0.11] \\
\hline
\end{tabular}
%\vspace{2 pt}
\caption{Confidence intervals for $\MHD_k^{150}$ for reference barcodes
$m[0.3,0.55)$
applied to the torus with
Gaussian noise of mean~$0$ and covariance matrix $\sigma^{2}I_{3}$.}\label{fig:MHDtorusintsg} 
\end{figure}

\section{Application: confidence intervals for the natural images
dataset}\label{sec:notklein}

One of the most prominent applications of persistent homology in
topological data analysis is the study of the natural images dataset
described in~\cite{natimages}.  This data consists of $3 \times 3$
patches sampled from still photographs of ``natural'' scenes (i.e.,
pictures of rural areas without human artifacts).  The results of
Carlsson, Ishkhanov, de Silva, and
Zomorodian~\cite{carlssonetalvision} extract topological signals from
this data set which can be interpreted in terms of collections of
patches which are known to be meaningful based on the neurophysiology
of the eye.  The goal of this section is to apply our statistical
methodology to validate the conclusions of their work.

\subsection{Setup}

We compute the confidence intervals based on $\MHD_k^n$ for a subset
of patches from the natural images dataset as described
in~\cite{carlssonetalvision}.  We briefly review the setup.  The
dataset consists of 15000 points in $\bR^8$, generated as follows.
From the natural images, $3 \times 3$ patches (dimensions given in
pixels) were sampled and the top $30\%$ with the highest contrast
were retained.  These patches were then normalized twice, first by
subtracting the mean intensity and then scaling so that the Euclidean
norm is $1$.  The resulting dataset can be regarded as living on the
surface of an $S^7$ embedded in $\bR^8$.  After performing density
filtering (with a parameter value of $k=15$; refer
to \cite{carlssonetalvision} for details) and randomly selecting
$15000$ points, we are left with the dataset $\aM(15,30)$.  At this
density, one tends to see a barcode corresponding to 5 cycles in the
$H_1$. In the Klein bottle model, these cycles are generated by three
circles, intersecting pairwise at two points (which can be visualized
as unit circles lying on the $xy$-plane, the $yz$-plane, and the
$xz$-plane).

\subsection{Results}

We computed empirical approximations to $\PHD_1^{500}(\aM(15,30))$
using Algorithm~\ref{alg:absolute}, with $K = 1000$ and using a
barcode cutoff of $2$ (we used the value reported
in~\cite{carlssonetalvision} as the maximal filtration value). 
We found that (after applying $k$-means clustering, as above) the
weight was distributed as 0.1\% with one long bar, 1.1\% with two
long bars, 7.4\% with three long bars, 34.2\% with four long bars, and
57.2\% with five long bars.  (Here the threshold for a long bar was
$1$.)  Analyzing likelihood test statistics as in
Section~\ref{sec:like}, we find that the underlying distribution has
at least 95\% of its mass on two, three, or four bars at the the
99.7\% confidence level.

We also analyze the results using $\MHD$.  We use as the hypothesis
barcode the multi-set $5[0,2)=\{[0,2),[0,2),[0,2),[0,2),[0,2)\}$.  We find
using the nonparametric estimate from Definition~\ref{defn:confint}
that the 95\% confidence interval for $\MHD_1^{500}(\aM(15,30)$ is
$[0.442, 0.476]$.  The 99\% confidence interval for
$\MHD_1^{500}(\aM(15,30))$ is $[0.436, 0.481]$.  These results
represents high confidence for the data to be further than $0.442$ but
closer than $0.476$ to the reference barcode.  On the other hand, when
we compute the confidence intervals using the reference barcode the
empty set, we find that both endpoints for the 95\% and 99\%
confidence intervals are the cutoff value of $2$.  We find the same
results for hypothesis barcodes with $\ell$ bars $[0,2)$ for
$0 \leq \ell \leq 10, \ell \neq 5$.  In particular, this means that
the differences between the distance to the $5$ bar hypothesis and any
other is statistically significant at the 99\% level.

We interpret these results to suggest that the hypothesis barcode is
consistent with the underlying distribution amongst barcode
distributions that put all of their mass on a single barcode.  Of
course, these results also suggest that when sampling at 500 points,
we simply do not expect to see a distribution that is heavily
concentrated around a single barcode.  In the next subsection, we
discuss the use of the witness complex, which does result in such a
narrow distribution.

\begin{rem}
To validate the non-parametric estimate of the confidence interval, we
also used bootstrap resampling to compute bootstrap confidence
intervals.  Although we do not justify or discuss further this
procedure herein, we note that we observed
the reassuring phenomenon that the bootstrap confidence intervals
agreed closely with the non-parametric estimates for both the 95\%
confidence intervals and the 99\% confidence intervals in each
instance.
\end{rem}

\subsection{Results with the witness complex}

Because of the size of the datasets involved, the analysis
performed in~\cite{carlssonetalvision} used the weak witness complex
$\W$ rather than the Vietoris-Rips 
complex $\VR$.  The weak witness
complex for a metric space $(X, \partial)$ depends on a subset
$X_{0}\subset X$ of witnesses; the size of the complexes is controlled by
$|X_{0}|$ and not $|X|$.

\begin{defn}
For $\epsilon \in \bR$, $\epsilon \geq 0$ and witness set $X_{0} \subset
X$, the weak witness complex $\W_\epsilon(X,X_{0})$ is the simplicial
complex with vertex set $X_{0}$ such 
that $[v_0, v_1, \ldots, v_n]$ is an $n$-simplex when for each pair
$v_i, v_j$, there exists a point $p \in X$ (a witness) such that the
distances $\partial(v_i, p) \leq \epsilon$.
\end{defn}

When working with the witness complex, we adapt our basic approach to
study the induced distribution on barcodes which comes from fixing the
point cloud and repeatedly sampling a fixed number of witnesses.  The
theoretical guarantees we obtained for the Vietoris-Rips complex in this paper
do not apply directly; we intend to study the robustness and
asymptotic behavior of this process in future work. Here, we report
preliminary numerical results.

Specifically, we again computed empirical approximations to
$\PHD_1^{n}(\aM(15,30))$ using Algorithm~\ref{alg:absolute}, with $K =
1000$ and using a barcode cutoff of $2$.  However, to produce the
underlying complex, we use the $n$ points for each subsample as the
landmark points $X_{0}$ in the construction of the witness complex
rather than as the vertices for the Vietoris-Rips complex.

We use as the hypothesis barcode the multi-set $5[0,2)$ as above.  We found using
the non-parametric estimate of Definition~\ref{defn:confint} that the
95\% confidence interval for $\MHD_1^{100}(\aM(15,30))$, is $[0.024,
0.027]$.  The 99\% confidence
interval for $\MHD_1^{100}(\aM(15,30))$ was also $[0.024, 0.027]$.
When we computed the 95\% confidence interval for
$\MHD_1^{150}(\aM(15,30)$ we obtained $[0.021, 0.023]$.  The 99\%
confidence interval for $\MHD_1^{150}(\aM(15,30))$ was $[0.021,
0.024]$.  This represents high confidence for the data to be further
than $0.021$ (for $\PHD_{1}^{150}$) and $0.024$ (for $\PHD_{1}^{100}$)
but closer than $0.024$ (for $\PHD_{1}^{150}$) and $0.027$ (for
$\PHD_{1}^{100}$) to the reference barcode.  We obtained essentially
the same results $\MHD^{500}_{1}$ as for $\MHD^{150}_{1}$.  On the
other hand, when using hypothesis barcodes with $\ell$ bars $[0,2)$
for $0 \leq \ell \leq 10, \ell \neq 5$, the confidence intervals
start and end at $2$.  Again, this means that the difference between
the distances to the 5-bar hypothesis and the other hypotheses is
statistically significant at the 99\% level.
We interpret these results to mean that the
underlying distribution is essentially concentrated around the
hypothesis barcode; the distance of $0.025$ is essentially a
consequence of noise.

\begin{rem}
In contrast, when we compute $\MHD_1^{25}(\aM(15,30))$ (using the same
experimental procedure as above), we find the confidence interval is
$[1.931, 1.939]$.  When we compute $\MHD_1^{75}(\aM(15,30))$, we find
that the confidence interval is $[1.859, 1.866]$.  This represents
high confidence that $\MHD_{1}^{25}$ and $\MHD_{1}^{75}$ are far from
this reference barcode, which in light of the confidence intervals
above for $\MHD^{150}_{1}$ and $\MHD^{500}_{1}$ appear to indicate
that samples sizes 25 and 75 are too small.
\end{rem}

\bibliographystyle{alpha}

\end{document}